
\documentclass{article}%
\usepackage{amsmath}
\usepackage{amsfonts}
\usepackage{amssymb}
\usepackage{graphicx}%
\setcounter{MaxMatrixCols}{30}
\providecommand{\U}[1]{\protect\rule{.1in}{.1in}}
\newtheorem{theorem}{Theorem} [section]

\newtheorem{conjecture}[theorem]{Conjecture}
\newtheorem{corollary}[theorem]{Corollary}

\newtheorem{definition}[theorem]{Definition}

\newtheorem{lemma}[theorem]{Lemma}

\newtheorem{problem}[theorem]{Problem}
\newtheorem{proposition}[theorem]{Proposition}
\newtheorem{remark}[theorem]{Remark}

\newenvironment{proof}[1][Proof]{\noindent\textbf{#1.} }{\ \rule{0.5em}{0.5em}}
\setlength{\textheight}{7.8in}
\setlength{\textwidth}{5.6in}
\setlength{\oddsidemargin}{0.5in}
\begin{document}

\author{Vadim E. Levit\\Department of Computer Science\\Ariel University, Israel\\levitv@ariel.ac.il
\and Eugen Mandrescu\\Department of Computer Science\\Holon Institute of Technology, Holon, Israel\\eugen\_m@hit.ac.il}
\title{Critical sets, crowns and local maximum independent sets}
\date{}
\maketitle

\begin{abstract}
A set $S\subseteq V(G)$ is \textit{independent }(or\textit{ stable}) if no two
vertices from $S$ are adjacent, and by $\mathrm{Ind}(G)$ we mean the set of
all independent sets of $G$.

A set $A\in\mathrm{Ind}(G)$ is \textit{critical} (and we write $A\in
CritIndep(G)$) if $\left\vert A\right\vert -\left\vert N(A)\right\vert
=\max\{\left\vert I\right\vert -\left\vert N(I)\right\vert :I\in
\mathrm{Ind}(G)\}$ \cite{Zhang1990}, where $N(I)$ denotes the neighborhood of
$I$.

If $S\in\mathrm{Ind}(G)$ and there is a matching from $N(S)$ into $S$, then
$S$ is a \textit{crown }\cite{AFLS2007}, and we write $S\in Crown(G)$.

Let $\Psi(G)$ be the family of all local maximum independent sets of graph
$G$, i.e., $S\in\Psi(G)$ if $S$ is a maximum independent set in the subgraph
induced by $S\cup N(S)$ \cite{LevMan2}.

In this paper we show that $CritIndep(G)\subseteq Crown(G)$ $\subseteq\Psi(G)$
are true for every graph. In addition, we present some classes of graphs where
these families coincide and form greedoids or even more general set systems
that we call augmentoids.

\textbf{Keywords}: critical set, crown, local maximum independent set,
matching, bipartite graph, K\"{o}nig-Egerv\'{a}ry graph, greedoid.

\end{abstract}

\section{Introduction}

Throughout this paper $G$ is a finite simple graph with vertex set $V(G)$ and
edge set $E(G)$. If $X\subseteq V\left(  G\right)  $, then $G[X]$ is the
subgraph of $G$ induced by $X$. By $G-W$ we mean either the subgraph
$G[V\left(  G\right)  -W]$, if $W\subseteq V(G)$, or the subgraph obtained by
deleting the edge set $W$, for $W\subseteq E(G)$. The \textit{neighborhood}
$N(v)$ of a vertex $v\in V\left(  G\right)  $ is the set $\{w:w\in V\left(
G\right)  $ \textit{and} $vw\in E\left(  G\right)  \}$, while the
\textit{closed neighborhood} $N[v]$\ of $v\in V\left(  G\right)  $ is the set
$N(v)\cup\{v\}$; in order to avoid ambiguity, we use also $N_{G}(v)$ instead
of $N(v)$. The \textit{neighborhood} $N(A)$ of $A\subseteq V\left(  G\right)
$ is $\{v\in V\left(  G\right)  :N(v)\cap A\neq\emptyset\}$, and
$N[A]=N(A)\cup A$. We may also use $N_{G}(A)$ and $N_{G}\left[  A\right]  $,
when referring to neighborhoods in a graph $G$. A vertex $v$ is
\textit{isolated} if $N(v)=\emptyset$. Let us define $\mathrm{isol}(G)$ as the
set of all isolated vertices. If $A,B\subset V(G),A\cap B=\emptyset$, then by
$\left(  A,B\right)  $ is denoted the set $\left\{  ab:ab\in E(G),a\in A,b\in
B\right\}  $.

The graph $G$ is \textit{unicyclic} if it is connected and has a unique cycle.

Let $V(G)=\{v_{i}:1\leq i\leq n\}$. Joining each $v_{i}$ to all the vertices
of a copy of a graph $H$, we obtain a new graph, called the \textit{corona} of
$G$ and $H$, denoted by $G\circ H$.

A set $S\subseteq V(G)$ is \textit{independent} (or \textit{stable}) if no two
vertices from $S$ are adjacent, and by $\mathrm{Ind}(G)$ we mean the family of
all the independent sets of $G$. An independent set of maximum size is a
\textit{maximum independent set} of $G$, and the \textit{independence number
}$\alpha(G)$ of $G$ is $\max\{\left\vert S\right\vert :S\in\mathrm{Ind}(G)\}$.
Let $\Omega(G)$ denote the family of all maximum independent sets, and
$\mathrm{core}(G)=%
{\displaystyle\bigcap}
\{S:S\in\Omega(G)\}$ \cite{HamHanSim,LevMan2002a}. The problem of whether
$\mathrm{core}(G)\neq\emptyset$ is \textbf{NP}-hard \cite{BorosGolLev}.

A \textit{matching} in a graph $G$ is a set of edges $M\subseteq E\left(
G\right)  $ such that no two edges of $M$ share a common vertex. A
\textit{maximum matching} is a matching of maximum cardinality. By $\mu(G)$ is
denoted the cardinality of a maximum matching. A matching is \textit{perfect}
if it saturates all the vertices of the graph. A matching $M=\{a_{i}%
b_{i}:a_{i},b_{i}\in V(G),1\leq i\leq k\}$ is \textit{a uniquely restricted
matching} if $M$ is the unique perfect matching of $G[\{a_{i},b_{i}:1\leq
i\leq k\}]$ \cite{GolHiLew}.

Recall that if $\alpha(G)+\mu(G)=\left\vert V(G)\right\vert $, then $G$ is a
\textit{K\"{o}nig-Egerv\'{a}ry graph} \cite{Deming1979,Sterboul1979}. As a
well-known example, each bipartite graph is a K\"{o}nig-Egerv\'{a}ry graph as
well. Various properties of K\"{o}nig-Egerv\'{a}ry graphs can be found in
\cite{Bonomo2013,JarLevMan2016,JarLevMan2015b,Korach2006,LevMan2003,LevMan2006,LevMan2007,LevMan2011,LevManLemma2011,LovPlum1986}%
.

If $S$ is an independent set of a graph $G$ and $A=V\left(  G\right)  -S$,
then we write $G=S\ast A$. Evidently, each graph admits such representations.
For instance, if $E(A)=\emptyset$, then $G=S\ast A$ is bipartite; if $H$ is
complete, then $G=S\ast A$ is a split graph.

\begin{theorem}
\label{ThKE}\cite{LevMan2002a} Let $G$ be a connected graph. Then the
following are equivalent:

\emph{(i)} $G$ is a \textit{K\"{o}nig-Egerv\'{a}ry} graph;

\emph{(ii)} $G=S\ast A$, where $S\in\Omega(G)$ and $\left\vert S\right\vert
\geq\mu(G)=\left\vert A\right\vert $;

\emph{(iii)} $G=H_{1}\ast H_{2}$, where $V(H_{1})=S$ is independent,
$\left\vert S\right\vert \geq\left\vert V\left(  H_{2}\right)  \right\vert $,
and $\left(  S,V\left(  H_{2}\right)  \right)  $ contains a matching $M$ with
$\left\vert M\right\vert =\left\vert V\left(  H_{2}\right)  \right\vert $.
\end{theorem}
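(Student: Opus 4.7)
My plan is to establish the cyclic chain $(i) \Rightarrow (ii) \Rightarrow (iii) \Rightarrow (i)$, leaning throughout on the universal inequality $\alpha(G) + \mu(G) \leq |V(G)|$, valid for every graph. This bound holds because, for any independent set $I$ and any matching $M$, each edge of $M$ contains at most one vertex of $I$, so the non-$I$ endpoints of $M$ are $|M|$ distinct elements of $V(G) \setminus I$, giving $\mu(G) \leq |V(G)| - \alpha(G)$.

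For $(i) \Rightarrow (ii)$, I would pick any $S \in \Omega(G)$ and set $A = V(G) \setminus S$, obtaining the decomposition $G = S \ast A$. The K\"onig-Egerv\'ary hypothesis yields $|A| = |V(G)| - \alpha(G) = \mu(G)$. To verify $|S| \geq |A|$, I would take a maximum matching $M$ and observe that each of its $\mu(G) = |A|$ edges must meet $A$ (since $S$ is independent), and these edges consume $|A|$ distinct vertices of $A$. Hence $M$ saturates $A$ entirely, every edge of $M$ runs between $S$ and $A$, and its $\mu(G)$ remaining endpoints sit in $S$, so $|S| \geq \mu(G) = |A|$.

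For $(ii) \Rightarrow (iii)$, I would take $H_{1} = G[S]$ and $H_{2} = G[A]$; the matching condition is already delivered by the argument above, since any maximum matching of $G$ embeds into $(S, V(H_{2}))$ and saturates $V(H_{2})$. Finally, for $(iii) \Rightarrow (i)$, the given matching $M$ forces $\mu(G) \geq |V(H_{2})|$ and the independent set $S$ forces $\alpha(G) \geq |S|$; summing and using $|V(G)| = |S| + |V(H_{2})|$ yields $\alpha(G) + \mu(G) \geq |V(G)|$, which together with the universal upper bound gives the K\"onig-Egerv\'ary equality.

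The technical heart of the argument is the endpoint-counting step in $(i) \Rightarrow (ii)$, which simultaneously pins down that every maximum matching saturates $A$, lies entirely between $S$ and $A$, and delivers the inequality $|S| \geq |A|$. Once this observation is secured, $(ii) \Rightarrow (iii)$ is mostly a relabeling, and $(iii) \Rightarrow (i)$ is an immediate consequence of the universal inequality combined with the two certificates for $\alpha(G)$ and $\mu(G)$ furnished by the hypotheses.
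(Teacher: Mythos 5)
Your proof is correct. Note that the paper itself offers no proof of this theorem --- it is quoted from \cite{LevMan2002a} as a known result --- so there is nothing internal to compare against; your argument stands as a valid self-contained derivation. The cyclic scheme works: the universal bound $\alpha(G)+\mu(G)\leq\left\vert V(G)\right\vert$ is established correctly, and the endpoint-counting step is sound --- since $S$ is independent, each of the $\mu(G)=\left\vert A\right\vert$ pairwise disjoint edges of a maximum matching meets $A$ in at least one vertex, and as $A$ has only $\left\vert A\right\vert$ vertices, each edge meets $A$ in exactly one vertex, $A$ is saturated, the matching lies in $\left(S,A\right)$, and its $S$-endpoints witness $\left\vert S\right\vert\geq\left\vert A\right\vert$. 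The implication \emph{(ii)}$\Rightarrow$\emph{(iii)} is indeed a relabeling once you observe that the counting argument uses only the hypotheses $\mu(G)=\left\vert A\right\vert$ and the independence of $S$, both available in \emph{(ii)}; and \emph{(iii)}$\Rightarrow$\emph{(i)} follows from the two certificates plus the universal bound exactly as you say. (Connectivity is never needed, which is consistent with the paper's remark that the disconnected case reduces to components.)
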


Let us notice that a disconnected graph is a K\"{o}nig-Egerv\'{a}ry graph if
and only if each of its connected components induces a K\"{o}nig-Egerv\'{a}ry graph.

\begin{theorem}
\label{ThKEMatch}\cite{LevMan2013b} For a graph $G$ the following properties
are equivalent:

\emph{(i)} $G$ is a K\"{o}nig-Egerv\'{a}ry graph;

\emph{(ii)} there is $S\in%
\Omega
(G)$ such that each maximum matching of $G$ is contained in $(S,V(G)-S)$;

\emph{(iii)} for every $S\in%
\Omega
(G)$ each maximum matching of $G$ is contained in $(S,V(G)-S)$.
\end{theorem}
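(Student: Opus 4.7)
The plan is to establish the equivalences cyclically as (i) $\Rightarrow$ (iii) $\Rightarrow$ (ii) $\Rightarrow$ (i). Since $\Omega(G) \neq \emptyset$ for every graph, the implication (iii) $\Rightarrow$ (ii) is immediate.

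For (i) $\Rightarrow$ (iii), I would run a simple counting argument. Fix any $S \in \Omega(G)$ and any maximum matching $M$, and put $A = V(G) \setminus S$. The K\"onig-Egerv\'ary condition yields $|M| = \mu(G) = |V(G)| - \alpha(G) = |A|$. Since $S$ is independent, no edge of $M$ has both endpoints in $S$, so each of the $|M|$ edges of $M$ contributes at least one endpoint to $A$; because $M$ is a matching and $|A| = |M|$, each vertex of $A$ is met by exactly one edge of $M$ and each edge of $M$ has exactly one endpoint in $A$ and one in $S$. Hence $M \subseteq (S, V(G) - S)$.

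The main work is in (ii) $\Rightarrow$ (i). Assume $S \in \Omega(G)$ has the property that every maximum matching of $G$ lies in $(S, A)$, fix one such $M$, and suppose for contradiction that $|M| < |A|$, so some $a \in A$ is $M$-unsaturated. Since $S$ is a maximum independent set, $a$ has a neighbor in $S$. Grow an alternating tree $\mathcal{T}$ rooted at $a$ in the usual way: at each odd level take all $S$-neighbors reached by non-$M$ edges from the preceding even level, and at each even level take the $M$-partners of the newly added odd-level vertices. Since $M$ is a maximum matching there is no $M$-augmenting path, hence every odd-level $S$-vertex is indeed $M$-matched and $\mathcal{T}$ is well defined. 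Let $S' \subseteq S$ and $A' \subseteq A$ collect the odd- and even-level vertices; then $|A'| = |S'| + 1$ and $N_G(A') \cap S \subseteq S'$ by construction of $\mathcal{T}$.

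I would finish by splitting on whether $A'$ is $G$-independent. If it is, then $(S \setminus S') \cup A'$ is independent in $G$ (its three pieces are independent and $(S \setminus S', A') = \emptyset$ by the tree property) and has size $|S| - |S'| + |A'| = |S| + 1$, contradicting $S \in \Omega(G)$. If instead $A'$ carries a $G$-edge $a_1 a_2$, I would exhibit a maximum matching that contains $a_1 a_2$, directly contradicting (ii): using the unique $\mathcal{T}$-path from $a$ to (say) $a_1$, XOR $M$ with an appropriate prefix of that path to produce a matching $M'$ of the same cardinality in which $a_1$ is unsaturated but $a_2$ is still $M'$-matched to some $s \in S'$; then $M'' := (M' \setminus \{s a_2\}) \cup \{a_1 a_2\}$ is a maximum matching containing an edge in $(A, A)$. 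The hard part will be the bookkeeping when $a_1$ and $a_2$ lie on a common $\mathcal{T}$-path (in particular when one of them coincides with $a$, or when the path to $a_2$ passes through $a_1$), so that the XOR must be performed along the correct sub-path in order to free one endpoint of $a_1 a_2$ without disturbing the other; this is a finite case analysis, but it is the only delicate point in the argument.
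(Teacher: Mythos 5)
This theorem is quoted in the paper as a known result from the reference \cite{LevMan2013b} and no proof of it appears in the present text, so there is nothing internal to compare your argument against; I can only judge it on its own terms, and on those terms it is essentially correct and complete. The counting argument for (i) $\Rightarrow$ (iii) is right: with $A=V(G)-S$ one has $|M|=\mu(G)=|V(G)|-\alpha(G)=|A|$, each matching edge meets $A$ in at least one vertex because $S$ is independent, and the matching condition forces exactly one, so $M\subseteq(S,A)$. The alternating-tree argument for (ii) $\Rightarrow$ (i) is also sound: the hypothesis $M\subseteq(S,A)$ guarantees that $M$-partners of odd-level $S$-vertices lie in $A$, absence of augmenting paths makes the tree well defined with $|A'|=|S'|+1$ and $N(A')\cap S\subseteq S'$, and the two cases (either $(S-S')\cup A'$ is an independent set of size $\alpha(G)+1$, or an edge inside $A'$ can be pushed into a maximum matching) each yield a contradiction. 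The one point you flag as delicate is in fact less delicate than you fear: if you always XOR $M$ along the full tree path from the root $a$ to $a_{1}$, then every even-level vertex other than $a_{1}$ remains matched to some vertex of $S'$ afterwards (to its tree parent if it is off the path, to its child on the path if it lies on the path, and the root $a$ itself becomes matched to $s_{1}$), so $a_{2}$ is always available for the final swap $M''=(M'-\{sa_{2}\})\cup\{a_{1}a_{2}\}$ regardless of whether $a_{1}$ and $a_{2}$ share a tree path or one of them is the root; the only genuinely separate case is $a_{1}=a$, where no XOR is needed at all. So no case analysis beyond that remark is required, and your proof stands as a legitimate self-contained derivation of a result the paper merely imports.
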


For $X\subseteq V(G)$, the number $\left\vert X\right\vert -\left\vert
N(X)\right\vert $ is the \textit{difference} of $X$, denoted $d(X)$. The
\textit{critical difference} $d(G)$ is $\max\{d(X):X\subseteq V(G)\}$. The
number $\max\{d(I):I\in\mathrm{Ind}(G)\}$ is the \textit{critical independence
difference} of $G$, denoted $id(G)$. Clearly, $d(G)\geq id(G)$. It was shown
in \cite{Zhang1990} that $d(G)$ $=id(G)$ holds for every graph $G$. If $A$ is
an independent set in $G$ with $d\left(  X\right)  =d(G)$, then $A$ is a
\textit{critical independent set}. A \textit{maximum critical independent }set
is a critical independent set of maximum cardinality. Let
\begin{align*}
CritIndep(G)  &  =\left\{  S:S\text{ \textit{is a critical independent set in}
}G\right\}  \text{ and}\\
MaxCritIndep(G)  &  =\left\{  S:S\text{ \textit{is a maximum critical
independent set in} }G\right\}  .
\end{align*}

For example, consider the graph $G$ from Figure \ref{fig511}. Note that
$X=\{v_{1},v_{2},v_{3},v_{4}\}$ is a critical set, since $N(X)=\{v_{3}%
,v_{4},v_{5}\}$ and $d(X)=1=d(G)$, while $S=\{v_{1},v_{2},v_{3},v_{6},v_{7}\}$
is a critical independent set, because $d(S)=1=d(G)$. Notice that
$\{v_{1},v_{2},v_{3},v_{6},v_{7},v_{10}\}\in$ \textrm{MaxCritIndep}$(G)$.
Other critical sets are $\{v_{1},v_{2}\}$, $\{v_{1},v_{2},v_{3}\}$,
$\{v_{1},v_{2},v_{3},v_{4},v_{6},v_{7}\}$. 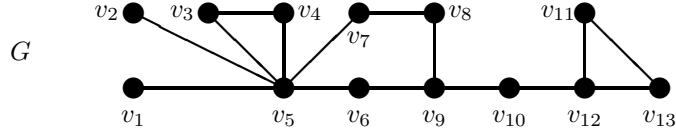
\begin{figure}[h]
\setlength{\unitlength}{1cm}\begin{picture}(5,1.9)\thicklines
\multiput(6,0.5)(1,0){6}{\circle*{0.29}}
\multiput(5,1.5)(1,0){4}{\circle*{0.29}}
\multiput(4,0.5)(0,1){2}{\circle*{0.29}}
\put(10,1.5){\circle*{0.29}}
\put(4,0.5){\line(1,0){7}}
\put(4,1.5){\line(2,-1){2}}
\put(5,1.5){\line(1,-1){1}}
\put(5,1.5){\line(1,0){1}}
\put(6,0.5){\line(0,1){1}}
\put(6,0.5){\line(1,1){1}}
\put(7,1.5){\line(1,0){1}}
\put(8,0.5){\line(0,1){1}}
\put(10,0.5){\line(0,1){1}}
\put(10,1.5){\line(1,-1){1}}
\put(4,0.1){\makebox(0,0){$v_{1}$}}
\put(3.65,1.5){\makebox(0,0){$v_{2}$}}
\put(4.65,1.5){\makebox(0,0){$v_{3}$}}
\put(6.35,1.5){\makebox(0,0){$v_{4}$}}
\put(6,0.1){\makebox(0,0){$v_{5}$}}
\put(7,0.1){\makebox(0,0){$v_{6}$}}
\put(7,1.15){\makebox(0,0){$v_{7}$}}
\put(8,0.1){\makebox(0,0){$v_{9}$}}
\put(8.35,1.5){\makebox(0,0){$v_{8}$}}
\put(9.65,1.5){\makebox(0,0){$v_{11}$}}
\put(9,0.1){\makebox(0,0){$v_{10}$}}
\put(10,0.1){\makebox(0,0){$v_{12}$}}
\put(11,0.1){\makebox(0,0){$v_{13}$}}
\put(2.5,1){\makebox(0,0){$G$}}
\end{picture}\caption{\textrm{core}$(G)=\{v_{1},v_{2},v_{6},v_{10}\}$ is a
critical set.}%
\label{fig511}%
\end{figure}

Critical independent sets are of interest for both practical and theoretical
purposes. Zhang proved that a critical independent set can be found in
polynomial time \cite{Zhang1990}. A simpler algorithm, reducing the critical
independent set problem to computing a maximum independent set in a bipartite
graph is given in \cite{Ageev}.

\begin{lemma}
\label{MatchLemma}\cite{Larson2007} There is a matching from $N(S)$ into $S$
for every critical independent set $S$.
\end{lemma}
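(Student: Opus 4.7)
The plan is to prove the lemma by applying Hall's marriage theorem (in its defect form) to the bipartite subgraph on $N(S) \cup S$ induced by the edges between $N(S)$ and $S$. Such a matching from $N(S)$ into $S$ exists precisely when Hall's condition holds: for every $T \subseteq N(S)$, one has $|N(T) \cap S| \geq |T|$.

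I will argue by contradiction. Suppose there is some $T \subseteq N(S)$ violating Hall's condition, i.e., $|N(T) \cap S| < |T|$. From this ``bad'' set $T$ I would construct an independent set with strictly larger difference than $S$, contradicting $d(S) = d(G) = id(G)$. The natural candidate is
\[
S' = S \setminus N(T),
\]
which is independent as a subset of $S$, and whose size is $|S'| = |S| - |N(T) \cap S|$.

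The key step is to control $N(S')$. Since $S' \cap N(T) = \emptyset$, by symmetry $N(S') \cap T = \emptyset$, so $N(S') \subseteq N(S) \setminus T$ and hence $|N(S')| \leq |N(S)| - |T|$. Combining the two estimates,
\[
d(S') \;=\; |S'| - |N(S')| \;\geq\; (|S| - |N(T)\cap S|) - (|N(S)| - |T|) \;=\; d(S) + \bigl(|T| - |N(T)\cap S|\bigr),
\]
and the assumed Hall violation makes the last parenthesis strictly positive. Therefore $d(S') > d(S) = d(G)$, which is impossible. Consequently Hall's condition holds, and a matching from $N(S)$ into $S$ exists.

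The main obstacle, though mild, is the verification that $N(S') \subseteq N(S) \setminus T$; this is the place where the construction $S' = S \setminus N(T)$ is essential, because removing from $S$ exactly the vertices adjacent to $T$ severs all edges between $S'$ and $T$. Once this containment is in place, the rest is arithmetic on differences, and the contradiction with the maximality of $d(G) = id(G)$ (guaranteed by Zhang's equality $d(G) = id(G)$ cited in the excerpt) closes the argument.
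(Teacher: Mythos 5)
Your argument is correct and complete: Hall's condition for the bipartite graph of edges between $N(S)$ and $S$ is exactly what is needed, the containment $N(S')\subseteq N(S)\setminus T$ for $S'=S\setminus N(T)$ is verified properly, and the resulting inequality $d(S')\geq d(S)+\bigl(|T|-|N(T)\cap S|\bigr)>d(S)$ contradicts the criticality of $S$ (indeed, since $S'$ is independent, you only need $d(S')\leq id(G)=d(S)$, so the appeal to Zhang's equality $d(G)=id(G)$ is not even essential). The paper itself gives no proof of this lemma --- it is quoted from Larson's note on critical independence reductions --- and your deficiency-version-of-Hall argument is essentially the standard proof of that result, so there is nothing to fault and no divergence from the paper to report.
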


A proof of a conjecture of Graffiti.pc \cite{DeLaVina} yields a new
characterization of K\"{o}nig-Egerv\'{a}ry graphs: these are exactly the
graphs, where there exists a critical maximum independent set
\cite{Larson2011}. 

It is known that $d(G)\geq$ $\alpha(G)-\mu(G)$ holds for every graph
\cite{LevMan2012a,Lorentzen1966,Schrijver2003}. In \cite{LevMan2012b} it is
proved the following.

\begin{theorem}
\label{th8}\cite{LevMan2012b} For a K\"{o}nig-Egerv\'{a}ry graph $G$ the
following equalities hold
\[
d(G)=\left\vert \mathrm{core}(G)\right\vert -\left\vert N(\mathrm{core}%
(G))\right\vert =\alpha(G)-\mu(G).
\]

\end{theorem}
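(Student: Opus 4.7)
The plan is to prove the two equalities in sequence. The inequality $d(G)\geq\alpha(G)-\mu(G)$ has been recalled just before the theorem as a general fact, so only the reverse direction needs work. For that I would invoke the characterization from \cite{Larson2011} cited just above the theorem: a K\"onig-Egerv\'ary graph admits some $S\in\Omega(G)$ that is simultaneously a critical independent set, so $d(S)=d(G)$. Maximality of $S$ forces every vertex of $V(G)\setminus S$ to have a neighbor in $S$, while independence keeps $N(S)$ disjoint from $S$; thus $N(S)=V(G)\setminus S$. The K\"onig-Egerv\'ary identity $\alpha(G)+\mu(G)=|V(G)|$ then gives $|N(S)|=\mu(G)$, and hence
\[
d(G)=|S|-|N(S)|=\alpha(G)-\mu(G).
\]
A useful by-product is that \emph{every} $S\in\Omega(G)$ now satisfies $d(S)=\alpha(G)-\mu(G)=d(G)$, so $\Omega(G)\subseteq CritIndep(G)$.

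For the second equality it suffices to show $\mathrm{core}(G)\in CritIndep(G)$. The plan is to prove that $CritIndep(G)$ is closed under pairwise intersection and then apply this to the finite family $\Omega(G)$. Closure rests on the submodularity of the neighborhood operator: for any $A,B\subseteq V(G)$ one has $N(A\cup B)=N(A)\cup N(B)$ and $N(A\cap B)\subseteq N(A)\cap N(B)$, so
\[
|N(A\cap B)|+|N(A\cup B)|\leq |N(A)|+|N(B)|,
\]
whereas $|A\cap B|+|A\cup B|=|A|+|B|$. Subtracting yields $d(A\cap B)+d(A\cup B)\geq d(A)+d(B)$. When $d(A)=d(B)=d(G)$, the fact that $d(A\cap B)\leq d(G)$ and $d(A\cup B)\leq d(G)$ by the very definition of $d(G)$ forces both summands to equal $d(G)$; and if $A$ and $B$ are independent then so is $A\cap B$. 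Iterating this over the finite family $\Omega(G)\subseteq CritIndep(G)$ delivers $\mathrm{core}(G)\in CritIndep(G)$, and therefore $|\mathrm{core}(G)|-|N(\mathrm{core}(G))|=d(G)=\alpha(G)-\mu(G)$.

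The main obstacle I anticipate is the intersection step: one must be careful that $d(G)$ is defined as a maximum over \emph{all} subsets of $V(G)$ rather than over independent sets only, which is precisely what legitimises the bound $d(A\cup B)\leq d(G)$ when $A\cup B$ may fail to be independent. Everything else is assembly of results already in the excerpt, namely the general inequality $d(G)\geq\alpha(G)-\mu(G)$, Larson's characterization from \cite{Larson2011}, and the defining identity of K\"onig-Egerv\'ary graphs.
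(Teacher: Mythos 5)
Your proposal is correct, but there is nothing in the paper to compare it against: Theorem \ref{th8} is imported from \cite{LevMan2012b} and stated without proof, so any argument you give is necessarily your own route. That said, the route works. For the first equality, Larson's characterization (quoted just before the theorem) supplies some $S\in\Omega(G)\cap CritIndep(G)$; maximality of $S$ gives $N(S)=V(G)-S$, the K\"{o}nig--Egerv\'{a}ry identity gives $\left\vert N(S)\right\vert =\left\vert V(G)\right\vert -\alpha(G)=\mu(G)$, and combined with the general bound $d(G)\geq\alpha(G)-\mu(G)$ this yields $d(G)=\alpha(G)-\mu(G)$ together with the by-product $\Omega(G)\subseteq CritIndep(G)$. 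For the second equality, the supermodularity step is the standard one: $N(A\cup B)=N(A)\cup N(B)$ and $N(A\cap B)\subseteq N(A)\cap N(B)$ give $d(A\cap B)+d(A\cup B)\geq d(A)+d(B)$, and since $d(X)\leq d(G)$ for \emph{every} $X\subseteq V(G)$ (here you correctly lean on the definition of $d(G)$ over arbitrary subsets together with Zhang's equality $d(G)=id(G)$, which is what makes $d(A)=d(B)=d(G)$ for critical independent sets and $d(A\cup B)\leq d(G)$ legitimate even when $A\cup B$ is not independent), both summands must equal $d(G)$; intersections of independent sets are independent, so iterating over the finite family $\Omega(G)$ shows $\mathrm{core}(G)\in CritIndep(G)$ and hence $\left\vert \mathrm{core}(G)\right\vert -\left\vert N(\mathrm{core}(G))\right\vert =d(G)$. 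The only point worth making explicit is the degenerate case $\mathrm{core}(G)=\emptyset$, where the chain of intersections still certifies $d(\emptyset)=0=d(G)$, so the statement remains true; no gap there. There is also no circularity in invoking \cite{Larson2011}, since the paper presents that characterization as a prior, independent result.
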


Using this finding, we have strengthened the characterization from
\cite{Larson2011}.

\begin{theorem}
\label{th5}\cite{LevMan2012b} $G$ is a K\"{o}nig-Egerv\'{a}ry graph if and
only if each of its maximum independent sets is critical.
\end{theorem}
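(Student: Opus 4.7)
The plan is to attack the two directions separately, with the forward direction doing the real work and the reverse direction being an immediate consequence of the Larson characterization cited just above Theorem \ref{th8}.

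For the forward direction, assume $G$ is a K\"{o}nig--Egerv\'{a}ry graph and let $S \in \Omega(G)$ be arbitrary. By Theorem \ref{th8} we already know $d(G) = \alpha(G) - \mu(G)$, so to certify that $S$ is critical it suffices to show $|S| - |N(S)| = \alpha(G) - \mu(G)$. Since $|S| = \alpha(G)$ and $|V(G) - S| = \mu(G)$ (using $\alpha(G) + \mu(G) = |V(G)|$), this reduces to proving $N(S) = V(G) - S$. The inclusion $N(S) \subseteq V(G) - S$ is automatic because $S$ is independent. For the reverse inclusion I will invoke Theorem \ref{ThKEMatch}(iii): every maximum matching $M$ of $G$ lies inside $(S, V(G) - S)$. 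Because $|M| = \mu(G) = |V(G) - S|$ and all $\mu(G)$ edges of $M$ are incident to distinct vertices of $V(G) - S$, every vertex of $V(G) - S$ is saturated by $M$ and therefore has a neighbor in $S$. This gives $V(G) - S \subseteq N(S)$, completing the equality and showing $S \in CritIndep(G)$.

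For the converse, suppose every $S \in \Omega(G)$ is critical. Since $\Omega(G) \neq \emptyset$, there exists at least one maximum independent set which is critical, and by the characterization attributed to \cite{Larson2011} (cited in the paragraph following Lemma \ref{MatchLemma}), this immediately forces $G$ to be a K\"{o}nig--Egerv\'{a}ry graph.

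The only step that requires real thought is extracting $N(S) = V(G) - S$ in the forward direction, and the main obstacle is recognising that Theorem \ref{ThKEMatch}(iii) is exactly the right tool: it promotes the mere numerical identity $|V(G) - S| = \mu(G)$ into a structural statement that every vertex outside $S$ is matched into $S$. Once this observation is in place, the rest is bookkeeping with Theorem \ref{th8}, and the reverse implication is essentially free.
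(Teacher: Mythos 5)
Your proposal is correct. Note, however, that the paper itself offers no proof of Theorem \ref{th5}: it is imported verbatim from \cite{LevMan2012b}, with only the remark that it was obtained ``using'' Theorem \ref{th8} to strengthen Larson's characterization. So there is no in-paper argument to compare against; what you have written is a legitimate self-contained derivation from the other imported ingredients. Your forward direction is sound: for $S\in\Omega(G)$ independence gives $N(S)\subseteq V(G)-S$, and Theorem \ref{ThKEMatch}\emph{(iii)} together with $|M|=\mu(G)=|V(G)|-\alpha(G)=|V(G)-S|$ forces every vertex of $V(G)-S$ to be $M$-saturated with its partner in $S$, whence $N(S)=V(G)-S$ and $d(S)=\alpha(G)-\mu(G)=d(G)$ by Theorem \ref{th8}. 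Your converse correctly reduces to the existence of one critical maximum independent set and invokes the \cite{Larson2011} characterization quoted in the introduction; this is logically consistent with the paper's own framing (Theorem \ref{th5} is presented as a strengthening of that result, so the easy direction may freely rely on it). If you wanted to avoid citing Larson, you could instead observe that for a critical $S\in\Omega(G)$ one has $N(S)=V(G)-S$ by maximality, and Lemma \ref{MatchLemma} then yields a matching of size $|V(G)|-\alpha(G)$, giving $\alpha(G)+\mu(G)\geq|V(G)|$ and hence the K\"{o}nig--Egerv\'{a}ry property directly.
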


For a graph $G$, let us denote
\[
\mathrm{\ker}(G)=%
{\displaystyle\bigcap}
\left\{  A:A\text{ \textit{is a critical independent set}}\right\}  \text{
\cite{LevMan2012a}.}%
\]

Clearly, $\mathrm{isol}(G)\subseteq\mathrm{\ker}(G)$. Some more on the
internal structure of $\mathrm{\ker}(G)$ may be found in \cite{LevMan2013a}.

\begin{theorem}
\cite{LevMan2012a} \emph{(i) }For every graph $G$, $\mathrm{\ker}%
(G)\subseteq\mathrm{core}(G)$.

\cite{LevMan2011b} \emph{(ii) }If $G$ is a bipartite graph, then
$\mathrm{\ker}(G)=\mathrm{core}(G)$.

\cite{LevMan2014a} \emph{(iii) }If $G$ is a unicyclic
non-K\"{o}nig-Egerv\'{a}ry graph, then $\mathrm{\ker}(G)=\mathrm{core}(G)$.
\end{theorem}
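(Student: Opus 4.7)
My plan is to handle the three assertions in turn, with the general inclusion (i) being the foundation.

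For (i), I would argue by contradiction: assume $v\in\mathrm{\ker}(G)\setminus\mathrm{core}(G)$ and pick some $S\in\Omega(G)$ with $v\notin S$. Choose a maximum critical independent set $C$; by hypothesis $v\in C$, and by Lemma \ref{MatchLemma} there is a matching $M$ from $N(C)$ into $C$. The strategy is to modify $C$ along $M$-alternating paths through the symmetric difference $C\triangle S$ to produce a new critical independent set $C'$ that avoids $v$, contradicting $v\in\mathrm{\ker}(G)$. Since $S$ is a maximum independent set and the components of $G[C\triangle S]$ are paths and even cycles, $M$ lets one trade $v$ for a different vertex of $C\setminus S$ (or for nothing) in a way that preserves $\left\vert C\right\vert-\left\vert N(C)\right\vert=d(G)$. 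Producing $C'$ explicitly and verifying its independence is the crux.

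For (ii), I would first invoke Theorem \ref{th8}: every bipartite graph is K\"onig-Egerv\'ary, so $\left\vert\mathrm{core}(G)\right\vert-\left\vert N(\mathrm{core}(G))\right\vert=d(G)$ and hence $\mathrm{core}(G)$ is itself critical independent. This immediately gives $\mathrm{\ker}(G)\subseteq\mathrm{core}(G)$ (from (i) or directly from the definition of $\mathrm{\ker}$). The reverse inclusion $\mathrm{core}(G)\subseteq\mathrm{\ker}(G)$ requires more work: given any critical independent set $A$ and any $v\in\mathrm{core}(G)$, I would extend $A$ to some $S_A\in\Omega(G)$ using Lemma \ref{MatchLemma} and Theorem \ref{th5}, note that $v\in S_A$, and then run a K\"onig-type alternating path argument in the bipartite structure: either $v\in A$ directly, or one can swap $v$ into $A$ along an $M$-alternating path between $A$ and $S_A$ without altering the critical difference, which forces $v\in A$ since $A$ was critical to begin with.

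For (iii), the hypotheses on $G$ (unicyclic, not K\"onig-Egerv\'ary) force the unique cycle to have odd length, since otherwise $G$ would be bipartite and therefore K\"onig-Egerv\'ary. I would delete an edge $e$ of the odd cycle to obtain a bipartite graph $G'=G-e$, apply part (ii) to $G'$, and then transfer the identity back to $G$ by showing that neither $\mathrm{core}$ nor $\mathrm{\ker}$ changes under this single deletion. By Theorem \ref{th5}, no maximum independent set of $G$ is critical, which implies that vertices of the odd cycle lie outside both $\mathrm{core}(G)$ and $\mathrm{\ker}(G)$, reducing the comparison to the tree portion $G-V(\text{cycle})$ on which (ii) applies verbatim.

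I expect the main obstacle to be the difference-preserving exchange in part (i). The matching $M$ from $N(C)$ into $C$ is the right tool, but one must check that after swapping $v$ out each new vertex appearing in $N(C')$ is compensated by a newly added vertex of $C'$, so that $d(C')=d(G)$ is maintained; this bookkeeping—essentially a careful Gallai--Edmonds-style accounting along alternating components of $C\triangle S$—is the heart of the proof, and parts (ii) and (iii) then rely on it together with the structural Theorems \ref{th8} and \ref{th5}.
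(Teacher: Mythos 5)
First, note that the paper does not prove this theorem at all: all three parts are imported from the references \cite{LevMan2012a}, \cite{LevMan2011b} and \cite{LevMan2014a}, so your sketch has to stand entirely on its own---and as written it does not. The decisive gap is in part (i), which you correctly identify as the foundation. Your plan is to trade $v$ out of a maximum critical independent set $C$ along ``$M$-alternating paths through $C\triangle S$'', justified by the claim that the components of $G[C\triangle S]$ are paths and even cycles. That claim is a fact about the symmetric difference of two \emph{matchings}; for two \emph{independent sets} $C$ and $S$ the graph $G[C\triangle S]$ is merely bipartite, and its components can be arbitrary connected bipartite graphs (a vertex of $C-S$ may have many neighbours in $S-C$), so there is no path/even-cycle structure to alternate along. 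Moreover $M$ is a matching from $N(C)$ into $C$, which need not live inside $G[C\triangle S]$ at all. Since you explicitly defer ``producing $C'$ and verifying its independence'' and the bookkeeping showing $d(C')=d(G)$, the one step that would constitute a proof of (i) is missing, and the tool you propose for it does not apply.

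Parts (ii) and (iii) inherit this problem and add their own. In (ii) the inclusion $\ker(G)\subseteq\mathrm{core}(G)$ via Theorem \ref{th8} is correct and clean: $\mathrm{core}(G)$ is independent and has difference $d(G)$, hence is a critical independent set containing $\ker(G)$. But for $\mathrm{core}(G)\subseteq\ker(G)$ your argument is circular: ``one can swap $v$ into $A$ \ldots\ which forces $v\in A$''---exhibiting a different critical set that contains $v$ says nothing about whether the given critical set $A$ contains $v$; what is needed is a contradiction derived from the assumption $v\notin A$, and none is supplied. In (iii), the two claims that carry all the weight---that deleting an edge of the odd cycle changes neither $\mathrm{core}$ nor $\ker$, and that the cycle vertices lie outside both---are asserted, not proved; these are precisely the substance of the cited reference. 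Also, the inference ``no maximum independent set of $G$ is critical'' follows from Larson's characterization quoted in the introduction, not from Theorem \ref{th5}, which only yields that \emph{some} maximum independent set fails to be critical. In short, the overall architecture (prove (i) in general, use Theorem \ref{th8} for (ii), reduce (iii) to the bipartite case) is plausible, but every step that is actually hard has been left as a placeholder or rests on a false structural claim.
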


Since $\left\vert \mathrm{\ker}(G)\right\vert \neq1$ holds for any graph $G$
without isolated vertices \cite{LevMan2019}, it follows that if $\left\vert
\mathrm{isol}(G)\right\vert \neq1$, then $\left\vert \mathrm{core}%
(G)\right\vert \neq1$ for bipartite graphs \cite{Jamison1987,LevMan2002a} and
unicyclic non-K\"{o}nig-Egerv\'{a}ry graphs \cite{LevMan2012e,LevMan2014a}.

Notice that there are non-bipartite graphs enjoying the equality
$\mathrm{\ker}(G)=\mathrm{core}(G)$; e.g., the graphs from Figure \ref{fig14},
where only $G_{1}$ is a K\"{o}nig-Egerv\'{a}ry graph. \begin{figure}[h]
\setlength{\unitlength}{1cm}\begin{picture}(5,1.2)\thicklines
\multiput(2,0)(1,0){4}{\circle*{0.29}}
\multiput(3,1)(1,0){3}{\circle*{0.29}}
\put(2,0){\line(1,0){3}}
\put(3,0){\line(0,1){1}}
\put(5,0){\line(0,1){1}}
\put(4,1){\line(1,0){1}}
\put(3,0){\line(1,1){1}}
\put(1.7,0){\makebox(0,0){$x$}}
\put(2.7,1){\makebox(0,0){$y$}}
\put(1,0.5){\makebox(0,0){$G_{1}$}}
\multiput(8,0)(1,0){5}{\circle*{0.29}}
\multiput(9,1)(1,0){3}{\circle*{0.29}}
\put(8,0){\line(1,0){4}}
\put(9,0){\line(0,1){1}}
\put(10,0){\line(0,1){1}}
\put(10,1){\line(1,0){1}}
\put(11,1){\line(1,-1){1}}
\put(7.7,0){\makebox(0,0){$a$}}
\put(8.7,1){\makebox(0,0){$b$}}
\put(7,0.5){\makebox(0,0){$G_{2}$}}
\end{picture}\caption{$\mathrm{core}(G_{1})=\ker\left(  G_{1}\right)
=\{x,y\}$ and $\mathrm{core}(G_{2})=\ker\left(  G_{2}\right)  =\{a,b\}$.}%
\label{fig14}%
\end{figure}

There is a non-bipartite K\"{o}nig-Egerv\'{a}ry graph $G$, such that
$\mathrm{\ker}(G)\neq\mathrm{core}(G)$. For instance, the graph $G_{1}$ from
Figure \ref{fig222} has $\mathrm{\ker}(G_{1})=\left\{  x,y\right\}  $, while
$\mathrm{core}(G_{1})=\left\{  x,y,u,v\right\}  $. The graph $G_{2}$ from
Figure \ref{fig222} has $\mathrm{\ker}(G_{2})=\emptyset$, while $\mathrm{core}%
(G_{2})=\left\{  w\right\}  $.

\begin{figure}[h]
\setlength{\unitlength}{1cm}\begin{picture}(5,1.3)\thicklines
\multiput(4,0)(1,0){4}{\circle*{0.29}}
\multiput(3,1)(1,0){5}{\circle*{0.29}}
\put(4,0){\line(1,0){3}}
\put(4,0){\line(0,1){1}}
\put(3,1){\line(1,-1){1}}
\put(5,0){\line(0,1){1}}
\put(5,0){\line(1,1){1}}
\put(5,1){\line(1,-1){1}}
\put(6,0){\line(0,1){1}}
\put(7,0){\line(0,1){1}}
\put(2.7,1){\makebox(0,0){$x$}}
\put(3.7,1){\makebox(0,0){$y$}}
\put(4.7,1){\makebox(0,0){$u$}}
\put(6.3,1){\makebox(0,0){$v$}}
\put(2,0.5){\makebox(0,0){$G_{1}$}}
\multiput(10,0)(1,0){2}{\circle*{0.29}}
\multiput(12,0)(0,1){2}{\circle*{0.29}}
\put(10,0){\line(1,0){2}}
\put(11,0){\line(1,1){1}}
\put(12,0){\line(0,1){1}}
\put(10,0.3){\makebox(0,0){$w$}}
\put(9,0.5){\makebox(0,0){$G_{2}$}}
\end{picture}\caption{Both $G_{1}$ and $G_{2}$\ are K\"{o}nig-Egerv\'{a}ry
graphs. Only $G_{2}$\ has a perfect matching.}%
\label{fig222}%
\end{figure}
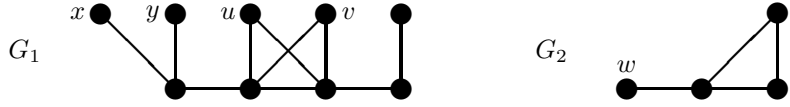

Recall that if $S\in\mathrm{Ind}(G)$ and there is a matching from $N(S)$ into
$S$, then $S$ is a \textit{crown} in $G$, of order $\left\vert S\right\vert
+\left\vert N(S)\right\vert $ \cite{AFLS2007,ChCh2008,DowneyFellows1999}. If
$\left\vert S\right\vert =\left\vert N(S)\right\vert $, then $S$ is a
\textit{straight crown}. A crown of maximum cardinality is a \textit{maximum
crown}. Let
\begin{align*}
Crown(G) &  =\left\{  S:S\text{ \textit{is a crown of} }G\right\}  \text{ and
}\\
MaxCrown(G) &  =\left\{  S:S\text{ \textit{is a maximum crown of} }G\right\}
.
\end{align*}
It is convenient to consider that $\emptyset\in$ $Crown(G)$, and also
$\left\{  v\right\}  \in$ $Crown(G)$ for every isolated vertex $v$.

\begin{theorem}
\label{th2}\cite{AFLS2007} Isolating a crown of maximum order is solvable in
polynomial time.
\end{theorem}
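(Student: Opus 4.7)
The plan is to prove the theorem constructively, by exhibiting a polynomial-time algorithm that outputs a crown maximizing the order $|S|+|N(S)|$. First I would compute a maximum matching $M$ of $G$ by Edmonds' blossom algorithm (running in $O(|V(G)|^{3})$ time) and let $U$ be the set of $M$-unsaturated vertices. Since $M$ is maximum, $U$ is an independent set, and every vertex of $N(U)$ is $M$-saturated, since otherwise a length-one augmenting path for $M$ would exist.

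Next I would iteratively expand this seed: set $S_{0}=U$ and $S_{k+1}=S_{k}\cup M(N(S_{k}))$, where $M(X)$ denotes the set of $M$-mates of the vertices of $X$. An inductive argument, using the observation that any edge inside $S_{k+1}$ would produce a short $M$-alternating augmenting path, shows that each $S_{k}$ is independent and that every vertex of $N(S_{k})$ is $M$-saturated. The sequence stabilizes within at most $|V(G)|/2$ steps at some set $S^{*}$, and $M$ restricted to the edges incident with $N(S^{*})$ provides a matching from $N(S^{*})$ into $S^{*}$. Hence $S^{*}$ is a crown of $G$, and everything up to this point is a straightforward polynomial-time construction.

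The main obstacle is proving that no other crown of $G$ has strictly larger order. The cleanest route is through the Nemhauser--Trotter theorem: the LP relaxation of vertex cover admits a half-integral optimum, which partitions $V(G)$ into sets $V_{0},V_{1/2},V_{1}$ computable in polynomial time via a single bipartite matching (on the vertex-duplication graph). One then shows that $S^{*}$ from the algorithm coincides with $V_{0}$ and $N(S^{*})$ with $V_{1}$, and that $(V_{0},V_{1})$ realizes the maximum crown order by LP duality applied to the fact that $G[V_{1/2}]$ has fractional cover value exactly $|V_{1/2}|/2$. A purely combinatorial alternative would assume a crown $S'$ with $|S'|+|N(S')|>|S^{*}|+|N(S^{*})|$ and reach a contradiction by an exchange argument on the perfect matching that $M$ induces on $V(G)\setminus (S^{*}\cup N(S^{*}))$, either producing an $M$-augmenting path or violating the independence of $S'$. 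Either route yields a polynomial-time procedure and completes the proof.
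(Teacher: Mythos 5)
Your overall strategy---seed with the $M$-exposed vertices and grow by alternating paths---is the right one for bipartite graphs, but the growth step is broken for general graphs, and that is exactly where the difficulty of this theorem lies. The rule $S_{k+1}=S_{k}\cup M(N(S_{k}))$ does not preserve independence: in $C_{5}$ with vertices $v,u,x,y,w$ in cyclic order and $M=\{ux,yw\}$ you get $U=\{v\}$ and $S_{1}=\{v,x,y\}$, which contains the edge $xy$. Your claimed justification, that an edge inside $S_{k+1}$ would yield a short $M$-augmenting path, fails here because the alternating walk $v,u,x,y,w$ closes into an odd (blossom) structure instead of reaching a second exposed vertex; indeed $Crown(C_{5})=\{\emptyset\}$, so there is no nonempty crown for your procedure to find. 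The standard repair (used in \cite{AFLS2007} and by Chleb\'{\i}k--Chleb\'{\i}kov\'{a}) is to run the expansion inside the \emph{bipartite} graph spanned by the edges between the independent seed and its neighborhood, using a maximum matching of that auxiliary bipartite graph rather than $M$-mates taken in $G$. Your appeal to Nemhauser--Trotter in the second half is much closer to a correct route---it is essentially Ageev's bipartite vertex-duplication reduction---but you leave the key identification unproved, and as stated it cannot hold, since the set $S^{*}$ to which you want to tie $V_{0}$ need not even be a crown.

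For comparison, the paper obtains this statement without any direct crown-growing argument: it proves $MaxCritIndep(G)=MaxCrown(G)$ (Theorem \ref{th10}, resting on the augmentoid property of $Crown(G)$ from Theorem \ref{th9}, Lemma \ref{lem3}, and Theorem \ref{cor8}), and then invokes the polynomial-time computability of a maximum critical independent set (Theorem \ref{th22}), which itself reduces to bipartite matching. So the bipartite reduction you gesture at is indeed the engine of the result, but the bridge from critical independent sets (equivalently, from the half-integral LP decomposition) to maximum crowns is the substantive step, and it is missing from your write-up.
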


Let us emphasize the significance of the above theorem in the context of the
following result.

\begin{theorem}
\cite{Sloper2005} Given a graph $G$ and an integer $k$, it is \textbf{NP}%
-complete to decide whether $G$ contains a crown whose order is exactly $k$.
\end{theorem}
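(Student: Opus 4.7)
Membership in \textbf{NP} is immediate: given a candidate $S\subseteq V(G)$ as certificate, one verifies in polynomial time that $S$ is independent, that $\left\vert S\right\vert +\left\vert N(S)\right\vert =k$, and that a matching saturating $N(S)$ into $S$ exists in the bipartite graph spanned by the edges between $S$ and $N(S)$ (a single bipartite matching computation).

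For \textbf{NP}-hardness my plan is to reduce from a classical \textbf{NP}-complete problem whose hardness already stems from an ``exactness'' requirement, such as \emph{Exact Cover by 3-Sets}, or alternatively \emph{3-Sat} (or \emph{Independent Set}) equipped with a cardinality-forcing gadget. From an instance of the source problem one builds $G$ by taking an incidence-style graph of that instance and attaching, to each solution-encoding block, an auxiliary structure of pendants and matching gadgets that can contribute to a matching from $N(S)$ into $S$ only in tandem with the vertices that represent a valid solution component. The integer $k$ is then calibrated so that the two conditions $\left\vert S\right\vert +\left\vert N(S)\right\vert =k$ and ``$N(S)$ is matchable into $S$'' simultaneously hold precisely when $S$ encodes a valid solution of the source instance.

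\textbf{The main obstacle.} Unlike independent sets, the family of crowns is neither downward- nor upward-closed: for instance in $P_{4}=v_{1}v_{2}v_{3}v_{4}$, the set $\{v_{1},v_{3}\}$ is a crown while its subset $\{v_{3}\}$ is not, and $\{v_{1}\}$ is a crown whose only available extensions are blocked by adjacency. Consequently one cannot take a maximum crown (polynomial-time available by Theorem~\ref{th2}) and trim it down to order $k$, nor freely augment a small crown to achieve order $k$. The gadget must therefore simultaneously control (i) which vertices of $G$ are eligible for $S$, (ii) the exact value of $\left\vert S\right\vert +\left\vert N(S)\right\vert $, and (iii) the Hall-type matching condition from $N(S)$ into $S$, so that every deviation from a valid source solution either shifts the order away from $k$ or breaks the matching. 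Coupling these three constraints tightly enough that exact-order $k$ corresponds precisely to \emph{yes}-instances is the technical crux of the reduction, and is exactly where the ``exact order'' version departs from the polynomial-time ``maximum order'' version of Theorem~\ref{th2}.
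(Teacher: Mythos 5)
This theorem is quoted in the paper from Sloper's thesis \cite{Sloper2005}; the paper itself contains no proof of it, so there is no in-paper argument to compare yours against. Judged on its own terms, your proposal has a genuine gap in the \textbf{NP}-hardness half. The membership argument is complete and correct: a certificate $S$ can be checked for independence, for $\left\vert S\right\vert +\left\vert N(S)\right\vert =k$, and for the existence of a matching from $N(S)$ into $S$ by one bipartite matching computation. But the hardness direction is only a statement of intent. You do not commit to a single source problem (you hedge between Exact Cover by 3-Sets, 3-Sat, and Independent Set), you do not define the graph $G$ produced by the reduction, you do not specify the ``pendants and matching gadgets'' beyond saying what properties they must have, you do not compute the target value of $k$ from the instance, and you prove neither implication of the equivalence ``$G$ has a crown of order exactly $k$ if and only if the source instance is a \emph{yes}-instance.'' Your paragraph on ``the main obstacle'' is a correct diagnosis of why the problem is delicate --- crowns are neither closed under subsets nor under supersets, so one cannot trim or pad a maximum crown to hit a prescribed order --- but diagnosing the difficulty is not the same as overcoming it. As written, the entire content of the hardness proof is deferred to a gadget whose existence is asserted, not demonstrated; that gadget \emph{is} the proof.

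If you want to complete this, you need to fix one source problem, write down the construction explicitly, and verify both directions: that a solution of the source instance yields an independent set $S$ with a matching from $N(S)$ into $S$ and $\left\vert S\right\vert +\left\vert N(S)\right\vert =k$, and conversely that any crown of order exactly $k$ in the constructed graph can be decoded into a solution. The converse direction is where unplanned crowns (e.g., sets of pendant or gadget vertices that happen to form a crown of the right order without encoding anything) must be ruled out, and nothing in your sketch addresses this.
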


A set $A\subseteq V(G)$ is a \textit{local maximum independent (stable) set}
of $G$ if $A$ is a maximum independent set in the subgraph $G[N[A]]$
\cite{LevMan2}. By $\Psi(G)$ is denoted the set of all local maximum
independent sets of the graph $G$. Clearly, $\Omega(G)\subseteq\Psi(G)$ holds
for every graph.

\begin{theorem}
\cite{NemhTro}\label{th111} Every local maximum independent set of a graph is
a subset of a maximum independent set.
\end{theorem}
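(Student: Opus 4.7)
The plan is to start from an arbitrary $S\in\Omega(G)$ and surgically replace the part of $S$ lying inside $N[A]$ by $A$ itself, showing that the resulting set is independent, at least as large as $S$, and contains $A$.

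More concretely, given a local maximum independent set $A$ and any $S\in\Omega(G)$, I would define
\[
T=(S\setminus N[A])\cup A.
\]
The first step is to verify that $T$ is independent. This splits into three checks: $A$ is independent by hypothesis; $S\setminus N[A]$ is independent as a subset of $S$; and there are no edges between $A$ and $S\setminus N[A]$, because any vertex of $S\setminus N[A]$ is by definition non-adjacent to $A$. So $T\in\mathrm{Ind}(G)$ and clearly $A\subseteq T$.

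The second step is the cardinality comparison. Writing $|T|=|S|-|S\cap N[A]|+|A|$, it suffices to show $|S\cap N[A]|\leq |A|$. This is exactly where the local-maximality hypothesis is used: the set $S\cap N[A]$ is independent (as a subset of $S$) and is contained in $V(G[N[A]])$, so it is an independent set of the subgraph $G[N[A]]$; since $A$ is a \emph{maximum} independent set of $G[N[A]]$, we get $|S\cap N[A]|\leq |A|$. Combining, $|T|\geq|S|=\alpha(G)$, so $T\in\Omega(G)$ and $A\subseteq T$, as required.

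There is no real obstacle here; the only thing to get right is the choice of replacement ($N[A]$ rather than $N(A)$, so that the surgery does not accidentally double-count or leave dangling edges between old and new pieces) and the observation that $S\cap N[A]$ is a legitimate independent set of $G[N[A]]$ to which the defining property of $\Psi(G)$ can be applied.
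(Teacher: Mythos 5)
Your proof is correct: the replacement set $T=(S\setminus N[A])\cup A$ is independent, contains $A$, and the inequality $|S\cap N[A]|\leq|A|$ follows exactly as you say from $A$ being maximum in $G[N[A]]$, so $T\in\Omega(G)$. The paper itself states this theorem only with a citation to Nemhauser--Trotter and gives no proof, but your surgery is precisely the argument the paper uses for its stronger Theorem \ref{th11111} (where $S_3=S_2-(S_2\cap N[S_1])$ plays the role of your $S\setminus N[A]$), so you have in effect reproduced the paper's own technique in the special case needed here.
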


\begin{lemma}
\cite{LevMan2007b}\label{Lemma1} If $G$ is a K\"{o}nig-Egerv\'{a}ry graph,
$S\in\Psi\left(  G\right)  $, and $H=G[N[S]]$ is also a K\"{o}nig-Egerv\'{a}ry
graph, then every maximum matching of $H$ can be enlarged to a maximum
matching in $G$.
\end{lemma}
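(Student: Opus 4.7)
The plan is to glue $M$ together with the portion of a maximum matching of $G$ that lies ``outside'' $N[S]$, and to verify that the two pieces occupy disjoint vertex sets. First I would unpack what $S \in \Psi(G)$ combined with the K\"{o}nig--Egerv\'{a}ry hypothesis on $H$ gives us: since $S \in \Omega(H)$, one has $\alpha(H) = |S|$, and therefore $\mu(H) = |N[S]| - |S| = |N(S)|$. Applying Theorem \ref{ThKEMatch} to $H$ (with $S \in \Omega(H)$) shows that $M \subseteq (S, N(S))$ and that $M$ saturates $N(S)$.

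Next, by Theorem \ref{th111} I would extend $S$ to some $S^{\ast} \in \Omega(G)$, and set $A = V(G) - S^{\ast}$. Because $S^{\ast}$ is independent and contains $S$, no vertex of $N(S)$ can belong to $S^{\ast}$, so $N(S) \subseteq A$; hence $A - N(S)$ consists entirely of vertices with no neighbor in $S$. Since $G$ is K\"{o}nig--Egerv\'{a}ry, $\mu(G) = |A|$, and by Theorem \ref{ThKEMatch} every maximum matching of $G$ is contained in $(S^{\ast}, A)$.

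Now I would pick any maximum matching $M^{\ast}$ of $G$. It saturates $A$, and the crucial observation is that every vertex of $A - N(S)$ must be matched by $M^{\ast}$ to a vertex of $S^{\ast} - S$, simply because it has no neighbor in $S$. Let $M_{1}$ be the set of edges of $M^{\ast}$ incident with $A - N(S)$; then $M_{1}$ is a matching of size $|A - N(S)|$ whose vertices lie entirely in $(S^{\ast} - S) \cup (A - N(S))$, a set disjoint from $S \cup N(S)$. Consequently $M \cup M_{1}$ is a matching in $G$ of size $|N(S)| + |A - N(S)| = |A| = \mu(G)$, and it extends $M$. The only delicate point is the partition argument cleanly separating the ``inside $N[S]$'' edges from the ``outside'' edges, and this is exactly where the hypothesis $S \in \Psi(G)$ (through $N(S) \cap S^{\ast} = \emptyset$) combines with Theorem \ref{ThKEMatch} to do the work.
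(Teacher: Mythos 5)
Your argument is correct and complete. Note that the paper itself does not prove this lemma --- it imports it by citation from an earlier work --- so there is no in-paper proof to compare against; what you have written is a valid self-contained derivation using only tools stated in this paper. Every step checks out: $S\in\Omega(H)$ gives $\mu(H)=|N(S)|$ via the K\"{o}nig--Egerv\'{a}ry equality for $H$, Theorem \ref{ThKEMatch} forces $M\subseteq(S,N(S))$ and hence $M$ saturates $N(S)$; extending $S$ to $S^{\ast}\in\Omega(G)$ by Theorem \ref{th111} and applying Theorem \ref{ThKEMatch} to $G$ makes any maximum matching $M^{\ast}$ saturate $A=V(G)-S^{\ast}$, and the vertices of $A-N(S)$ are necessarily matched into $S^{\ast}-S$, so the edges $M_{1}$ of $M^{\ast}$ covering $A-N(S)$ are vertex-disjoint from $N[S]$. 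The count $|M|+|M_{1}|=|N(S)|+|A-N(S)|=|A|=\mu(G)$ then shows $M\cup M_{1}$ is a maximum matching of $G$ extending $M$. The one place worth making explicit in a final write-up is the observation $S^{\ast}\cap N(S)=\emptyset$ (immediate from independence of $S^{\ast}\supseteq S$), which you do invoke and which is indeed what keeps the two pieces disjoint.
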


A greedoid is a set system generalizing the notion of a matroid.

\begin{definition}
\cite{BjZiegler}, \cite{KorLovSch} A \textit{greedoid} is a pair
$(E,\mathcal{F})$, where $\mathcal{F}\subseteq2^{E}$ is a non-empty set system
satisfying the following conditions:

\setlength {\parindent}{0.0cm}\textit{Accessibility}: for every non-empty
$X\in\mathcal{F}$ there is an $x\in X$ such that $X-\{x\}\in\mathcal{F}$;

\setlength {\parindent}{0.0cm}\textit{Exchange}: for $X,Y\in\mathcal{F}%
,\left\vert X\right\vert =\left\vert Y\right\vert +1$, there is an $x\in X-Y$
such that $Y\cup\{x\}\in\mathcal{F}$.
\end{definition}

It is worth observing that if $\Psi\left(  G\right)  $ is a greedoid and
$S\in$ $\Psi\left(  G\right)  $, $\left\vert S\right\vert =k\geq2$, then by
accessibility property, there is a chain%

\[
\emptyset\subset\{x_{1}\}\subset\{x_{1},x_{2}\}\subset
\text{\textperiodcentered\textperiodcentered\textperiodcentered}\subset
\{x_{1},...,x_{k-1}\}\subset\{x_{1},...,x_{k-1},x_{k}\}=S,
\]
such that $\{x_{1},x_{2},...,x_{j}\}\in\Psi\left(  G\right)  $, for all
$j\in\{1,...,k-1\}$. Such a chain is called an \textit{accessibility chain} of
$S$.

There are some known sufficient conditions on the graph $G$ that ensure the
family $\Psi(G)$ to be a greedoid on the vertex set of $G$.

\begin{theorem}
\label{ThGreed}\cite{LevMan2} \emph{(i)} If $G$ is a forest, then $\Psi(G)$ is
a greedoid.

\cite{LevMan2004} \emph{(ii)} For a bipartite graph $G,$ $\Psi(G)$ is a
greedoid if and only if all its maximum matchings are uniquely restricted.

\cite{LevMan2007b} \emph{(iii)} Let $G$ be a triangle-free graph. Then
$\Psi(G)$ is a greedoid if and only if all maximum matchings of $G$ are
uniquely restricted and the closed neighborhood of every local maximum
independent set of $G$ induces a K\"{o}nig-Egerv\'{a}ry graph.
\end{theorem}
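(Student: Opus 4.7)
The plan is to handle the three parts separately, and in each case to exploit the König--Egerváry structure of $G[N[S]]$ for $S\in\Psi(G)$ together with the matching from $N(S)$ into $S$ that Theorem~\ref{ThKE}(iii) supplies. In particular, once such a matching exists, $|S|\geq|N(S)|$ and the alternating-path machinery inside $G[N[S]]$ becomes available.

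For part (i), every induced subgraph of a forest is itself a forest, hence bipartite and König--Egerváry; so for each $S\in\Psi(G)$ a matching $M_S$ from $N(S)$ into $S$ exists, and acyclicity of $G[N[S]]$ forces $M_S$ to be uniquely restricted. To verify accessibility I would pick a leaf $v$ of the forest $G[N[S]]$ (assuming $|S|\geq 1$) and follow it via $M_S$ to a vertex $x\in S$ sitting in a pendant position, then show by a short count that $S\setminus\{x\}$ is still maximum independent in $G[N[S\setminus\{x\}]]$. For exchange with $|X|=|Y|+1$, I would decompose $X\triangle Y$ inside the forest $G[N[X]\cup N[Y]]$; each component is a path, and a surplus-counting argument identifies a path whose endpoint in $X\setminus Y$ is the desired augmenting element.

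For part (ii), the direction $(\Rightarrow)$ I would prove by contraposition: a maximum matching $M$ that is not uniquely restricted contains an $M$-alternating cycle $C$, and the two parts of $V(C)$ yield distinct sets in $\Psi(G[V(C)])$ which extend to $X,Y\in\Psi(G)$ with $|X|=|Y|+1$ and no valid single-element augmentation, violating exchange. The converse $(\Leftarrow)$ is the harder half: accessibility I would obtain by induction on $|S|$, arguing that unique restrictedness produces an $M$-matched pair $\{x,y\}$ with $x\in S$, $y\in N(S)$, and $y$ having no other neighbor in $S$, so $S\setminus\{x\}\in\Psi(G)$; for exchange I would study alternating paths between $X$ and $Y$ inside the bipartite König--Egerváry graph $G[N[X]\cup N[Y]]$, invoking Lemma~\ref{Lemma1} to glue local matchings into global ones and unique restrictedness to forbid even alternating cycles that would block the desired augmentation.

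Part (iii) then generalizes (ii). The König--Egerváry hypothesis on each $G[N[S]]$ is exactly what is needed in place of bipartiteness to guarantee the matching of Theorem~\ref{ThKE}(iii) and to invoke Lemma~\ref{Lemma1}; triangle-freeness guards against vertices of $N(S)$ with two neighbors in $S$ forming short cycles that would otherwise derail the alternating-path arguments. With these two ingredients in place, the inductive proofs of accessibility and exchange from (ii) transfer essentially unchanged.

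The main obstacle I expect is the $(\Leftarrow)$ direction of (ii) and (iii), specifically the exchange property: one must decompose $X\triangle Y$ inside $G[N[X]\cup N[Y]]$ into alternating paths and cycles, use unique restrictedness to eliminate even cycles, and then produce $x\in X\setminus Y$ with $Y\cup\{x\}\in\Psi(G)$ while simultaneously ensuring that $G[N[Y\cup\{x\}]]$ remains König--Egerváry. The remaining steps are structural bookkeeping on matchings.
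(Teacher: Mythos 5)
This statement is quoted background: the paper attributes parts (i), (ii), (iii) to \cite{LevMan2}, \cite{LevMan2004}, and \cite{LevMan2007b} respectively and gives no proof of its own, so there is no in-paper argument to compare yours against. Judged on its own terms, what you have written is a plan rather than a proof, and two of its load-bearing steps have genuine gaps. First, in part (i) the accessibility step is exactly the crux, and your ``short count'' is not supplied: after deleting the vertex $x\in S$ matched to a leaf of $G[N[S]]$, you must show $\alpha\left(G[N[S-\{x\}]]\right)=|S|-1$, and nothing in the sketch rules out an independent set of size $|S|$ living inside $N[S-\{x\}]$; this is where the actual tree argument of \cite{LevMan2} does its work. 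Second, in part (ii) the necessity direction is aimed at the wrong axiom: the two independent sides of an $M$-alternating cycle have \emph{equal} cardinality, so they cannot furnish $X,Y$ with $|X|=|Y|+1$ refuting exchange. What a non-uniquely-restricted maximum matching actually breaks is accessibility --- e.g.\ in $C_{4}$ one has $\Psi(C_{4})=\{\emptyset\}\cup\Omega(C_{4})$, and no singleton lies in $\Psi(C_{4})$, so a maximum independent set has no accessibility chain. Your construction as described would not produce a violation.

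Beyond these two specific failures, the sufficiency directions of (ii) and (iii) --- which you yourself flag as the main obstacle --- are each the content of an entire paper (\cite{LevMan2004}, \cite{LevMan2007b}); asserting that alternating-path decompositions of $X\bigtriangleup Y$ inside $G[N[X]\cup N[Y]]$ ``transfer essentially unchanged'' from the bipartite case to the triangle-free case does not discharge them. If you intend to reprove this theorem rather than cite it, each part needs to be carried out in full; as it stands the proposal identifies the right ingredients (Theorem~\ref{ThKE}, Lemma~\ref{Lemma1}, uniquely restricted matchings) but does not assemble them into an argument.
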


Recall that $G$ is a \textit{well-covered} graph if all its maximal
independent sets are of the same cardinality \cite{Plummer1970}, and $G$ is
\textit{very well-covered} if, in addition, it has no isolated vertices and
$\left\vert V\left(  G\right)  \right\vert =2\alpha\left(  G\right)  $
\cite{Fav1982}.

\begin{theorem}
\label{ThVWCov}Let $G$ be a very well-covered graph. Then

\emph{(i)} \cite{LevMan2007} $G\left[  N\left[  S\right]  \right]  $ is a
K\"{o}nig-Egerv\'{a}ry graph, for every $S\in\Psi\left(  G\right)  $;

\emph{(ii)} \cite{LevMan2012c} an independent set $S$ belongs to $\Psi\left(
G\right)  $ if and only if $\ \left\vert S\right\vert =\left\vert N\left(
S\right)  \right\vert $.

\textit{(iii)} \cite{LevMan2012c} $\Psi\left(  G\right)  $ is a greedoid if
and only if $G$ has a unique maximum matching.
\end{theorem}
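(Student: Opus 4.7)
The plan is to attack the three parts in order, since (ii) leans on (i) and (iii) leans on both. Throughout, I will exploit the fact that a very well-covered $G$ is automatically a König-Egerváry graph with a perfect matching $M=\{a_{i}b_{i}:1\le i\le n/2\}$ (because $|V(G)|=2\alpha(G)$ forces $\mu(G)=\alpha(G)=n/2$ via $\mu(G)+\alpha(G)\ge|V(G)|$ for graphs without isolated vertices). I will also lean on Favaron's structural characterization: for a suitable perfect matching $M$, every maximal independent set picks exactly one endpoint from each edge of $M$, and if $a_{i}x, b_{i}y\in E(G)$ with $x\neq b_{i}$, $y\neq a_{i}$, then $xy\in E(G)$.

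For part (i), given $S\in\Psi(G)$ I would extend $S$ to a maximum independent set $T\in\Omega(G)$ using Theorem \ref{th111}. Since $T$ is independent, $N(S)\cap T=\emptyset$, so $N(S)\subseteq V(G)-T$. Under $M$, every $s\in S\subseteq T$ has a partner $s'\in V(G)-T$ that is automatically in $N(S)$, giving $|S|\le|N(S)|$; conversely, the hard direction is showing that every vertex of $N(S)$ is the $M$-partner of some vertex of $S$. This is where Favaron's property intervenes: if some $u\in N(S)$ has $M$-partner $u'\in T\setminus S$, take $s\in S$ with $us\in E$ and $s'$ the $M$-partner of $s$; the Favaron edge-propagation rule applied to the matching edges $uu'$ and $ss'$ yields either $u'\in N(S)$ (forbidden, since $u'\in T$ indep with $S$) or another indep set strictly larger than $S$ inside $G[N[S]]$, contradicting $S\in\Psi(G)$. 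Once $N(S)$ is the exact set of $M$-partners of $S$, the restriction of $M$ to those pairs is a matching of $N(S)$ into $S$ with $|S|\ge|N(S)|$, so by Theorem \ref{ThKE}(iii), $G[N[S]]$ is a König-Egerváry graph.

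For part (ii), the forward implication uses (i) directly: $S\in\Psi(G)$ gives a matching from $N(S)$ into $S$ (so $|N(S)|\le|S|$), while the perfect matching $M$ of $G$ restricted to $S$ sends each $s\in S$ to a neighbor in $V(G)-S$ lying in $N(S)$ (so $|S|\le|N(S)|$). For the converse, assume $S$ is independent with $|S|=|N(S)|$; I would show that every independent set $S^{\ast}\subseteq N[S]$ satisfies $|S^{\ast}|\le|S|$. Extending $S^{\ast}$ to some $T^{\ast}\in\Omega(G)$, the partition of $N[S]$ induced by the perfect matching $M$ (each edge of $M$ lying in $G[N[S]]$ contributes one vertex to any indep set in $G[N[S]]$) together with the equality $|N[S]|=2|S|$ forces $|S^{\ast}|\le|S|$. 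Hence $S\in\Psi(G)$.

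For part (iii), the easy direction is: if $\Psi(G)$ is a greedoid and $M_{1},M_{2}$ are two distinct maximum matchings, I exhibit $S_{1},S_{2}\in\Psi(G)$ whose existence would violate the exchange axiom (e.g., the complements of the $M_{j}$-saturated sides in suitable local neighborhoods are in $\Psi(G)$ by (ii), and a symmetric-difference argument produces $X,Y\in\Psi(G)$, $|X|=|Y|+1$, with no extendable element of $X\setminus Y$). The converse is the main obstacle: assuming $M$ is unique, I need both accessibility and exchange for $\Psi(G)$. Accessibility is obtained by peeling off a vertex $s\in S$ whose $M$-partner $s'$ is a leaf of $G[N[S]]\setminus M$, which exists because uniqueness of $M$ rules out alternating cycles and even alternating paths that would otherwise block removal; by (ii), $S\setminus\{s\}$ again satisfies $|S\setminus\{s\}|=|N(S\setminus\{s\})|$, hence lies in $\Psi(G)$. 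For exchange with $|X|=|Y|+1$, a symmetric-difference/augmenting path argument in $M$ (which cannot contain alternating cycles by uniqueness) locates an element $x\in X\setminus Y$ whose addition to $Y$ preserves the $|\cdot|=|N(\cdot)|$ criterion of (ii). The most delicate step is this last one, as it requires showing that the augmenting structure in $M$ can always be trimmed to a single transposition compatible with $Y$.
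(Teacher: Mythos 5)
First, note that the paper does not prove Theorem \ref{ThVWCov} at all: it is imported verbatim from \cite{LevMan2007} and \cite{LevMan2012c}, so there is no in-paper argument to compare yours against. Judged on its own, your plan picks the right structural tool (Favaron's characterization of very well-covered graphs via a perfect matching $M$ with the edge-propagation property), and some pieces are sound: the injection $s\mapsto M(s)$ from $S$ into $N(S)$ correctly gives $\left\vert S\right\vert \leq\left\vert N(S)\right\vert$, and your converse of (ii) works because when $\left\vert S\right\vert =\left\vert N(S)\right\vert$ that injection is a bijection, so $M$ restricted to $N[S]$ is a perfect matching of $G[N[S]]$ and any independent set there has at most one vertex per matched pair. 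But there are genuine gaps. The inequality you invoke to get the perfect matching, $\mu(G)+\alpha(G)\geq\left\vert V(G)\right\vert$, is false in general (e.g. $C_{5}$); the correct general inequality goes the other way, so the existence of a perfect matching must be taken from Favaron's theorem itself, not derived this way. More seriously, in part (i) the entire burden is the claim that every $u\in N(S)$ is $M$-matched into $S$, and your proposed dichotomy collapses: once you fix $T\in\Omega(G)$ with $S\subseteq T$, the partner $u'=M(u)$ of any $u\in N(S)\subseteq V(G)-T$ lies in $T$ and hence is never adjacent to $S$, so the branch ``$u'\in N(S)$'' is vacuous, and the other branch --- exhibiting an independent set of size $\left\vert S\right\vert +1$ inside $N[S]$ --- is precisely the content of the theorem and is left unconstructed. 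It can be done (property (P) shows that $\left(S-N(u)\right)\cup\{u\}\cup\{M(s):s\in N(u)\cap S\}$ is independent, of size $\left\vert S\right\vert +1$, and contained in $N[S]$, using $N(s)\cap N(M(s))=\emptyset$ and the propagation rule to check all three kinds of non-adjacency), but none of this is in your sketch.

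Part (iii) is essentially only a declaration of intent. Both directions defer the combinatorial core: the ``only if'' direction waves at ``a symmetric-difference argument'' producing a violation of exchange without exhibiting the sets, and the ``if'' direction's accessibility and exchange steps each rest on an unproved trimming claim about alternating structures in the unique matching --- you yourself flag the last step as unresolved. Since (iii) is a full theorem of \cite{LevMan2012c} whose proof is not short, this part of the proposal cannot be accepted as a proof; at best it is a plausible outline whose hardest steps remain open.
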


One of the most useful tool for the efficient optimization is the greedy
algorithm. Its essence lies in trying to find the global optimum by moving on
each step in the locally optimal direction. In particular, it resembles the
classical `gradient' search method in continuous case. If the number of steps
and the complexity of choosing the local optimum are both polynomial, then,
obviously, the greedy algorithm produces its result in polynomial time.

For some classes of problems, the greedy algorithm produces an optimal result.
For instance, Prim and Kruskal algorithms for finding a minimum spanning tree
in a graph are both greedy. In general many greedy problems can be described
using exchange structures called matroids, which were later generalized to
greedoids. One of the important tasks of combinatorial optimization theories
is to explain the correctness of the greedy approach in those cases and to
generalize them to every possible extent.

Another important task is the inverse - to define the boundaries of
applicability of the greedy algorithm. It is known that every linear objective
function can be optimized by a greedy algorithm on matroids, and conversely,
the Rado-Edmonds theorem claims that this property characterizes matroids for
hereditary set systems. In contrast with this, optimizing a linear function on
greedoids is an intractable problem. On the other hand, it was shown that
Gaussian greedoids allow a generalization of the Rado-Edmonds theorem
\cite{Bagotskaya1,BjZiegler,Goecke1,Serganova1}. In addition, there is an
algorithmic characterization of antimatroids \cite{BoydFaigle,KempLev2003}.

In what follows, we coin an idea of a more general exchange structure that we
call an augmentoid.

\begin{definition}
An \textit{augmentoid} is a pair $(E,\mathcal{F})$, where $\mathcal{F}%
\subseteq2^{E}$ is a non-empty family of sets satisfying the following condition:

\setlength {\parindent}{0.0cm}\textit{Augmentation}: for $X,Y\in\mathcal{F}$,
there exist $A\subseteq X-Y$ and $B\subseteq Y-X$ such that $Y\cup A,X\cup
B\in\mathcal{F}$ and $\left\vert Y\cup A\right\vert =\left\vert X\cup
B\right\vert $.
\end{definition}

Every element of $\mathcal{F}$ is a \textit{feasible} set of the augmentoid
$(E,\mathcal{F})$.

\begin{proposition}
\label{claim1}If $(E,\mathcal{F})$ is an augmentoid, then every feasible set
may be enlarged to a maximum feasible set.
\end{proposition}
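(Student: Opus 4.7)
The plan is to apply the Augmentation axiom in the most natural way: pair the given feasible set $X$ against some fixed maximum feasible set $Y^{\ast}$, and read off from the axiom that $X$ can be grown to something of the same cardinality as $Y^{\ast}$.

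More concretely, I would first fix $X\in\mathcal{F}$ and choose $Y^{\ast}\in\mathcal{F}$ with $\left\vert Y^{\ast}\right\vert =\max\{\left\vert Z\right\vert :Z\in\mathcal{F}\}$ (this maximum exists since $E$ is finite). Applying Augmentation to the pair $X,Y^{\ast}$ produces $A\subseteq X-Y^{\ast}$ and $B\subseteq Y^{\ast}-X$ with $Y^{\ast}\cup A,X\cup B\in\mathcal{F}$ and $\left\vert Y^{\ast}\cup A\right\vert =\left\vert X\cup B\right\vert$.

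The key observation is that $A\cap Y^{\ast}=\emptyset$, so $\left\vert Y^{\ast}\cup A\right\vert =\left\vert Y^{\ast}\right\vert +\left\vert A\right\vert$. Maximality of $Y^{\ast}$ forces $\left\vert Y^{\ast}\cup A\right\vert \leq\left\vert Y^{\ast}\right\vert$, hence $A=\emptyset$ and $\left\vert Y^{\ast}\cup A\right\vert =\left\vert Y^{\ast}\right\vert$. Combining this with the equality supplied by the axiom, $\left\vert X\cup B\right\vert =\left\vert Y^{\ast}\right\vert$, which is the maximum cardinality; and of course $X\subseteq X\cup B\in\mathcal{F}$, so $X\cup B$ is the desired maximum feasible set extending $X$.

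There is no real obstacle here; the only subtlety worth flagging is noticing that the Augmentation axiom as stated is symmetric in $X$ and $Y$, and that the disjointness clauses $A\subseteq X-Y^{\ast}$, $B\subseteq Y^{\ast}-X$ are exactly what allow us to turn cardinality inequalities into conclusions about $A$ and $B$. Once that is in place the argument is a one-line application of the axiom.
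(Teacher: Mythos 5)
Your argument is correct and is essentially the paper's own proof: both apply the Augmentation axiom to $X$ paired with a maximum feasible set and conclude $\left\vert X\cup B\right\vert =\left\vert Y^{\ast}\right\vert$. You merely spell out the step that maximality of $Y^{\ast}$ forces $A=\emptyset$, which the paper leaves implicit by writing $\left\vert Y\cup\emptyset\right\vert$.
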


\begin{proof}
Let $X,Y\in\mathcal{F}$, where $Y$ is a maximum feasible set. By the
augmentation property, there is a possibility to enlarge $X$ to a feasible set
$X\cup B$ in such a way that $\left\vert X\cup B\right\vert =\left\vert
Y\cup\emptyset\right\vert =\left\vert Y\right\vert $.
\end{proof}

Clearly every greedoid is an augmentoid. As we mentioned before, greedoids
were invented in order to accommodate greedy algorithms with exchange
properties. The main purpose of this paper is to extend the area of
applicability of exchange structures to a more general context including
critical sets, crowns, and local maximum independent sets.

\section{Preliminary results}

Let us notice that if $S$ is a crown of a graph $G$, having $M$ as a matching
from $N(S)$ into $S$, then $\left\vert S\right\vert +\left\vert M\right\vert
=\left\vert S\right\vert +\left\vert N(S)\right\vert =\left\vert S\cup
N(S)\right\vert $. Taking into account Theorems \ref{ThKE}, \ref{ThKEMatch},
one may immediately conclude with the following.

\begin{corollary}
\label{Cor1}If $S$ is a crown of a graph $G$, then $G\left[  N\left[
S\right]  \right]  $ is a K\"{o}nig-Egerv\'{a}ry graph.
\end{corollary}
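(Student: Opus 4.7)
The plan is to apply Theorem \ref{ThKE}(iii) to the induced subgraph $H=G[N[S]]$, with the natural bipartition coming from the crown structure. First I would set $H_{1}=H[S]$ and $H_{2}=H[N(S)]$, so that $V(H_{1})=S$ and $V(H_{2})=N(S)$ partition $V(H)$. Since $S\in\mathrm{Ind}(G)$, it is also independent in $H$, so $V(H_{1})$ is an independent set. Moreover, the matching $M$ from $N(S)$ into $S$ saturates $N(S)$, hence $|N(S)|=|M|\le|S|$, giving $|V(H_{1})|=|S|\ge|N(S)|=|V(H_{2})|$.

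Next I would verify the matching condition. Every edge of $M$ joins a vertex of $N(S)$ to a vertex of $S$, so $M\subseteq (S,N(S))$, and $|M|=|N(S)|=|V(H_{2})|$. Thus all three hypotheses of Theorem \ref{ThKE}(iii) are met, so that theorem yields that $H$ is a K\"{o}nig-Egerv\'{a}ry graph.

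The one technical nuisance is that Theorem \ref{ThKE} is stated for \emph{connected} graphs, whereas $H=G[N[S]]$ need not be connected. This is dispatched by the remark recorded just after Theorem \ref{ThKE}: a graph is K\"{o}nig-Egerv\'{a}ry iff each of its connected components is. For every connected component $C$ of $H$, intersecting $S$, $N(S)$, and $M$ with $V(C)$ preserves all the properties above (independence of $S\cap V(C)$, size inequality $|S\cap V(C)|\ge|N(S)\cap V(C)|$ coming from the restricted matching, and the matching inclusion in $(S\cap V(C),N(S)\cap V(C))$), so each component is K\"{o}nig-Egerv\'{a}ry, and thus so is $H$.

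As a cleaner alternative avoiding Theorem \ref{ThKE} altogether, one can simply note that $\alpha(H)\ge|S|$, $\mu(H)\ge|M|=|N(S)|$, and so $\alpha(H)+\mu(H)\ge|S|+|N(S)|=|V(H)|$; combined with the always-valid inequality $\alpha(H)+\mu(H)\le|V(H)|$, equality follows. I would present the first route, since the problem statement explicitly flags Theorems \ref{ThKE} and \ref{ThKEMatch} as the tools to invoke, and this route also makes the crown structure of $H$ transparent. There is really no serious obstacle here; the mildest point is remembering to reduce to components before applying Theorem \ref{ThKE}.
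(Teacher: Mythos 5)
Your proof is correct and takes essentially the same route as the paper, which justifies the corollary by the observation $\left\vert S\right\vert +\left\vert M\right\vert =\left\vert S\right\vert +\left\vert N(S)\right\vert =\left\vert S\cup N(S)\right\vert$ together with Theorems \ref{ThKE} and \ref{ThKEMatch}. Your additional care in reducing to connected components before invoking Theorem \ref{ThKE}, and the alternative counting argument $\alpha(H)+\mu(H)\geq\left\vert V(H)\right\vert$, merely make explicit what the paper leaves implicit.
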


Consequently, if $S$ is a straight crown, then $G\left[  N\left[  S\right]
\right]  $ is a K\"{o}nig-Egerv\'{a}ry graph with a perfect
matching.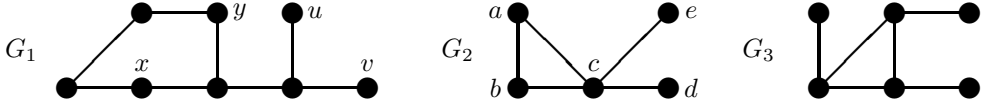
\begin{figure}[h]
\setlength{\unitlength}{1cm}\begin{picture}(5,1.4)\thicklines
\multiput(1,0)(1,0){5}{\circle*{0.29}}
\multiput(2,1)(1,0){3}{\circle*{0.29}}
\put(1,0){\line(1,1){1}}
\put(1,0){\line(1,0){4}}
\put(2,1){\line(1,0){1}}
\put(3,0){\line(0,1){1}}
\put(4,0){\line(0,1){1}}
\put(2,0.3){\makebox(0,0){$x$}}
\put(3.3,1){\makebox(0,0){$y$}}
\put(4.3,1){\makebox(0,0){$u$}}
\put(5,0.3){\makebox(0,0){$v$}}
\put(0.4,0.5){\makebox(0,0){$G_{1}$}}
\multiput(7,0)(1,0){3}{\circle*{0.29}}
\multiput(7,1)(2,0){2}{\circle*{0.29}}
\put(7,1){\line(1,-1){1}}
\put(7,0){\line(0,1){1}}
\put(7,0){\line(1,0){2}}
\put(8,0){\line(1,1){1}}
\put(6.7,0){\makebox(0,0){$b$}}
\put(6.7,1){\makebox(0,0){$a$}}
\put(8,0.3){\makebox(0,0){$c$}}
\put(9.3,0){\makebox(0,0){$d$}}
\put(9.3,1){\makebox(0,0){$e$}}
\put(6.2,0.5){\makebox(0,0){$G_{2}$}}
\multiput(11,0)(1,0){3}{\circle*{0.29}}
\multiput(11,1)(1,0){3}{\circle*{0.29}}
\put(11,0){\line(0,1){1}}
\put(11,0){\line(1,0){2}}
\put(11,0){\line(1,1){1}}
\put(12,0){\line(0,1){1}}
\put(12,1){\line(1,0){1}}
\put(10.2,0.5){\makebox(0,0){$G_{3}$}}
\end{picture}\caption{Only $G_{2}$ and $G_{3}$ are K\"{o}nig-Egerv\'{a}ry
graphs.}%
\label{Fig22}%
\end{figure}

Let us consider the graphs from Figure \ref{Fig22}. Notice that:

\begin{itemize}
\item $CritIndep(G_{1})\neq Crown(G_{1})\neq\Psi(G_{1})$,

because $\{v\}\in Crown(G_{1})-CritIndep(G_{1})$ and $\left\{  x,y\right\}
\in\Psi(G_{1})-Crown(G_{1})$;

\item $CritIndep(G_{2})\neq Crown(G_{2})\neq\Psi(G_{2})$,

since $\{d\}\in Crown(G_{2})-CritIndep(G_{2})$ and $\left\{  a\right\}
\in\Psi(G_{2})-Crown(G_{2})$;

\item $CritIndep(G_{3})=Crown(G_{3})=\Psi(G_{3})$.
\end{itemize}

\begin{theorem}
\label{ThCrown}$CritIndep(G)\subseteq Crown(G)\subseteq\Psi(G)$ hold for every
graph $G$.
\end{theorem}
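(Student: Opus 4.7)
The plan is to split the chain of inclusions and handle each one separately, since both turn out to follow quickly from results already recorded in the preliminaries.

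For the first inclusion $CritIndep(G) \subseteq Crown(G)$, I would simply invoke Lemma \ref{MatchLemma}: if $S$ is a critical independent set, then by definition $S \in \mathrm{Ind}(G)$, and Larson's lemma guarantees a matching from $N(S)$ into $S$. These two properties together are precisely the definition of a crown, so $S \in Crown(G)$. There is no real work to do here beyond citing the lemma.

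For the second inclusion $Crown(G) \subseteq \Psi(G)$, let $S \in Crown(G)$ and set $H := G[N[S]]$. I want to show that $S$ is a \emph{maximum} independent set in $H$, not merely an independent one. The key is to control $\alpha(H)$ from above via matching. Corollary \ref{Cor1} tells us that $H$ is a K\"onig-Egerv\'ary graph, so $\alpha(H) + \mu(H) = |V(H)| = |S| + |N(S)|$. On the other hand, the matching $M$ witnessing the crown property satisfies $|M| = |N(S)|$ and is contained in $H$, hence $\mu(H) \geq |N(S)|$. Combining these two facts yields $\alpha(H) \leq |S|$. Since $S$ itself is independent in $H$, the reverse inequality $\alpha(H) \geq |S|$ is automatic, so equality holds and $S \in \Psi(G)$.

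I do not expect any real obstacle: the hard work has already been absorbed into Lemma \ref{MatchLemma} and Corollary \ref{Cor1}, and the only thing I need to be careful about is to phrase the matching-size inequality correctly (every edge of the matching from $N(S)$ into $S$ lives inside the induced subgraph $H$, so it really contributes to $\mu(H)$). The examples in Figure \ref{Fig22} show that both inclusions can be strict in general, so there is no stronger statement hidden here worth pursuing in the same proof.
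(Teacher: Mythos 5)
Your proposal is correct and follows essentially the same route as the paper: Lemma \ref{MatchLemma} gives the first inclusion, and Corollary \ref{Cor1} gives the second, the only difference being that you verify $\alpha(G[N[S]])=|S|$ by explicit arithmetic with the K\"onig-Egerv\'ary equality where the paper cites Theorem \ref{ThKE}\emph{(ii)} directly. No gaps.
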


\begin{proof}
Let $S\in CritIndep(G)$. By Lemma \ref{MatchLemma}, there exists a matching
from $N(S)$ into $S$. Consequently, $S$ is a crown of $G$. Hence we infer that
$CritIndep(G)\subseteq Crown(G)$.

Let now $S\in Crown(G)$. Then $G\left[  S\cup N(S)\right]  $ is a
K\"{o}nig-Egerv\'{a}ry graph, by Corollary \ref{Cor1}. Further, according to
Theorem \ref{ThKE}\textit{(ii)}, $S$ is a maximum independent in $G[N[S]]$,
which means, in other words, that $S\in$ $\Psi(G)$. Finally, we may conclude
that $Crown(G)\subseteq\Psi(G)$.
\end{proof}

By Theorem \ref{ThCrown}, each critical independent is also a crown. The
converse is not generally true. For instance, if $K_{n,1}=(A,B,E)$ has
$\left\vert A\right\vert =n\geq2$, then every proper subset $S$ of $A$ is a
crown but is not critical, because $d(S)=\left\vert S\right\vert -1<d(A)=n-1$.
Notice that $A$ is both a maximum critical independent set and a maximum crown
for $K_{n,1}$.

\begin{corollary}
\cite{LevMan2012b} Every critical independent set is a local maximum
independent set as well.
\end{corollary}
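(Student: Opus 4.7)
The plan is to observe that this corollary is essentially a restatement of the weaker half of Theorem \ref{ThCrown}. Since we already proved $CritIndep(G) \subseteq Crown(G) \subseteq \Psi(G)$, composing the two inclusions gives $CritIndep(G) \subseteq \Psi(G)$, which is exactly the claim. So the one-line proof would simply cite Theorem \ref{ThCrown}.

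If I wanted to produce a self-contained argument that does not route through the intermediate notion of a crown, I would unfold the two inclusions as follows. First, given a critical independent set $S$, invoke Lemma \ref{MatchLemma} to obtain a matching $M$ from $N(S)$ into $S$. This matching saturates $N(S)$, so $\left\vert M\right\vert = \left\vert N(S)\right\vert$, and together with $S$ being independent we get $\alpha(G[N[S]]) \geq \left\vert S \right\vert$ and $\mu(G[N[S]]) \geq \left\vert N(S) \right\vert$, while $\alpha + \mu \leq \left\vert V \right\vert = \left\vert S \right\vert + \left\vert N(S) \right\vert$. Hence $G[N[S]]$ is a K\"onig-Egerv\'ary graph with $\alpha(G[N[S]]) = \left\vert S\right\vert$, i.e., $S$ is a maximum independent set in $G[N[S]]$, which is the definition of $S \in \Psi(G)$.

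There is no real obstacle here: the corollary is a direct consequence of the already-established containment chain. The only thing to be careful about is making sure the argument does not implicitly require maximality of the critical set or additional hypotheses (such as $G$ being a K\"onig-Egerv\'ary graph); both versions above work for an arbitrary critical independent set in an arbitrary graph. Given this, I would write the proof as a single sentence deferring to Theorem \ref{ThCrown}.
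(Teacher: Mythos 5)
Your proposal is correct and matches the paper exactly: the corollary is stated there without a separate proof, as an immediate consequence of the chain $CritIndep(G)\subseteq Crown(G)\subseteq\Psi(G)$ from Theorem \ref{ThCrown}. Your optional unfolded argument via Lemma \ref{MatchLemma} and the inequality $\alpha(H)+\mu(H)\leq\left\vert V(H)\right\vert$ is also sound, but the one-sentence citation is all that is needed.
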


Combining Theorems \ref{th111} and \ref{ThCrown}, we deduce the following.

\begin{corollary}
\cite{ButTruk2007} Each critical independent set is included in a maximum
independent set.
\end{corollary}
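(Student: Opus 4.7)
The plan is to chain together the two inclusions that have just been established, with Nemhauser--Trotter's classical result serving as the final link. Specifically, I would argue that a critical independent set $S$ is automatically a local maximum independent set by Theorem \ref{ThCrown}, and then invoke Theorem \ref{th111} to embed $S$ into a maximum independent set of $G$.

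In more detail, let $S \in CritIndep(G)$. The first step is to apply Theorem \ref{ThCrown}, which gives the chain of inclusions $CritIndep(G) \subseteq Crown(G) \subseteq \Psi(G)$. Consequently $S \in \Psi(G)$, i.e., $S$ is a maximum independent set in the subgraph $G[N[S]]$ induced by its closed neighborhood. The second step is to apply Theorem \ref{th111} (Nemhauser--Trotter), which asserts precisely that every local maximum independent set is contained in some member of $\Omega(G)$. Together these give a maximum independent set $T \in \Omega(G)$ with $S \subseteq T$, which is the desired conclusion.

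There is essentially no obstacle here, since the statement is a direct corollary of the preceding theorem combined with an already-cited result. The only thing worth remarking is that one could alternatively obtain the conclusion without routing through $\Psi(G)$: Lemma \ref{MatchLemma} supplies a matching $M$ from $N(S)$ into $S$, so $G[N[S]]$ is a König--Egerváry graph by Corollary \ref{Cor1} in which $S$ is a maximum independent set, and then Theorem \ref{th111} applies. However, the cleanest exposition is the two-line argument above, which also highlights that this corollary is an immediate by-product of the hierarchy $CritIndep(G) \subseteq Crown(G) \subseteq \Psi(G)$ recorded in Theorem \ref{ThCrown}.
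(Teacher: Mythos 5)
Your proposal is correct and matches the paper exactly: the corollary is stated there as a direct consequence of combining Theorem \ref{ThCrown} (which places every critical independent set in $\Psi(G)$) with Theorem \ref{th111}. Nothing further is needed.
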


\begin{corollary}
$\mathrm{\ker}(G)$ is a crown and, hence, a local maximum independent set.
\end{corollary}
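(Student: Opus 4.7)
The plan is to reduce the statement to Theorem \ref{ThCrown} by showing that $\mathrm{\ker}(G)$ itself belongs to $CritIndep(G)$. Once this is established, the chain $CritIndep(G)\subseteq Crown(G)\subseteq\Psi(G)$ of Theorem \ref{ThCrown} immediately delivers both conclusions.

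The crux is therefore to prove that the intersection of all critical independent sets is again critical. I would establish this through the supermodularity of the difference function $d$. Specifically, for any two sets $A,B\subseteq V(G)$, since $N(A\cup B)=N(A)\cup N(B)$ and $N(A\cap B)\subseteq N(A)\cap N(B)$, one obtains
\[
|N(A\cap B)|+|N(A\cup B)|\le |N(A)|+|N(B)|,
\]
and combined with $|A\cap B|+|A\cup B|=|A|+|B|$ this yields $d(A\cap B)+d(A\cup B)\ge d(A)+d(B)$. If $A$ and $B$ are critical independent sets, then $d(A)=d(B)=d(G)$, and since $d(A\cup B)\le d(G)$ by definition of the critical difference, we force $d(A\cap B)\ge d(G)$, hence $d(A\cap B)=d(G)$. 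As $A\cap B\subseteq A$ is independent, it follows that $A\cap B\in CritIndep(G)$.

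Since $V(G)$ is finite, only finitely many critical independent sets exist, and a straightforward induction based on the pairwise closure above shows that $\mathrm{\ker}(G)=\bigcap\{A:A\in CritIndep(G)\}$ is itself in $CritIndep(G)$. Applying Theorem \ref{ThCrown} concludes the proof: $\mathrm{\ker}(G)\in CritIndep(G)\subseteq Crown(G)\subseteq\Psi(G)$.

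The main obstacle is the supermodularity step, but it is a clean one-line consequence of the monotonicity of $N(\cdot)$ and inclusion-exclusion; no graph-theoretic subtlety beyond that is required. If one prefers a shorter route, the supermodularity of $d$ (and hence closure of $CritIndep(G)$ under intersection) is already recorded in \cite{LevMan2012a}, and the argument then collapses to a direct citation followed by one application of Theorem \ref{ThCrown}.
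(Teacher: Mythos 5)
Your proof is correct and follows exactly the route the paper intends: the corollary is an immediate consequence of $\mathrm{\ker}(G)\in CritIndep(G)$ together with the chain $CritIndep(G)\subseteq Crown(G)\subseteq\Psi(G)$ of Theorem \ref{ThCrown}. The supermodularity argument you supply is the standard proof that $CritIndep(G)$ is closed under intersection (the fact recorded in \cite{LevMan2012a}), which the paper leaves implicit, and your use of $d(A\cup B)\le d(G)$ together with Zhang's equality $d(G)=id(G)$ is exactly right.
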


Combining Theorems \ref{th5} and \ref{ThCrown}, we get the following.

\begin{corollary}
If $G$ is a K\"{o}nig-Egerv\'{a}ry graph, then:

\emph{(i)} every maximum independent set of $G$ is a crown;

\emph{(ii)} $\mathrm{core}(G)$ is a crown and, hence, a local maximum
independent set.
\end{corollary}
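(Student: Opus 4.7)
The plan is to derive both parts from the results already established in the excerpt, with Theorem~\ref{ThCrown} doing most of the work once the relevant set has been shown to be critical.

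For part (i), let $S\in\Omega(G)$ be any maximum independent set of $G$. Since $G$ is a K\"onig-Egerv\'ary graph, Theorem~\ref{th5} gives $S\in CritIndep(G)$, and then Theorem~\ref{ThCrown} yields $S\in Crown(G)$. This step is essentially an invocation of two previously-cited results.

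For part (ii), the key observation is that $\mathrm{core}(G)$ is an independent set (being an intersection of independent sets) whose difference attains the critical difference. Specifically, Theorem~\ref{th8} states that for a K\"onig-Egerv\'ary graph $G$ we have
\[
d(G)=\left\vert \mathrm{core}(G)\right\vert -\left\vert N(\mathrm{core}(G))\right\vert ,
\]
so $d(\mathrm{core}(G))=d(G)$ by definition of $d$. Therefore $\mathrm{core}(G)$ is a critical independent set of $G$, i.e., $\mathrm{core}(G)\in CritIndep(G)$. Applying Theorem~\ref{ThCrown} twice, $\mathrm{core}(G)\in Crown(G)\subseteq\Psi(G)$, which gives both conclusions of (ii).

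I do not anticipate any real obstacle here: the proof is a short assembly of Theorem~\ref{th5} (for (i)), Theorem~\ref{th8} (for (ii)), and the chain of inclusions from Theorem~\ref{ThCrown}. The one subtle point worth mentioning explicitly is that in (ii) one cannot shortcut via $\ker(G)=\mathrm{core}(G)$, because the excerpt already exhibits a K\"onig-Egerv\'ary graph (the graph $G_{1}$ of Figure~\ref{fig222}) with $\ker(G_{1})\neq\mathrm{core}(G_{1})$; this is why the detour through Theorem~\ref{th8} to establish criticality of $\mathrm{core}(G)$ directly is necessary.
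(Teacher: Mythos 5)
Your proof is correct. Part (i) is exactly the paper's argument: the paper obtains the corollary by ``combining Theorems \ref{th5} and \ref{ThCrown}'', i.e., in a K\"{o}nig-Egerv\'{a}ry graph every maximum independent set is critical, and every critical independent set is a crown. For part (ii) you actually do a bit more than the paper's one-line attribution: Theorem \ref{th5} alone does not make $\mathrm{core}(G)$ critical (nothing stated in the excerpt says that an intersection of critical independent sets is critical), so your explicit appeal to Theorem \ref{th8} to get $d(\mathrm{core}(G))=d(G)$, and hence $\mathrm{core}(G)\in CritIndep(G)$, supplies precisely the step the paper leaves implicit before the chain $CritIndep(G)\subseteq Crown(G)\subseteq\Psi(G)$ can be applied. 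Your remark that one cannot shortcut via $\ker(G)=\mathrm{core}(G)$, in view of the graph $G_{1}$ of Figure \ref{fig222}, is also well taken. In short: the same route as the paper for (i), and for (ii) a more complete justification of what the paper compresses into a single sentence.
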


\begin{corollary}
\cite{AFLS2007} If $G$ is a graph with a crown $S$, then there is a vertex
cover $A$ of $G$ of minimum size that contains all the vertices in $A$ and
none of the vertices in $S$.
\end{corollary}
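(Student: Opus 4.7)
The plan is to exploit the chain $Crown(G)\subseteq\Psi(G)$ just established in Theorem \ref{ThCrown}, together with the Nemhauser--Trotter result (Theorem \ref{th111}), to produce the desired vertex cover as the complement of a well-chosen maximum independent set. I read the statement as saying that the minimum vertex cover $A$ contains all the vertices of $N(S)$ (not of $A$ itself, which is the apparent typo) and none of the vertices of $S$; this is the formulation matching \cite{AFLS2007}.

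First I would invoke Theorem \ref{ThCrown} to note that $S\in\Psi(G)$, i.e., $S$ is a local maximum independent set. Then by Theorem \ref{th111}, $S$ extends to some $I\in\Omega(G)$. Set $A=V(G)-I$; since $|I|=\alpha(G)$, the set $A$ is automatically a vertex cover of minimum size.

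Next I would check the two required inclusions for $A$. Because $S\subseteq I$, we have $A\cap S=\emptyset$, so $A$ contains no vertex of $S$. For the other inclusion, suppose $v\in N(S)$; then $v$ is adjacent to some $s\in S\subseteq I$, so $v\notin I$ by independence of $I$, hence $v\in A$. This gives $N(S)\subseteq A$, completing the proof.

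I do not anticipate any real obstacle here: the argument is essentially a one-line consequence of Theorem \ref{th111} applied via the inclusion $Crown(G)\subseteq\Psi(G)$. The only subtlety is the reading of the statement discussed above, and the observation that no direct use of the matching from $N(S)$ into $S$ is needed in this short proof, since it has already been absorbed into Theorem \ref{ThCrown}.
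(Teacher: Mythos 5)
Your proof is correct and follows essentially the same route as the paper's: apply Theorem \ref{ThCrown} to get $S\in\Psi(G)$, extend $S$ to a maximum independent set via Theorem \ref{th111}, and take the complement as the minimum vertex cover. You are also right that the statement contains a typo (it should read ``all the vertices in $N(S)$''), and your explicit verification that $N(S)$ lies in the complement is a small but welcome addition to the paper's more terse argument.
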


\begin{proof}
By Theorem \ref{ThCrown}, it follows that $S\in\Psi(G)$. Hence, by Theorem
\ref{th111}, there is some $B\in\Omega\left(  G\right)  $, such that
$S\subseteq B$. Therefore, $A\subseteq V\left(  G\right)  -B$, and this
completes the proof, because $V\left(  G\right)  -B$ is a minimum vertex cover
of $G$.
\end{proof}

Once a crown $S$ is found in a graph $G$, one can remove the vertices of
$S\cup N(S)$ and their adjacent edges, to get a smaller graph. In this way,
the problem size is now $n_{0}=n-\left\vert S\right\vert -\left\vert
N(S)\right\vert $, and the parameter size is $k_{0}=k-\left\vert H\right\vert
$.

\section{Critical sets and crowns}

\begin{lemma}
\label{lem4}If $A,B\in Crown(G)$, then there exists a perfect matching between
$A\cap N\left(  B\right)  $ and $B\cap N\left(  A\right)  $.
\end{lemma}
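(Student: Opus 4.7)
The plan is to leverage the two matchings witnessing that $A$ and $B$ are crowns, and to show that each of them, after a suitable restriction, yields an injection between the two sets $X=A\cap N(B)$ and $Y=B\cap N(A)$. Because these sets are independent (as subsets of the independent sets $A$ and $B$), the edges of either restricted matching live in the bipartite subgraph induced between $X$ and $Y$, so once the two injections have been exhibited, cardinality forces one of them to be perfect.

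First, I would fix a matching $M_A$ from $N(A)$ into $A$ and a matching $M_B$ from $N(B)$ into $B$; these exist by the crown hypothesis. Denote by $\phi_A\colon N(A)\to A$ and $\phi_B\colon N(B)\to B$ the injective partner maps they induce. The key observation is that $\phi_A$ sends $Y$ into $X$: for any $b\in Y=B\cap N(A)$, the vertex $\phi_A(b)$ lies in $A$, is adjacent to $b\in B$, and therefore belongs to $N(B)$, which places it in $A\cap N(B)=X$. By symmetry, $\phi_B$ sends $X$ into $Y$.

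Since $\phi_A$ and $\phi_B$ are both restrictions of matching correspondences, both are injective, and the two directions together yield $|X|=|Y|$. The set of edges $\{x\phi_B(x):x\in X\}$ is then a matching of size $|X|$ contained in $M_B$, all of whose edges join a vertex of $X$ to a vertex of $Y$. Because $|X|=|Y|$, this matching saturates both sides and is therefore a perfect matching between $A\cap N(B)$ and $B\cap N(A)$, as required.

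There is not really a hard step here beyond keeping the bookkeeping straight: the only point that needs a sentence of verification is the inclusion $\phi_A(Y)\subseteq X$ (and its symmetric counterpart), which uses both that the partner lies in $A$ and that it is adjacent to a vertex of $B$.
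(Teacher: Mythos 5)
Your proposal is correct and follows essentially the same argument as the paper: restrict $M_A$ and $M_B$ to obtain injections between $A\cap N(B)$ and $B\cap N(A)$ in both directions, conclude the two sets have equal cardinality, and hence that either restriction is a perfect matching. You merely make explicit the verification (that the partner of a vertex of $B\cap N(A)$ lands in $A\cap N(B)$) that the paper leaves implicit.
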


\begin{proof}
Consider matchings: $M_{A}$ from $N\left(  A\right)  $ into $A$ and $M_{B}$
from $N\left(  B\right)  $ into $B$. Since $B\cap N\left(  A\right)  \subseteq
N\left(  A\right)  $, the matching $M_{A}$ induces an injective mapping
$M_{1}$ from $B\cap N\left(  A\right)  $ into $A\cap N\left(  B\right)  $.
Similarly, the corresponding restriction of $M_{B}$, say $M_{2}$, maps $A\cap
N\left(  B\right)  $ into $B\cap N\left(  A\right)  $. Thus, it follows that
$\left\vert B\cap N\left(  A\right)  \right\vert =\left\vert A\cap N\left(
B\right)  \right\vert $, and, finally, both $M_{1}$ and $M_{2}$ are bijections.
\end{proof}

Theorem \ref{ThCrown} and Lemma \ref{lem4} imply the following.

\begin{corollary}
\cite{Larson2007} If $A,B\in CritIndep(G)$, then there exists a perfect
matching between $A\cap N\left(  B\right)  $ and $B\cap N\left(  A\right)  $.
\end{corollary}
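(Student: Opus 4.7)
The plan is to observe that this corollary follows immediately by chaining the two results cited just before its statement. The strategy has essentially no obstacles; the whole content of the argument is that critical independent sets are a special case of crowns, to which Lemma \ref{lem4} applies verbatim.

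First, I would invoke Theorem \ref{ThCrown}, which gives the inclusion $CritIndep(G)\subseteq Crown(G)$. Hence, starting from the hypothesis $A,B\in CritIndep(G)$, I immediately obtain $A,B\in Crown(G)$: both $A$ and $B$ are independent sets each admitting a matching saturating its own neighborhood and mapping it into itself.

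Second, I would apply Lemma \ref{lem4} to the pair $A,B\in Crown(G)$. The lemma provides, for any two crowns, a perfect matching between $A\cap N(B)$ and $B\cap N(A)$ (built from restricting the two crown-matchings $M_A$ and $M_B$ and checking the induced injections collapse into mutual bijections). Applying it to our specific $A$ and $B$ yields exactly the conclusion of the corollary.

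Since the argument is a direct substitution into pre-established statements, the only thing worth pointing out is that no extra assumption (such as the ambient graph being K\"{o}nig-Egerv\'{a}ry, or the critical sets being maximum) is needed: the inclusion in Theorem \ref{ThCrown} holds in every graph, and Lemma \ref{lem4} works for arbitrary crowns. Thus the proof reduces to citing these two results in sequence, and there is no delicate step to worry about.
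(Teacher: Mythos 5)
Your proposal is correct and matches the paper exactly: the paper derives this corollary by the same two-step chain, first using Theorem \ref{ThCrown} to conclude $A,B\in Crown(G)$ and then applying Lemma \ref{lem4}. Nothing further is needed.
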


\begin{remark}
It seems amusing that the matching from Lemma \ref{lem4} exists for so-called
side critical independent sets of a bipartite graph \cite{LevMan2011b}.
\end{remark}

Notice that, according to Theorems \ref{th5} and \ref{ThKEMatch}, in a
K\"{o}nig-Egerv\'{a}ry graph $G$, every $S\in\Omega\left(  G\right)  $ is both
a maximum critical independent set and a maximum crown.

\begin{theorem}
\label{th9}$\left(  V\left(  G\right)  ,Crown(G)\right)  $ is an augmentoid
for every graph $G$.
\end{theorem}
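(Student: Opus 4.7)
The plan is to construct explicit witnesses $A \subseteq X-Y$ and $B \subseteq Y-X$ such that $Y \cup A$ and $X \cup B$ are crowns of equal cardinality. Fix matchings $M_{X}$ from $N(X)$ into $X$ and $M_{Y}$ from $N(Y)$ into $Y$, and set $C := X \cap N(Y)$ and $D := Y \cap N(X)$. Since $Y$ is independent, $Y \cap N(Y) = \emptyset$, which forces $C \subseteq X-Y$, and symmetrically $D \subseteq Y-X$. Moreover, the argument of Lemma \ref{lem4} shows that $M_{X}$ restricts to a bijection $M_{X}|_{D} : D \to C$; in particular $\left\vert C \right\vert = \left\vert D \right\vert$.

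Now define $A := (X-Y) - C$ and $B := (Y-X) - D$. Then
$$
Y \cup A = (X \cup Y) - C \quad \text{and} \quad X \cup B = (X \cup Y) - D,
$$
so $\left\vert Y \cup A \right\vert = \left\vert X \cup B \right\vert$ by $\left\vert C \right\vert = \left\vert D \right\vert$. It remains to check that $Z := Y \cup A \in Crown(G)$, since the analogous statement for $X \cup B$ will follow by symmetry. Independence of $Z$ is immediate: both $Y$ and $A \subseteq X$ are independent, and $A \cap N(Y) = \emptyset$ by construction rules out edges between $A$ and $Y$.

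The core of the argument is producing a matching $\phi$ from $N(Z) = N(Y) \cup N(A)$ into $Z = Y \cup A$. The natural attempt is $\phi(v) := M_{Y}(v)$ for $v \in N(Y)$ and $\phi(v) := M_{X}(v)$ for $v \in N(A) - N(Y)$. The main obstacle, and really the only subtle step, is verifying that $M_{X}(v) \in A$ in the second case. Such a $v$ lies outside $X$ (by independence of $X$) and outside $Y$ (because any $y \in Y$ adjacent to $A$ would put an element of $A$ into $N(Y)$, contradicting $A \cap N(Y) = \emptyset$). This rules out $M_{X}(v) \in C$: if it held, then by the bijection $M_{X}|_{D} : D \to C$ the unique $M_{X}$-preimage of $M_{X}(v)$ would lie in $D \subseteq Y$, contradicting $v \notin Y$. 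It also rules out $M_{X}(v) \in X \cap Y$, since that would force $v \in N(Y)$. Hence $M_{X}(v) \in X - C - Y = A$, as required. Injectivity of $\phi$ then follows from the injectivity of $M_{X}$ and $M_{Y}$ together with $Y \cap A = \emptyset$, which completes the verification that $Z$ is a crown and hence establishes the augmentation property.
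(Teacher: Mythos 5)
Your proof is correct and follows essentially the same route as the paper: your sets $Y\cup A=Y\cup\left(X-N\left[Y\right]\right)$ and $X\cup B=X\cup\left(Y-N\left[X\right]\right)$ are exactly the paper's witnesses $B^{1}$ and $A^{1}$, the matching for $Y\cup A$ is assembled from $M_{Y}$ on $N\left(Y\right)$ plus the restriction of $M_{X}$ to the remaining neighbors just as in the paper, and the cardinality count rests on the same bijection between $X\cap N\left(Y\right)$ and $Y\cap N\left(X\right)$ from Lemma \ref{lem4}. Your verification that $M_{X}\left(v\right)$ lands in $A$ is a slightly more explicit rendering of the paper's two exclusion observations, but the argument is the same.
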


\begin{proof}
Suppose $A$ and $B$ are crowns. Let $A^{1}=A\cup B_{0}$, where $B_{0}%
=B-N\left[  A\right]  $.

First, we show that $A^{1}$ is a crown, i.e., there is a matching $M^{1}$ from
$N\left(  A^{1}\right)  =N\left(  A\right)  \cup N\left(  B_{0}\right)  $ into
$A^{1}$. Let us define it on the set $A\subseteq A^{1}$ as the matching
$M_{A}$ from $N\left(  A\right)  $ into $A$, which exists by definition of a
crown. We are left with $N\left(  B_{0}\right)  -N\left(  A\right)  $ to
handle its part in $M^{1}$.

There are no edges between $N\left(  B_{0}\right)  -N\left(  A\right)  $ and
$A\cap B$.

The set $\left(  A\cap N\left(  B\right)  \right)  \cap\left(  N\left(
B_{0}\right)  -N\left(  A\right)  \right)  $ is empty, because $B_{0}\cup A$
is independent.

Since $B$ is a crown, there is a matching $M_{B}$ from $N\left(  B\right)  $
into $B$, which maps the set $A\cap N\left(  B\right)  $ onto the set $B\cap
N\left(  A\right)  $ (as in the proof of Lemma \ref{lem4}). Thus, the
restriction $M_{B}^{\ast}$ of $M_{B}$ to $N\left(  B_{0}\right)  -N\left(
A\right)  $, maps $N\left(  B_{0}\right)  -N\left(  A\right)  $ into $B_{0}$.
Consequently, the set $M_{A}\cup M_{B}^{\ast}$ is a matching from $N\left(
A^{1}\right)  =N\left(  A\right)  \cup N\left(  B_{0}\right)  $ into $A^{1}$.

Similarly, if $B^{1}=B\cup A_{0}$, with $A_{0}=A-N\left[  B\right]  $, then
$B^{1}$ is a crown as well.

Second, we show that\textit{ }$\left\vert A^{1}\right\vert =$ $\left\vert
B^{1}\right\vert $. Indeed,%
\begin{align*}
A^{1}  &  =A_{0}\cup\left(  A\cap N\left(  B\right)  \right)  \cup\left(
A\cap B\right)  \cup B_{0},\\
B^{1}  &  =B_{0}\cup\left(  B\cap N\left(  A\right)  \right)  \cup\left(
B\cap A\right)  \cup A_{0}.
\end{align*}
Hence, in accordance with Lemma \ref{lem4} we obtain
\begin{gather*}
\left\vert A^{1}\right\vert =\left\vert A_{0}\right\vert +\left\vert A\cap
N\left(  B\right)  \right\vert +\left\vert A\cap B\right\vert +\left\vert
B_{0}\right\vert =\\
=\left\vert B_{0}\right\vert +\left\vert B\cap N\left(  A\right)  \right\vert
+\left\vert A\cap B\right\vert +\left\vert A_{0}\right\vert =\left\vert
B^{1}\right\vert ,
\end{gather*}
as required.
\end{proof}

Now, Proposition \ref{claim1} and Theorem \ref{th9} imply the following.

\begin{corollary}
\label{cor3}Every crown may be enlarged to a maximum crown.
\end{corollary}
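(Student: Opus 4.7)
The plan is to derive Corollary \ref{cor3} as an immediate consequence of the two results it explicitly cites, namely Proposition \ref{claim1} and Theorem \ref{th9}. The conceptual content was already extracted in proving these two statements; the corollary is just the specialization of the general augmentoid enlargement principle to the particular augmentoid of crowns.

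More concretely, I would first appeal to Theorem \ref{th9} to assert that the pair $(V(G), Crown(G))$ is an augmentoid, so that every crown of $G$ qualifies as a feasible set of this augmentoid. Then I would apply Proposition \ref{claim1} to this augmentoid: given an arbitrary feasible set $X$ (a crown) and a maximum feasible set $Y$ (a maximum crown, whose existence is guaranteed by the finiteness of $V(G)$), the augmentation property produces $B \subseteq Y - X$ with $X \cup B \in Crown(G)$ and $|X \cup B| = |Y|$. Thus $X \cup B$ is a maximum crown containing $X$, which is the required enlargement.

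There is really no obstacle here, since both the heavy lifting (verifying the augmentation axiom for crowns) and the abstract enlargement argument have already been carried out earlier in the excerpt. The only mild thing to note is that a maximum crown exists at all, which is immediate because $Crown(G)$ is a nonempty subfamily of $2^{V(G)}$ with $V(G)$ finite, so we can take $Y$ to realize $\max\{|S| : S \in Crown(G)\}$.
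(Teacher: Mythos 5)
Your proposal is correct and follows exactly the paper's own route: the corollary is stated there as an immediate consequence of Proposition \ref{claim1} applied to the augmentoid $\left(V\left(G\right), Crown(G)\right)$ furnished by Theorem \ref{th9}. Your additional remark on the existence of a maximum crown via finiteness is harmless and implicit in the paper.
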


Actually, if, in the proof of Theorem \ref{th9}, the set $B$ happens to be a
maximum crown, we infer that $A_{0}=\emptyset$, and, consequently,
$A^{1}\subseteq N\left[  B\right]  $. This leads to the following.

\begin{corollary}
\label{cor4}Every crown is included in the closed neighborhood of each maximum crown.
\end{corollary}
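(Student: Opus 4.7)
My plan is to reuse the augmentation construction from the proof of Theorem~\ref{th9}. Given an arbitrary crown $A$ and a maximum crown $B$, I would form $A_{0} = A - N[B]$ and $B^{1} = B \cup A_{0}$, exactly as in that proof. The same argument given there already exhibits $B^{1}$ as a crown: its certifying matching is $M_{B}$ (from $N(B)$ into $B$), extended by the restriction of the matching $M_{A}$ to those vertices of $N(A_{0})$ that lie outside $N(B)$, which are mapped into $A_{0}$. So I can invoke Theorem~\ref{th9} to get that $B^{1} \in Crown(G)$, without redoing any matching bookkeeping.

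Once $B^{1}$ is known to be a crown, the rest is a one-line counting argument. Since $A_{0} \subseteq A - N[B]$ and $B \subseteq N[B]$, one has $A_{0} \cap B = \emptyset$, so $\left\vert B^{1}\right\vert = \left\vert B\right\vert + \left\vert A_{0}\right\vert$. The hypothesis that $B$ is a maximum crown then forces $\left\vert B^{1}\right\vert \leq \left\vert B\right\vert$, which yields $\left\vert A_{0}\right\vert = 0$, i.e., $A_{0} = \emptyset$. Equivalently, $A \subseteq N[B]$, which is the claim.

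I do not expect any genuine obstacle here: all the nontrivial work, namely the construction of the matching witnessing that $B^{1}$ is a crown, has already been done inside Theorem~\ref{th9}. The only subtle point worth flagging is that $A_{0} \cap B = \emptyset$ relies on $B \subseteq N[B]$, not just on $A_{0} \subseteq A - N(B)$; this is precisely the reason that the corollary speaks of the \emph{closed} neighborhood $N[B]$ rather than the open one. Note also that the symmetric use of Theorem~\ref{th9} with $A^{1} = A \cup B_{0}$ is not needed for this direction, consistent with Corollary~\ref{cor4} being an asymmetric statement about one maximum crown controlling all others.
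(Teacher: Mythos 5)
Your proposal is correct and follows exactly the paper's own route: the authors likewise observe that, taking $B$ maximal in the construction of Theorem~\ref{th9}, the crown $B^{1}=B\cup A_{0}$ forces $A_{0}=A-N[B]=\emptyset$. The only difference is that you spell out the disjointness count $\left\vert B^{1}\right\vert =\left\vert B\right\vert +\left\vert A_{0}\right\vert $, which the paper leaves implicit.
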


Combining Corollary \ref{cor4} and Theorem \ref{ThCrown}, we obtain the following.

\begin{corollary}
Every critical set is included in the closed neighborhood of each maximum crown.
\end{corollary}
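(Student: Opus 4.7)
The plan is to obtain the statement as a one-line consequence, exactly in the manner advertised by the paper (``Combining Corollary \ref{cor4} and Theorem \ref{ThCrown}''). First I would fix an arbitrary critical set $S$ of $G$; in the context of this section ``critical'' should be read as ``critical independent'', since the inclusion chain of Theorem \ref{ThCrown} starts with $CritIndep(G)$. Theorem \ref{ThCrown} then yields $S\in Crown(G)$ directly.

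Next, let $B$ be any maximum crown of $G$. By Corollary \ref{cor4}, every crown --- and in particular the crown $S$ just produced --- is included in the closed neighborhood of each maximum crown, so $S\subseteq N[B]$. Since $B$ was an arbitrary maximum crown, this gives exactly the claim. No additional ingredients are needed: the statement is the concatenation of the two previously established inclusions $CritIndep(G)\subseteq Crown(G)$ and ``every crown is contained in $N[B]$ for each maximum crown $B$''.

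The only point I would pause on --- and the nearest thing to an obstacle --- is terminological. If ``critical set'' were meant in the broader sense of an arbitrary $X\subseteq V(G)$ (not necessarily independent) with $|X|-|N(X)|=d(G)$, then Theorem \ref{ThCrown} would not apply directly, since $CritIndep(G)$ only contains independent sets. In that situation the plan would require an extra reduction step, typically by passing from $X$ to the independent set $X\setminus N(X)$ and invoking $d(G)=id(G)$ to argue that a critical independent set whose closed neighborhood contains $X$ can always be extracted. Given the paper's explicit ``Combining\ldots'' phrasing, however, I would treat the narrower reading as the intended one, which collapses the proof to the two-step chase above.
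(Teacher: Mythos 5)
Your proof is correct and is precisely the paper's intended argument: the corollary is presented as an immediate combination of Theorem \ref{ThCrown} (every critical independent set is a crown) with Corollary \ref{cor4} (every crown lies in $N[B]$ for each maximum crown $B$), which is the two-step chase you describe. Your terminological caveat is reasonable, but the paper's ``Combining\ldots'' phrasing confirms the narrower reading, so no extra reduction is needed.
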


\begin{theorem}
\label{cor8}If $A,B\in MaxCrown(G)$, then both $d(A)=d(B)$ and $N[A]=N[B]$.
\end{theorem}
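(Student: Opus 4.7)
My plan is to promote the set-inclusion $A\subseteq N[B]$ furnished by Corollary \ref{cor4} to the equality $N[A]=N[B]$ by exploiting the K\"{o}nig-Egerv\'{a}ry structure of $G[N[B]]$ (and, by symmetry, of $G[N[A]]$).

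First I would invoke Corollary \ref{cor4} to obtain $A\subseteq N[B]$ and $B\subseteq N[A]$. Next, Corollary \ref{Cor1} gives that $G[N[B]]$ is K\"{o}nig-Egerv\'{a}ry; combined with the matching from $N(B)$ into $B$ and the partition $N[B]=B\cup N(B)$, this pins down $\alpha(G[N[B]])=|B|$ and $\mu(G[N[B]])=|N(B)|$. Since $A$ is independent in $G[N[B]]$ and $|A|=|B|$, it follows that $A\in\Omega(G[N[B]])$. By Theorem \ref{th5}, $A$ is then a critical independent set of $G[N[B]]$, and Theorem \ref{th8} yields
\[
|A|-|N(A)\cap N[B]|=d_{G[N[B]]}(A)=\alpha(G[N[B]])-\mu(G[N[B]])=|B|-|N(B)|,
\]
so $|N(A)\cap N[B]|=|N(B)|$.

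Now I would decompose $N(A)\cap N[B]=(N(A)\cap B)\cup(N(A)\cap N(B))$. Lemma \ref{lem4} gives $|N(A)\cap B|=|A\cap N(B)|$, so the cardinality identity becomes $|N(A)\cap N(B)|=|N(B)|-|A\cap N(B)|=|N(B)-A|$. Because $A$ is independent, $N(A)\cap A=\emptyset$, hence $N(A)\cap N(B)\subseteq N(B)-A$; the equality of cardinalities then forces $N(A)\cap N(B)=N(B)-A$, i.e., $N(B)\subseteq A\cup N(A)=N[A]$. Together with $B\subseteq N[A]$, this yields $N[B]\subseteq N[A]$, and by symmetry $N[A]=N[B]$. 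The equality $d(A)=d(B)$ is then immediate, since $|N[A]|=|A|+|N(A)|$ and $|N[B]|=|B|+|N(B)|$ together with $|A|=|B|$ force $|N(A)|=|N(B)|$.

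The main obstacle is finding the right cardinality identity that upgrades a mere set-inclusion $A\subseteq N[B]$ into a closed-neighborhood equality $N[A]=N[B]$; recognizing that $A$ is a maximum (hence by Theorem \ref{th5} critical) independent set of the K\"{o}nig-Egerv\'{a}ry subgraph $G[N[B]]$ is exactly what invokes Theorem \ref{th8} and delivers the crucial count $|N(A)\cap N[B]|=|N(B)|$.
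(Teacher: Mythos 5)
Your proof is correct, and it shares the paper's skeleton up to a point: both arguments start from Corollary \ref{cor4} (mutual containment in closed neighborhoods), use Corollary \ref{Cor1} to see that $H_2=G[N[B]]$ is a K\"{o}nig-Egerv\'{a}ry graph with $\alpha(H_2)=|B|$, place $A$ in $\Omega(H_2)$ via $|A|=|B|$, and invoke Theorem \ref{th5} to make $A$ critical in $H_2$. The divergence is in how the key count is cashed in. The paper works with the difference function in both induced subgraphs: it observes $d_{H_1}(A)=d_G(A)$ and $d_{H_1}(A)=d_{H_1}(B)=|B|-|N_{H_1}(B)|\geq d_G(B)$, runs the symmetric inequality to get $d(A)=d(B)$ \emph{first}, and then reads $N_{H_1}(B)=N_G(B)$ (hence $N[B]\subseteq N[A]$) off the equality case; Lemma \ref{lem4} is not used there at all. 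You instead fix the count $|N(A)\cap N[B]|=|N(B)|$ (via Theorem \ref{th8}, which is a mild detour since $d(H_2)=d_{H_2}(B)=|B|-|N(B)|$ is already immediate), decompose $N(A)\cap N[B]$ over $B$ and $N(B)$, and use the perfect matching of Lemma \ref{lem4} to turn the count into the containment $N(B)-A\subseteq N(A)$; this yields $N[A]=N[B]$ \emph{first} and $d(A)=d(B)$ as a one-line corollary. Your route costs one extra lemma and a slightly more delicate bookkeeping step (the disjoint decomposition plus the observation $N(A)\cap N(B)\subseteq N(B)-A$), but it makes the combinatorial mechanism behind the neighborhood equality more explicit, whereas the paper's two-sided inequality is shorter and keeps everything at the level of critical differences.
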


\begin{proof}
Let $A,B\in MaxCrown(G)$, $H_{1}=G[N[A]]$ and $H_{2}=G[N[B]]$. According to
Corollary \ref{Cor1}, both $H_{1}$ and $H_{2}$ are K\"{o}nig-Egerv\'{a}ry graphs.

By Corollary \ref{cor4}, both $A\subseteq N_{G}[B]$ and $B\subseteq N_{G}[A]$.
Hence $A\in\Omega(H_{2})$ and $B\in\Omega(H_{1})$, because $A$, $B$ are
independent and $\left\vert A\right\vert =\left\vert B\right\vert $. It
follows that%
\begin{align*}
d_{H_{1}}(A)  &  =\left\vert A\right\vert -\left\vert N_{H_{1}}\left(
A\right)  \right\vert =\left\vert A\right\vert -\left\vert N_{G}\left(
A\right)  \right\vert =d_{G}(A)=d(A)\text{ and}\\
d_{H_{2}}(B)  &  =\left\vert B\right\vert -\left\vert N_{H_{2}}\left(
B\right)  \right\vert =\left\vert B\right\vert -\left\vert N_{G}\left(
B\right)  \right\vert =d_{G}(B)=d(B).
\end{align*}

By Theorem \ref{th5}, $d_{H_{1}}(A)=d_{H_{1}}(B)$ and $d_{H_{2}}(B)=d_{H_{2}%
}(A)$, because $A$ and $B$ are maximum independent sets in both $H_{1}$ and
$H_{2}$.

Since $N_{H_{1}}\left(  B\right)  \subseteq N_{G}\left(  B\right)  $, we get
\begin{align*}
d(A)  &  =d_{G}(A)=d_{H_{1}}(A)=d_{H_{1}}(B)=\\
\left\vert B\right\vert -\left\vert N_{H_{1}}\left(  B\right)  \right\vert  &
\geq\left\vert B\right\vert -\left\vert N_{G}\left(  B\right)  \right\vert
=d_{G}(B)=d(B).
\end{align*}

Since $N_{H_{2}}\left(  A\right)  \subseteq N_{G}\left(  A\right)  $, we
obtain
\begin{align*}
d(B)  &  =d_{G}(B)=d_{H_{2}}(B)=d_{H_{2}}(A)=\\
\left\vert A\right\vert -\left\vert N_{H_{2}}\left(  A\right)  \right\vert  &
\geq\left\vert A\right\vert -\left\vert N_{G}\left(  A\right)  \right\vert
=d_{G}(A)=d(A).
\end{align*}

Consequently, we conclude that $d(A)=d(B)$.

Moreover, we infer that $\left\vert N_{H_{1}}\left(  B\right)  \right\vert
=\left\vert N_{G}\left(  B\right)  \right\vert $, which implies $N_{H_{1}%
}\left(  B\right)  =N_{G}\left(  B\right)  $. Consequently, we get that
\[
N_{G}\left[  B\right]  =B\cup N_{G}\left(  B\right)  =B\cup N_{H_{1}}\left(
B\right)  \subseteq N_{G}\left[  A\right]  .
\]

Similarly, one can deduce that $N_{G}\left[  A\right]  =A\cup N_{G}\left(
A\right)  =A\cup N_{H_{2}}\left(  A\right)  \subseteq N_{G}\left[  B\right]  $.

Therefore, $N_{G}[A]=N_{G}[B]$.
\end{proof}

\begin{lemma}
\label{lem3}If $A\subseteq B$ and $B$ is a crown, then $d\left(  A\right)
\leq d\left(  B\right)  $.
\end{lemma}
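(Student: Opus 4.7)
The plan is to use the defining matching of the crown $B$ to control how much $N(\cdot)$ can grow when we pass from $A$ to the superset $B$. Concretely, since $A\subseteq B$ we automatically have $N(A)\subseteq N(B)$, so the inequality $d(A)\le d(B)$ is equivalent to $|N(B)|-|N(A)|\le |B|-|A|$. I want to exhibit an injection from $N(B)\setminus N(A)$ into $B\setminus A$ to get this.

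The injection will come from the crown matching. Let $M$ be a matching from $N(B)$ into $B$, which exists because $B\in Crown(G)$; write $M(v)\in B$ for the vertex matched to $v\in N(B)$. The key observation is that if $v\in N(B)\setminus N(A)$, then $v$ has no neighbor in $A$, so in particular the neighbor $M(v)\in B$ that $M$ pairs it with cannot lie in $A$; hence $M(v)\in B\setminus A$. Thus $M$ restricted to $N(B)\setminus N(A)$ is an injection into $B\setminus A$, yielding $|N(B)\setminus N(A)|\le |B\setminus A|$.

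Using $A\subseteq B$ and $N(A)\subseteq N(B)$, this last inequality rearranges to $|N(B)|-|N(A)|\le |B|-|A|$, i.e.\ $|A|-|N(A)|\le |B|-|N(B)|$, which is exactly $d(A)\le d(B)$.

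The argument is short and I do not expect a serious obstacle; the only subtle point is making sure that the matched partner $M(v)$ of a vertex $v\in N(B)\setminus N(A)$ really lies outside $A$, which follows directly from the fact that $v$ has no neighbor in $A$ together with the adjacency required of a matching edge. Nothing requires $A$ itself to be independent or to be a crown, which is consistent with the statement.
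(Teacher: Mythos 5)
Your proof is correct and follows essentially the same route as the paper: both use the crown matching of $B$ to map $N(B)-N(A)$ injectively into $B-A$ (via the observation that a vertex outside $N(A)$ cannot be matched into $A$), and then conclude $\left\vert N(B)\right\vert -\left\vert N(A)\right\vert \leq\left\vert B\right\vert -\left\vert A\right\vert$. The paper phrases the final step as an explicit expansion of $d(B)$, but the content is identical.
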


\begin{proof}
Since $B$ is a crown, there is a matching $M_{B}$ from $N\left(  B\right)  $
into $B$. The matching $M_{B}$ maps $N\left(  B\right)  -N\left(  A\right)  $
into $B-A$. Therefore,
\begin{gather*}
d\left(  B\right)  =\left\vert B\right\vert -\left\vert N\left(  B\right)
\right\vert =\left\vert A\right\vert +\left\vert B-A\right\vert -\left(
\left\vert N\left(  A\right)  \right\vert +\left\vert N\left(  B\right)
-N\left(  A\right)  \right\vert \right)  =\\
=\left\vert A\right\vert -\left\vert N\left(  A\right)  \right\vert
+\left\vert B-A\right\vert -\left\vert N\left(  B\right)  -N\left(  A\right)
\right\vert \geq\left\vert A\right\vert -\left\vert N\left(  A\right)
\right\vert =d(A),
\end{gather*}
as required.
\end{proof}

\begin{theorem}
\label{th11}$\left(  V\left(  G\right)  ,CritIndep(G)\right)  $ is an
augmentoid for every graph $G$.
\end{theorem}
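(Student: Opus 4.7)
The plan is to re-use the construction from Theorem \ref{th9} verbatim and then upgrade its conclusion from ``crown'' to ``critical independent set'' by means of Lemma \ref{lem3} together with the identity $d(G)=id(G)$ of Zhang.

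Given $X,Y\in CritIndep(G)$, Theorem \ref{ThCrown} immediately yields $X,Y\in Crown(G)$. I would then set
\[
B:=Y-N[X]\subseteq Y-X,\qquad A:=X-N[Y]\subseteq X-Y,
\]
and invoke the proof of Theorem \ref{th9} applied to the pair $(X,Y)$ in place of $(A,B)$ there. That proof shows that $X\cup B$ and $Y\cup A$ are crowns of $G$ satisfying $\left\vert X\cup B\right\vert =\left\vert Y\cup A\right\vert $; in particular, both are independent and $A,B$ meet the disjointness requirements demanded by the augmentation axiom.

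The second step is to promote ``crown'' to ``critical''. Since $X\subseteq X\cup B$ and $X\cup B\in Crown(G)$, Lemma \ref{lem3} gives $d(X)\leq d(X\cup B)$. But $X\in CritIndep(G)$ means $d(X)=id(G)$, which equals $d(G)$ by Zhang, while $X\cup B$ being independent forces $d(X\cup B)\leq id(G)=d(G)$. These two inequalities collapse to $d(X\cup B)=d(G)$, hence $X\cup B\in CritIndep(G)$. The same three-line argument with $Y\subseteq Y\cup A$ shows $Y\cup A\in CritIndep(G)$, completing verification of the augmentation property.

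I do not foresee any real obstacle here: Theorem \ref{th9} already supplied the matching manipulations and the cardinality balance, and the only extra observation is that any crown containing a critical independent set is itself critical, which is a one-line consequence of Lemma \ref{lem3} and $d(G)=id(G)$. The only subtle point to keep in mind is that one must use the independence of $X\cup B$ (automatic for crowns) in order to apply the bound $d(X\cup B)\leq id(G)$; without that, the inequality from Lemma \ref{lem3} would not close into an equality.
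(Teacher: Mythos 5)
Your proposal is correct and follows essentially the same route as the paper: both apply the construction of Theorem \ref{th9} to obtain the crowns $X\cup(Y-N[X])$ and $Y\cup(X-N[Y])$ of equal cardinality, and then upgrade them to critical independent sets via Lemma \ref{lem3} combined with the maximality of $d(X)$. Your explicit appeal to Zhang's identity $d(G)=id(G)$ is harmless but not strictly needed, since $d(G)$ is already the maximum of $d(\cdot)$ over all subsets, so $d(X\cup B)\leq d(G)=d(X)$ holds without invoking independence.
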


\begin{proof}
Suppose $A$ and $B$ are critical independent sets. Hence they are crowns. By
Theorem \ref{th9}, $A^{1}=A\cup\left(  B-N\left[  A\right]  \right)  $ and
$B^{1}=B\cup\left(  A-N\left[  B\right]  \right)  $ are crowns as well and
$\left\vert A^{1}\right\vert =\left\vert B^{1}\right\vert $. Since $A\subseteq
A^{1}$, Lemma \ref{lem3} implies that $d\left(  A\right)  \leq d\left(
A^{1}\right)  $. On the other hand, $d\left(  A^{1}\right)  \leq d\left(
A\right)  $, because $A$ is critical. Thus $A^{1},B^{1}\in$ $CritIndep(G)$.
\end{proof}

Now, Proposition \ref{claim1} and Theorem \ref{th11} imply the following.

\begin{corollary}
\cite{Larson2011} An inclusion maximal critical independent set is a maximum
critical independent set.
\end{corollary}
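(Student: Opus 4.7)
The plan is to derive the corollary directly from the two results it is stated after: Proposition \ref{claim1} and Theorem \ref{th11}. Since Theorem \ref{th11} establishes that $(V(G), CritIndep(G))$ is an augmentoid, Proposition \ref{claim1} immediately guarantees that every critical independent set extends to a maximum critical independent set. Combining this with inclusion-maximality should close the argument in essentially one step.

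More concretely, I would start by fixing an inclusion maximal critical independent set $A$ and a maximum critical independent set $B$; the latter exists because $MaxCritIndep(G) \neq \emptyset$ (e.g., take any critical independent set of largest cardinality). Apply the augmentation property of the pair $(V(G), CritIndep(G))$ from Theorem \ref{th11} with the roles of $X=A$ and $Y=B$: the augmentation yields a subset $B' \subseteq B - A$ such that $A \cup B' \in CritIndep(G)$ and $|A \cup B'| = |B|$. Here the point of invoking Proposition \ref{claim1} (rather than the full augmentation clause) is that we only need the half that enlarges $A$ up to the size of the maximum feasible set $B$.

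Now use the inclusion-maximality of $A$: since $A \subseteq A \cup B'$ and $A \cup B'$ is a critical independent set, we must have $A \cup B' = A$, i.e., $B' = \emptyset$. Therefore $|A| = |A \cup B'| = |B|$, so $A$ is itself a maximum critical independent set, as claimed.

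There is no real obstacle here; the whole content is encapsulated in Theorem \ref{th11}, whose proof in turn rests on the augmentoid structure of $Crown(G)$ (Theorem \ref{th9}) together with Lemma \ref{lem3} to show that the enlargements remain critical. Once those are granted, the corollary is a one-line consequence of the general augmentoid fact in Proposition \ref{claim1}. The only caveat worth flagging is the distinction between \emph{inclusion maximal} and \emph{maximum}: without the augmentation property, an inclusion maximal critical independent set need not have maximum cardinality, so the corollary is genuinely a structural statement about $CritIndep(G)$, not a tautology.
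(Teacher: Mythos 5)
Your proposal is correct and follows exactly the paper's intended route: the paper derives this corollary directly from Proposition \ref{claim1} together with Theorem \ref{th11}, and your argument simply makes explicit the one-line step (augment the inclusion-maximal set $A$ toward a maximum critical independent set $B$, then use maximality of $A$ to force the augmenting set to be empty). Nothing is missing.
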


\begin{theorem}
\label{th10}$MaxCritIndep(G)=MaxCrown(G)$.
\end{theorem}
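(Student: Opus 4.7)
The plan is to prove both inclusions of the asserted equality, drawing on the augmentoid structure of $Crown(G)$ established in Theorem \ref{th9}, together with the rigidity stated in Theorem \ref{cor8} and the monotonicity in Lemma \ref{lem3}.

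For the inclusion $MaxCritIndep(G) \subseteq MaxCrown(G)$, I would begin with $S \in MaxCritIndep(G)$. Since $S$ is critical, Theorem \ref{ThCrown} ensures that $S$ is a crown, and Corollary \ref{cor3} then produces a maximum crown $S^{*}$ with $S \subseteq S^{*}$ (the enlargement in Proposition \ref{claim1} is monotone). Lemma \ref{lem3} applied to $S \subseteq S^{*}$ yields $d(S) \leq d(S^{*})$; but $d(S) = d(G)$ is already the critical difference, so $d(S^{*}) = d(G)$. Since every crown is independent, $S^{*}$ is itself a critical independent set. The maximality of $|S|$ among critical independent sets, combined with $S \subseteq S^{*}$, forces $S = S^{*}$, so $S$ is in fact a maximum crown.

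For the reverse inclusion $MaxCrown(G) \subseteq MaxCritIndep(G)$, I would fix a reference set $T \in MaxCritIndep(G)$ (which is nonempty: any independent set realizing $d(G)$ of maximum cardinality works). By the inclusion just established, $T \in MaxCrown(G)$. Now for an arbitrary $S \in MaxCrown(G)$, Theorem \ref{cor8} furnishes $d(S) = d(T)$, and by definition $|S| = |T|$, while $T$ being critical gives $d(T) = d(G)$. Consequently $d(S) = d(G)$, so $S$ is a critical independent set of the same cardinality as a maximum critical independent set, whence $S \in MaxCritIndep(G)$.

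The delicate step is the first inclusion, where one must rule out the a priori possibility that some maximum crown unrelated to $S$ is strictly larger than $S$. The trick is to enlarge $S$ itself monotonically via Corollary \ref{cor3}, so that Lemma \ref{lem3} can transport the criticality of $S$ upward along the chain $S \subseteq S^{*}$; after that, the reverse inclusion reduces to invoking the rigidity of Theorem \ref{cor8} and matching differences.
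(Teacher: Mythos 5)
Your proposal is correct and follows essentially the same route as the paper: both rely on Theorem \ref{ThCrown} to see that a maximum critical independent set is a crown, enlarge it to a maximum crown via Theorem \ref{th9} (Corollary \ref{cor3}), use Lemma \ref{lem3} to push the critical difference up along the inclusion, and then invoke Theorem \ref{cor8} to transfer criticality to every maximum crown. The paper merely packages the two inclusions into a single chain of inequalities $d(G)=d(A)\leq d(C)=d(B)\leq d(G)$, whereas you separate them, but the ingredients and logic are identical.
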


\begin{proof}
Let $A\in MaxCritIndep(G)$ and $B\in MaxCrown(G)$. By Theorem \ref{ThCrown},
we have that $A\in Crown(G)$.

According to Theorem \ref{th9}, there exists some $C\in MaxCrown(G)$ such that
$A\subseteq C$.

By Lemma \ref{lem3} and Theorem \ref{cor8}, we get
\[
d(G)=d\left(  A\right)  \leq d\left(  C\right)  =d\left(  B\right)  \leq
d(G)\text{,}%
\]
which implies $d\left(  C\right)  =d\left(  B\right)  =d(G)$, i.e., $C,B\in
CritIndep(G)$.

Since $A\subseteq C$ and $A\in MaxCritIndep(G)$, we get that $A=C\in$
$MaxCritIndep(G)$. Thus $A\in MaxCrown(G)$.

Since $C\in$ $MaxCritIndep(G)$, $B\in CritIndep(G)$ and $\left\vert
B\right\vert =\left\vert C\right\vert $, it follows that $B\in
MaxCritIndep(G)$.
\end{proof}

Combining Theorem \ref{ThCrown}, Corollary \ref{cor3}, and Theorem \ref{th10},
we get the following.

\begin{corollary}
\cite{Larson2007} Every critical independent set is contained in a maximum
critical independent set.
\end{corollary}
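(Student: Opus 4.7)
The plan is to chain three results already established earlier in the paper. Start with an arbitrary $S\in CritIndep(G)$. The first step invokes Theorem \ref{ThCrown}, whose inclusion $CritIndep(G)\subseteq Crown(G)$ immediately places $S$ in $Crown(G)$.

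The second step applies Corollary \ref{cor3}, which says that every crown may be enlarged to a maximum crown. This is where the real content lies: Corollary \ref{cor3} rests on the augmentoid structure $\bigl(V(G),Crown(G)\bigr)$ from Theorem \ref{th9} together with Proposition \ref{claim1}. Applying it to $S$, I obtain some $T\in MaxCrown(G)$ with $S\subseteq T$.

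The third step applies Theorem \ref{th10}, namely the identity $MaxCrown(G)=MaxCritIndep(G)$, to upgrade $T$ from a maximum crown to a maximum critical independent set. Combining the two inclusions gives $S\subseteq T$ with $T\in MaxCritIndep(G)$, which is exactly the statement.

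I expect no obstacle beyond correctly invoking the three ingredients in order; the proof is essentially a one-line chase. The only thing worth noting is that the notion of \emph{maximum critical independent set} used in the corollary is the cardinality-maximum one, matching the definition of $MaxCritIndep(G)$ from the introduction, so Theorem \ref{th10} can be applied directly without any reinterpretation.
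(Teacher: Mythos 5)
Your proposal is correct and follows exactly the paper's own route: the authors derive this corollary by combining Theorem \ref{ThCrown}, Corollary \ref{cor3}, and Theorem \ref{th10}, which are precisely your three steps in the same order. Nothing is missing.
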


\begin{lemma}
\label{Lemma2}If $A$ $\in CritIndep(G)$, then $\alpha(G)=\alpha\left(
G[N\left[  A\right]  ]\right)  +\alpha(G[V(G)-N[A]])$.
\end{lemma}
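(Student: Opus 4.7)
The plan is to establish the two inequalities separately, using that a critical independent set is a local maximum independent set and that $N[A]$ cleanly splits the rest of the graph from $A$.

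First, observe that by Theorem \ref{ThCrown}, $A \in CritIndep(G) \subseteq \Psi(G)$, so $A$ is a maximum independent set of $G[N[A]]$. In particular, $\alpha(G[N[A]]) = |A|$. (Alternatively, one could invoke Corollary \ref{Cor1} together with Theorem \ref{ThKE}(ii) applied inside $G[N[A]]$, but $A \in \Psi(G)$ is immediate.)

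For the inequality $\alpha(G) \geq \alpha(G[N[A]]) + \alpha(G[V(G)-N[A]])$, I would pick a maximum independent set $J$ of $G[V(G)-N[A]]$ and form $A \cup J$. Since every vertex of $J$ lies outside $N[A]$, no vertex of $J$ is adjacent to any vertex of $A$, so $A \cup J$ is independent in $G$. Its size is $|A| + |J| = \alpha(G[N[A]]) + \alpha(G[V(G)-N[A]])$, giving the lower bound on $\alpha(G)$.

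For the reverse inequality $\alpha(G) \leq \alpha(G[N[A]]) + \alpha(G[V(G)-N[A]])$, take any $I \in \Omega(G)$ and write $I = (I \cap N[A]) \cup (I - N[A])$. The first piece is independent in $G[N[A]]$ and the second in $G[V(G)-N[A]]$, so $|I| \leq \alpha(G[N[A]]) + \alpha(G[V(G)-N[A]])$. Combining the two bounds yields equality. There is no real obstacle here; the only point requiring a previous result is that $\alpha(G[N[A]]) = |A|$, for which membership of $A$ in $\Psi(G)$ (guaranteed by Theorem \ref{ThCrown}) is exactly what is needed.
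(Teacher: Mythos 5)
Your proof is correct. Both you and the paper reduce the lemma to the single nontrivial fact that $\alpha\left(G[N[A]]\right)=\left\vert A\right\vert$, obtained from $A\in CritIndep(G)\subseteq Crown(G)\subseteq\Psi(G)$ via Theorem \ref{ThCrown}; but you then diverge in how the remaining bookkeeping is done. The paper invokes Theorem \ref{th111} to extend $A$ to some $S\in\Omega(G)$ and observes that $S-A$ lies entirely in $V(G)-N[A]$, which yields $\alpha(G[V(G)-N[A]])=\alpha(G)-\left\vert A\right\vert$ (the reverse inequality being left implicit there). You instead avoid Theorem \ref{th111} altogether: your upper bound $\alpha(G)\leq\alpha(G[X])+\alpha(G[V(G)-X])$ is the generic subadditivity of $\alpha$ over any vertex partition, and your lower bound comes from gluing $A$ to a maximum independent set of $G[V(G)-N[A]]$, which works for any independent $A$. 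This makes your argument a bit more self-contained and makes explicit both directions of the inequality, at the cost of not exhibiting the maximum independent set of $G$ that witnesses the decomposition, which is the structural picture the paper is after in the corollary that follows. Both arguments are complete and correct.
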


\begin{proof}
By Theorem \ref{ThCrown}, $A$ $\in Crown(G)\subseteq\Psi(G)$, which means that
$G[N\left[  A\right]  ]$ is a K\"{o}nig-Egerv\'{a}ry graph with $\alpha\left(
G[N\left[  A\right]  ]\right)  =\left\vert A\right\vert $. Theorem \ref{th111}
implies that there exists $S\in\Omega(G)$ such that $A\subseteq S$. Since no
edge joins any $a\in A$ to vertices belonging to $S-A$ or to $N(S)-N(A)$, it
follows that $\alpha(G[V(G)-N[A])=\alpha(G)-\left\vert A\right\vert $.
\end{proof}

\begin{corollary}
\cite{Larson2011} For any graph $G$, there is a unique set $X\subseteq V(G)$
such that

\emph{(i)} $\alpha(G)=\alpha(G[X])+\alpha(G[V(G)-X])$;

\emph{(ii)} for every maximum critical independent set $A$ of $G$, $X$
$=N\left[  A\right]  $;

\emph{(iii)} $G[X]$ is a K\"{o}nig-Egerv\'{a}ry graph.
\end{corollary}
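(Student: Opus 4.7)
The plan is to let $A$ be any maximum critical independent set of $G$ (which exists, since $\emptyset$ is trivially critical when $d(G)=0$, and otherwise any inclusion-maximal critical independent set will do by the previous corollary), and to set $X := N[A]$. Most of the work has already been done in the preceding results; what is left is assembling them in the right order and verifying that $X$ does not depend on the particular $A$ chosen.

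First I would verify that $X$ is well-defined, which is the only nontrivial point. Theorem \ref{th10} gives $MaxCritIndep(G) = MaxCrown(G)$, so any two maximum critical independent sets $A, A'$ are in particular maximum crowns. Theorem \ref{cor8} then yields $N[A] = N[A']$, so the set $X = N[A]$ is the same regardless of which maximum critical independent set we pick. This simultaneously establishes property (ii), and uniqueness of $X$ follows immediately, because (ii) explicitly pins $X$ down as $N[A]$ for any such $A$.

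Next I would verify properties (i) and (iii). For (i), since a maximum critical independent set is in particular a critical independent set, Lemma \ref{Lemma2} gives
\[
\alpha(G) = \alpha(G[N[A]]) + \alpha(G[V(G) - N[A]]) = \alpha(G[X]) + \alpha(G[V(G) - X]).
\]
For (iii), Theorem \ref{ThCrown} tells us that $A \in Crown(G)$, so Corollary \ref{Cor1} implies that $G[N[A]] = G[X]$ is a K\"onig-Egerv\'ary graph.

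There is essentially no hard step here; the whole argument is bookkeeping on top of Theorems \ref{th10} and \ref{cor8}, Corollary \ref{Cor1}, and Lemma \ref{Lemma2}. The only subtle point is noticing that the uniqueness claim needs the combination of Theorem \ref{cor8} (maximum crowns share a closed neighborhood) and Theorem \ref{th10} (to transfer this from crowns to critical independent sets); once that is in hand, properties (i)--(iii) are immediate translations of earlier statements.
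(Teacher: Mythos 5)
Your proposal is correct and follows essentially the same route as the paper: Lemma \ref{Lemma2} for part (i), Theorem \ref{th10} combined with Theorem \ref{cor8} for part (ii) and the well-definedness of $X$, and Corollary \ref{Cor1} for part (iii). The only cosmetic difference is that you reach ``$A$ is a crown'' via Theorem \ref{ThCrown} rather than via $MaxCrown(G)\subseteq Crown(G)$, which changes nothing of substance.
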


\begin{proof}
Let $A,B\in MaxCritIndep(G)$ and $X$ $=N\left[  B\right]  $.

By Lemma \ref{Lemma2}, we get Part \textit{(i)}.

By Theorem \ref{th10}, $A,B\in MaxCrown(G)$, and Theorem \ref{cor8} implies
$N\left[  A\right]  =N\left[  B\right]  $, i.e., Part \emph{(ii)} holds.

Since $B\in MaxCrown(G)\subseteq Crown(G)$, we have that $G[X]=G[N[B]]$ is a
K\"{o}nig-Egerv\'{a}ry graph, which completes the proof.
\end{proof}

Butenko and Trukhanov showed that identifying a non-empty critical independent
set gives a polynomial-time reduction of the problem of finding a maximum
independent set to a proper subgraph. Since a maximum critical independent set
may be included in a maximum independent set of a graph, both it and its
neighbors may be removed, reducing the problem of finding a maximum
independent set to a subgraph. Thus critical independent sets give a
polynomially tractable step for the well-known \textbf{NP}-hard problem of
finding a maximum independent set in a graph \cite{GaryJohnson79}. Moreover,
there are parallel algorithms returning maximum critical independent sets
\cite{DeLaVinaLarson2013}.

\begin{theorem}
\label{th22}\cite{Larson2007} One can compute a maximum critical independent
set in polynomial time.
\end{theorem}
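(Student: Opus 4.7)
The plan is to combine the identification $MaxCritIndep(G)=MaxCrown(G)$ from Theorem \ref{th10} with the polynomial-time algorithm of Theorem \ref{th2}. By Theorem \ref{th10}, computing a maximum critical independent set is the same task as computing a crown of maximum cardinality, so it suffices to show that a crown $S_{0}$ of maximum \emph{order} $|S_{0}|+|N(S_{0})|$ --- which is precisely what Theorem \ref{th2} returns in polynomial time --- automatically lies in $MaxCrown(G)$.

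To verify this identification, let $S^{\ast}$ be any element of $MaxCrown(G)$ and apply the augmentation construction from the proof of Theorem \ref{th9} to the pair $S_{0},S^{\ast}$: the sets $S_{0}\cup(S^{\ast}-N[S_{0}])$ and $S^{\ast}\cup(S_{0}-N[S^{\ast}])$ are both crowns of equal cardinality. Since $|S^{\ast}|$ is maximum among crowns, this equality forces $S_{0}\subseteq N[S^{\ast}]$ and $|S_{0}\cup T|=|S^{\ast}|$, where $T:=S^{\ast}-N[S_{0}]$. Because $T$ is disjoint from $N[S_{0}]$, no edge joins $T$ to $S_{0}$, so $N(S_{0}\cup T)=N(S_{0})\cup N(T)$, and hence
\[
|S_{0}\cup T|+|N(S_{0}\cup T)| \;=\; |S_{0}|+|N(S_{0})|+|T|+|N(T)-N(S_{0})|.
\]
If $|S_{0}|<|S^{\ast}|$, then $|T|>0$ and the right-hand side strictly exceeds the order $|S_{0}|+|N(S_{0})|$ of $S_{0}$, contradicting the maximality of the order of $S_{0}$. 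Consequently $|S_{0}|=|S^{\ast}|$, so $S_{0}\in MaxCrown(G)$, which by Theorem \ref{th10} coincides with $MaxCritIndep(G)$.

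The main obstacle is this reconciliation between the two natural notions of a ``maximum crown'': the maximum-order notion targeted by Theorem \ref{th2}, and the maximum-cardinality notion used in the definition of $MaxCrown(G)$ and relevant to Theorem \ref{th10}. Once the reconciliation is in place, the algorithmic content of the theorem reduces to a single invocation of Theorem \ref{th2}, yielding a maximum critical independent set in polynomial time.
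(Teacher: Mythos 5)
Your proposal is logically sound, but it takes a route the paper deliberately does not take: in the paper, Theorem \ref{th22} carries no proof at all --- it is imported verbatim from Larson's work, where the algorithm is obtained directly from the critical-independent-set machinery (in the spirit of Zhang and Ageev, via a reduction to maximum independent sets in bipartite graphs). You instead bootstrap from the crown-kernelization algorithm of Abu-Khzam et al.\ (Theorem \ref{th2}) together with the paper's own Theorem \ref{th10}, and your reconciliation of maximum \emph{order} $\left\vert S\right\vert +\left\vert N(S)\right\vert$ with maximum \emph{cardinality} $\left\vert S\right\vert$ is the genuinely nontrivial step; your argument via the augmentation construction of Theorem \ref{th9} (forcing $T=S^{\ast}-N[S_{0}]=\emptyset$ because a nonempty $T$ would strictly increase the order) is correct, and it is a point many readers would gloss over. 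Two caveats. First, be aware that the paper runs the implication in exactly the opposite direction: immediately after Theorem \ref{th22} it \emph{derives} the statement of Theorem \ref{th2} as a corollary of Theorems \ref{th10} and \ref{th22}. Your derivation is valid only because Theorem \ref{th2} is independently established in the literature; if one inserted your proof into this paper as written, the subsequent corollary would become circular, so the presentation would need adjusting. Second, your argument silently assumes the two cited algorithmic results are interchangeable in cost, which is fine for a polynomial-time claim but means your proof ultimately rests on a different external black box than the one the paper credits.
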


Combining Theorems \ref{th10} and \ref{th22}, we get the following.

\begin{corollary}
\cite{AFLS2007} Isolating a crown of maximum order is solvable in polynomial time.
\end{corollary}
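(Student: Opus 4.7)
The plan is to reduce the problem to one that is already known to be polynomially tractable, by invoking the identity $MaxCritIndep(G)=MaxCrown(G)$ established in Theorem \ref{th10}. Since a maximum crown is nothing but a maximum critical independent set in disguise, it suffices to apply the polynomial-time algorithm for computing a maximum critical independent set guaranteed by Theorem \ref{th22}.

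Concretely, I would argue as follows. Given a graph $G$, run the polynomial-time algorithm from Theorem \ref{th22} to obtain a set $A \in MaxCritIndep(G)$. By Theorem \ref{th10}, we immediately have $A \in MaxCrown(G)$. Then compute $N[A]$ in linear time; by Corollary \ref{Cor1}, $G[N[A]]$ is a K\"{o}nig-Egerv\'{a}ry graph admitting a matching from $N(A)$ into $A$, so the crown of maximum order has been isolated (its vertices, and the adjacent edges, can be removed in linear time as described in the paragraph preceding Section 3). The total running time is polynomial because each step is.

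The main, and really only, obstacle is verifying that one is allowed to cite Theorem \ref{th22} as a black box; no new combinatorial argument is needed here. It is worth remarking that the content of the corollary is essentially the converse direction of the equivalence explored in this section: Theorem \ref{th2} was used earlier as a motivation for studying crowns, whereas here we are \emph{recovering} that algorithmic statement from the structural equality $MaxCritIndep(G)=MaxCrown(G)$ together with Zhang's algorithm for critical independent sets. Thus the proof is essentially a one-line composition of two previously stated results, and I would present it as such without any further calculations.
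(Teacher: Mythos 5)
Your proposal is correct and is essentially identical to the paper's own argument: the corollary is obtained precisely by combining Theorem \ref{th10} ($MaxCritIndep(G)=MaxCrown(G)$) with Theorem \ref{th22} (a maximum critical independent set is computable in polynomial time). The extra remarks about computing $N[A]$ and removing it are harmless elaboration, not a new route.
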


If $\mathrm{core}(G)$ is not critical set, then $\mathrm{core}(G)$ is not
necessarily a crown. For instance, consider the graph from Figure \ref{Fig1},
whose $\mathrm{core}(G)$ is neither a critical set, nor a crown.

\begin{figure}[h]
\setlength{\unitlength}{1cm}\begin{picture}(5,1.2)\thicklines
\multiput(4,0)(1,0){6}{\circle*{0.29}}
\multiput(5,1)(1,0){3}{\circle*{0.29}}
\put(9,1){\circle*{0.29}}
\put(4,0){\line(1,0){5}}
\put(5,0){\line(0,1){1}}
\put(6,0){\line(0,1){1}}
\put(6,1){\line(1,0){1}}
\put(7,1){\line(1,-1){1}}
\put(8,0){\line(1,1){1}}
\put(9,0){\line(0,1){1}}
\put(3.7,0){\makebox(0,0){$a$}}
\put(4.7,1){\makebox(0,0){$b$}}
\put(7,0.3){\makebox(0,0){$c$}}
\put(3,0.5){\makebox(0,0){$G$}}
\end{picture}\caption{$\mathrm{core}(G)=\{a,b,c\}$ and $d(core(G))=0<1=d(G)$.}%
\label{Fig1}%
\end{figure}

\begin{remark}
If $A,B\in Crown(G)$, then $A\cap B$ is not necessarily a crown; e.g.,
$P_{5}=\left(  \{v_{1},v_{2},v_{3},v_{4},v_{5}\},\{v_{1}v_{2},v_{2}v_{3}%
,v_{3}v_{4},v_{4}v_{5}\}\right)  $ has $\{v_{1},v_{3}\},\{v_{3},v_{5}\}\in
$.$Crown(P_{5})$, while $\{v_{1},v_{3}\}\cap\{v_{3},v_{5}\}=\{v_{3}\}\notin
Crown(P_{5})$.
\end{remark}

\begin{proposition}
If $A,B\in Crown(G)$, such that $A\cup B$ is independent, then $A\cup B\in
Crown(G)$ as well.
\end{proposition}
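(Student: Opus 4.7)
The plan is to paste together the matchings witnessing that $A$ and $B$ are crowns. Let $M_A$ be a matching from $N(A)$ into $A$ and $M_B$ a matching from $N(B)$ into $B$. Since $N(A\cup B) = N(A)\cup N(B)$, I would try the natural candidate
\[
M := M_A \cup M_B^{*},
\]
where $M_B^{*}$ is the restriction of $M_B$ to the domain $N(B)\setminus N(A)$. Then $M$ clearly saturates $N(A\cup B)$ on one side and takes values in $A\cup B$ on the other, so the only real content is to verify that $M$ is a matching, i.e.\ that no vertex of $G$ appears in two edges of $M$.

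The independence of $A\cup B$ supplies the two key facts $A\cap N(B)=\emptyset$ and $B\cap N(A)=\emptyset$. I would then run through the four possible ways in which an edge of $M_A$ could share an endpoint with an edge of $M_B^{*}$: an $N(A)$-endpoint with an $N(B)\setminus N(A)$-endpoint is immediately ruled out; an $N(A)$-endpoint with a $B$-endpoint lies in $N(A)\cap B=\emptyset$; an $A$-endpoint with an $N(B)\setminus N(A)$-endpoint lies in $A\cap N(B)=\emptyset$. The one slightly less automatic case is an $A$-endpoint of $M_A$ coinciding with a $B$-endpoint of $M_B^{*}$: if this shared vertex $w$ lay in $A\cap B$, then $w$ would be $M_B$-matched to some $v\in N(B)\setminus N(A)$, and the edge $vw$ would force $v\in N(A)$, a contradiction.

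I do not anticipate a genuine obstacle here; the argument is just careful bookkeeping of which endpoints of $M_A$ and $M_B^{*}$ lie in which set, and the hypothesis that $A\cup B$ is independent kills every possible collision. Once $M$ is confirmed to be a matching from $N(A\cup B)$ into $A\cup B$, it follows that $A\cup B\in Crown(G)$.
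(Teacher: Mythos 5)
Your proof is correct and takes essentially the same approach as the paper's, which pastes together the two matchings in the mirror-image way (keeping $M_B$ whole and restricting $M_A$ to $N(A)-N(B)$). Your case analysis confirming that the union is indeed a matching --- in particular the $A\cap B$ collision case --- is a careful verification of a step the paper simply asserts.
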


\begin{proof}
There exist a matching $M_{1}$ from $N(A)$ into $A$ and a matching $M_{2}$
from $N(B)$ into $B$. Let $M_{3}$ be the restriction of $M_{1}$ to
$N(A)-N(B)$. Then, $M_{2}\cup M_{3}$ is a matching from $N(A)\cup N(B)$ into
$A\cup B$. Consequently, $A\cup B$ is a crown.
\end{proof}

It is worth mentioning that a similar result is true for local maximum
independent sets.

\begin{proposition}
\cite{LevMan2} If $A$ and $B$ are two disjoint local maximum independent sets
in $G$, such that $A\cup B$ is independent, then $A\cup B$ is also a local
maximum independent set in $G$.
\end{proposition}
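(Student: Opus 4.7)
The plan is to verify directly that $A\cup B$ is a maximum independent set in the subgraph $H=G[N[A\cup B]]$. Since $A\cup B$ is independent by hypothesis and $A\cap B=\emptyset$, we have $|A\cup B|=|A|+|B|$, and clearly $A\cup B\subseteq N[A\cup B]=N[A]\cup N[B]$. So the whole task reduces to showing that every independent set $S$ of $H$ satisfies $|S|\le |A|+|B|$.

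The key idea is to use $N[A]$ as a separator and split $S$ into $S_A=S\cap N[A]$ and $S_B=S\setminus N[A]$. Since $S\subseteq N[A]\cup N[B]$, we get $S_B\subseteq N[B]\setminus N[A]\subseteq N[B]$. Both $S_A$ and $S_B$ are independent (as subsets of $S$), with $S_A$ independent in $G[N[A]]$ and $S_B$ independent in $G[N[B]]$. The local maximality of $A$ then gives $|S_A|\le \alpha(G[N[A]])=|A|$, and the local maximality of $B$ gives $|S_B|\le \alpha(G[N[B]])=|B|$. Adding these two bounds yields $|S|=|S_A|+|S_B|\le |A|+|B|$, as desired. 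Hence $A\cup B$ realizes $\alpha(H)$ and therefore $A\cup B\in\Psi(G)$.

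I do not foresee any real obstacle here; the only point that deserves care is the containment $S_B\subseteq N[B]$, which is what makes the hypothesis $A\cup B$ independent (not merely $A\cap B=\emptyset$) unnecessary for bounding $|S|$, though it \emph{is} needed to guarantee that $A\cup B$ itself is an independent set of $H$ of the matching size $|A|+|B|$.
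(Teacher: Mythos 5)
Your argument is correct. Note, however, that the paper does not actually prove this proposition: it is stated with a citation to an earlier work of the authors, so there is no in-paper proof to compare against; the only proved analogue nearby is the crown version, whose proof glues two matchings together rather than counting independent sets. Your proof stands on its own: writing $H=G[N[A\cup B]]$ and splitting an arbitrary independent set $S$ of $H$ as $S_A=S\cap N[A]$ and $S_B=S-N[A]\subseteq N[B]$, the local maximality of $A$ and $B$ gives $\left\vert S_A\right\vert \leq\alpha(G[N[A]])=\left\vert A\right\vert$ and $\left\vert S_B\right\vert \leq\alpha(G[N[B]])=\left\vert B\right\vert$, hence $\alpha(H)\leq\left\vert A\right\vert +\left\vert B\right\vert$, which $A\cup B$ attains because it is independent and $A\cap B=\emptyset$. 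Every step checks out, including the identity $N[A\cup B]=N[A]\cup N[B]$ and the fact that an independent subset of $N[A]$ is an independent set of $G[N[A]]$. Your closing remark is also accurate: the upper bound on $\alpha(H)$ needs neither disjointness nor the independence of $A\cup B$; those hypotheses are used only to certify that $A\cup B$ is itself an independent set of $H$ of cardinality $\left\vert A\right\vert +\left\vert B\right\vert$. One could add that your method generalizes immediately to any finite family of pairwise disjoint local maximum independent sets with independent union, which the matching-based crown argument also does but by a different mechanism.
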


\section{Crowns and local maximum independent sets}

Notice that there exists a non-bipartite K\"{o}nig-Egerv\'{a}ry graph $G$ with
$Crown(G)\neq\Psi(G)$. For instance, the graph $G_{1}$ from Figure \ref{Fig12}
is a non-bipartite K\"{o}nig-Egerv\'{a}ry graph, $\left\{  d\right\}  \in
Crown(G_{1})\cap\Psi(G_{1})$, while $\left\{  a\right\}  \in\Psi
(G_{1})-Crown(G_{1})$.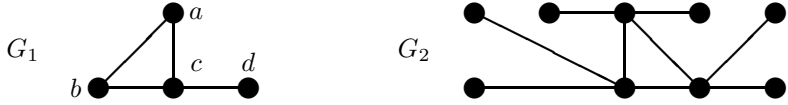
\begin{figure}[h]
\setlength{\unitlength}{1cm}\begin{picture}(5,1.4)\thicklines
\multiput(3,0)(1,0){3}{\circle*{0.29}}
\put(4,1){\circle*{0.29}}
\put(3,0){\line(1,1){1}}
\put(4,0){\line(0,1){1}}
\put(3,0){\line(1,0){2}}
\put(2.7,0){\makebox(0,0){$b$}}
\put(4.3,1){\makebox(0,0){$a$}}
\put(4.3,0.3){\makebox(0,0){$c$}}
\put(5,0.35){\makebox(0,0){$d$}}
\put(2,0.5){\makebox(0,0){$G_{1}$}}
\multiput(8,0)(0,1){2}{\circle*{0.29}}
\multiput(9,1)(1,0){4}{\circle*{0.29}}
\multiput(10,0)(1,0){3}{\circle*{0.29}}
\put(8,0){\line(1,0){4}}
\put(8,1){\line(2,-1){2}}
\put(9,1){\line(1,0){2}}
\put(10,1){\line(1,-1){1}}
\put(10,0){\line(0,1){1}}
\put(11,0){\line(1,1){1}}
\put(7.2,0.5){\makebox(0,0){$G_{2}$}}
\end{picture}\caption{$Crit(G_{1})\neq Crown(G_{1})$ and $Crown(G_{2}%
)=\Psi(G_{2})$.}%
\label{Fig12}%
\end{figure}

\begin{proposition}
\label{Prop2}$Crown(G)=\Psi(G)$ if and only if $G[N[S]]$ is a
K\"{o}nig-Egerv\'{a}ry graph for every $S\in\Psi\left(  G\right)  $. In
particular, $G$ is a K\"{o}nig-Egerv\'{a}ry graph.
\end{proposition}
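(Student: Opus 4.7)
The plan is to prove both directions of the equivalence directly from results already established in the paper, and then obtain the ``in particular'' clause as an easy consequence of the forward direction.

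For the forward direction, assume $Crown(G)=\Psi(G)$. I take an arbitrary $S\in\Psi(G)$; by hypothesis $S\in Crown(G)$, so Corollary \ref{Cor1} immediately gives that $G[N[S]]$ is a K\"onig-Egerv\'ary graph. To deduce that $G$ itself is K\"onig-Egerv\'ary, I would pick any $S\in\Omega(G)$; since $\Omega(G)\subseteq\Psi(G)$, this $S$ lies in $\Psi(G)$. The key observation is that for a maximum independent set one has $N[S]=V(G)$, because any vertex $v\notin N[S]$ could be added to $S$, contradicting maximality. Therefore $G=G[N[S]]$, and the previous step shows that $G$ is K\"onig-Egerv\'ary.

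For the backward direction, suppose that $G[N[S]]$ is K\"onig-Egerv\'ary for every $S\in\Psi(G)$. The inclusion $Crown(G)\subseteq\Psi(G)$ is free from Theorem \ref{ThCrown}, so I only need the reverse inclusion. Take any $S\in\Psi(G)$; by definition $S$ is a maximum independent set in $H:=G[N[S]]$. Since $H$ is K\"onig-Egerv\'ary by hypothesis, and the other side of the bipartition-like decomposition of $H$ is exactly $N_G(S)=V(H)-S$, Theorem \ref{ThKE}(iii) (applied to $H$ with $H_1=G[S]$ and $H_2=G[N_G(S)]$) supplies a matching in $\bigl(S,N_G(S)\bigr)$ of size $|N_G(S)|$, i.e., a matching from $N(S)$ into $S$. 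Hence $S\in Crown(G)$, completing the equality $Crown(G)=\Psi(G)$.

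There is really no technical obstacle here: both directions are short appeals to results already proved in the excerpt (Corollary \ref{Cor1} and Theorem \ref{ThKE}). The only subtle point, and the place where one has to be careful in writing up, is the ``in particular'' conclusion: it relies on the little fact that $N[S]=V(G)$ whenever $S\in\Omega(G)$, which is what upgrades a local K\"onig-Egerv\'ary property (on every $G[N[S]]$) into the global K\"onig-Egerv\'ary property of $G$.
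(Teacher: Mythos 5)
Your proposal is correct and follows essentially the same route as the paper: the forward direction is an immediate application of Corollary \ref{Cor1}, and the backward direction extracts a matching from $N(S)$ into $S$ via Theorem \ref{ThKE} applied to the K\"{o}nig-Egerv\'{a}ry graph $G[N[S]]$. The only difference is that you explicitly justify the ``in particular'' clause (via $\Omega(G)\subseteq\Psi(G)$ and $N[S]=V(G)$ for $S\in\Omega(G)$), a step the paper leaves implicit.
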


\begin{proof}
By Corollary \ref{Cor1}, $G[N[S]]$ is a K\"{o}nig-Egerv\'{a}ry graph for every
$S\in\Psi\left(  G\right)  $, because $\Psi(G)=Crown(G)$.

Conversely, by Theorem \ref{ThKE} for each $S\in\Psi\left(  G\right)  $ there
is a matching from $N\left(  S\right)  $ to $S$, i.e. $S\in Crown\left(
G\right)  $.
\end{proof}

Since every subgraph of a bipartite graph is bipartite
(K\"{o}nig-Egerv\'{a}ry), we conclude with the following.

\begin{corollary}
\cite{LevMan2016}\label{ThCrownBip} If $G$ is a bipartite graph, then
$Crown(G)=\Psi(G)$.
\end{corollary}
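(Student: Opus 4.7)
The plan is to derive this corollary directly from Proposition \ref{Prop2}, which already provides the needed reduction: $Crown(G)=\Psi(G)$ is equivalent to $G[N[S]]$ being K\"onig-Egerv\'ary for every $S\in\Psi(G)$. Thus the bipartite hypothesis needs to be used only to verify this last property for all local maximum independent sets of $G$.

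First I would fix an arbitrary $S\in\Psi(G)$ and consider the induced subgraph $G[N[S]]$. Since $G$ is bipartite, every vertex-induced subgraph of $G$ is bipartite as well, so $G[N[S]]$ is bipartite. The introduction already recalled the classical fact that every bipartite graph is a K\"onig-Egerv\'ary graph, so $G[N[S]]$ is K\"onig-Egerv\'ary. Because $S\in\Psi(G)$ was arbitrary, the hypothesis of Proposition \ref{Prop2} is satisfied.

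Then I would invoke Proposition \ref{Prop2} to conclude $Crown(G)=\Psi(G)$. There is no genuine obstacle here: the corollary is an immediate specialization of Proposition \ref{Prop2} once one observes the two standard facts about bipartite graphs (hereditariness under induced subgraphs, and being K\"onig-Egerv\'ary). If anything, the only subtlety worth flagging is making the hereditary property explicit, which is why the narrative sentence preceding the corollary in the excerpt highlights precisely that induced subgraphs of bipartite graphs are bipartite (and hence K\"onig-Egerv\'ary).
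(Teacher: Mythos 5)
Your proof is correct and is exactly the paper's own argument: the paper derives this corollary from Proposition \ref{Prop2} via the one-line observation that every induced subgraph of a bipartite graph is bipartite, hence K\"onig-Egerv\'ary. Nothing is missing and no different route is taken.
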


The converse of Corollary \ref{ThCrownBip} is not generally true. For
instance, the non-bipartite K\"{o}nig-Egerv\'{a}ry graph $G_{2}$ from Figure
\ref{Fig12} satisfies $Crown(G_{2})=\Psi(G_{2})$.

Combining Theorems \ref{ThGreed}\textit{(i),(ii)} and \ref{ThCrownBip}, we get
the following.

\begin{corollary}
\emph{(i)} For every tree, $Crown(T)$ is a greedoid.

\emph{(ii)} If $G$ is a bipartite graph with a unique maximum matching, then
$Crown(G)$ is a greedoid.
\end{corollary}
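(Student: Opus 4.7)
The plan is to derive each part by a short composition of the cited results, so no new combinatorics is needed.

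For part (i), I would first note that every tree $T$ is a forest, so Theorem \ref{ThGreed}(i) gives directly that $\Psi(T)$ is a greedoid. Every tree is also bipartite, so Corollary \ref{ThCrownBip} yields $Crown(T)=\Psi(T)$. Substituting this equality into the greedoid statement finishes (i). There is no real obstacle here beyond lining up the two cited results.

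For part (ii), let $G$ be a bipartite graph with a unique maximum matching $M$. I would argue that $M$ is uniquely restricted: if $M$ were not the unique perfect matching of the subgraph induced by its saturated vertices, then a second perfect matching $M'$ of that induced subgraph would have the same cardinality as $M$ and thus give a second maximum matching of $G$, contradicting uniqueness. Hence every maximum matching of $G$ (there is only one) is uniquely restricted, and Theorem \ref{ThGreed}(ii) applies to give that $\Psi(G)$ is a greedoid. Finally, Corollary \ref{ThCrownBip} says $Crown(G)=\Psi(G)$, so $Crown(G)$ is a greedoid.

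The only subtle step is the observation that \emph{``unique maximum matching''} implies \emph{``every maximum matching is uniquely restricted''}, but this is a one-line contrapositive argument as above. Everything else is pure quotation of earlier results, so the proof will be quite short.
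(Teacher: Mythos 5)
Your proof is correct and follows essentially the same route as the paper, which simply combines Theorem \ref{ThGreed}\emph{(i),(ii)} with Corollary \ref{ThCrownBip}. The only detail you add explicitly --- that a unique maximum matching is necessarily uniquely restricted, since a second perfect matching of the induced subgraph on its saturated vertices would yield a second maximum matching of $G$ --- is left implicit in the paper, and your one-line justification of it is sound.
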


The graph $G_{1}=K_{3}\circ K_{1}$ from Figure \ref{fig33} is not bipartite,
and $Crown(G)=\Psi(G)$. Notice that $K_{3}\circ K_{1}$ is very well-covered.

\begin{figure}[h]
\setlength{\unitlength}{1cm}\begin{picture}(5,1.1)\thicklines
\multiput(3,0)(1,0){4}{\circle*{0.29}}
\multiput(4,1)(1,0){2}{\circle*{0.29}}
\put(3,0){\line(1,0){3}}
\put(4,0){\line(0,1){1}}
\put(4,1){\line(1,0){1}}
\put(4,1){\line(1,-1){1}}
\put(2,0.5){\makebox(0,0){$G_{1}$}}
\multiput(8,0)(1,0){4}{\circle*{0.29}}
\multiput(9,1)(1,0){3}{\circle*{0.29}}
\put(8,0){\line(1,0){3}}
\put(9,1){\line(1,-1){1}}
\put(9,1){\line(1,0){1}}
\put(10,0){\line(0,1){1}}
\put(11,0){\line(0,1){1}}
\put(8.7,1){\makebox(0,0){$x$}}
\put(7,0.5){\makebox(0,0){$G_{2}$}}
\end{picture}\caption{Both $G_{1}$ and $G_{2}$ are well-covered.}%
\label{fig33}%
\end{figure}
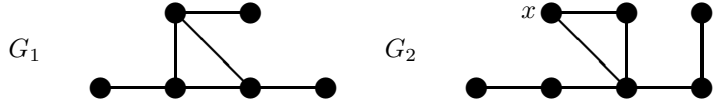

\begin{proposition}
\label{Prop7}If $G$ is a very well-covered graph, then $Crown(G)=\Psi(G)$.
\end{proposition}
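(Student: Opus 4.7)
The plan is to reduce the statement to two results already established in the paper, namely Proposition \ref{Prop2} and Theorem \ref{ThVWCov}(i). The first of these characterizes when $Crown(G) = \Psi(G)$ holds, reducing the equality to the condition that $G[N[S]]$ is a König-Egerváry graph for every $S \in \Psi(G)$. The second of these guarantees precisely this condition whenever $G$ is very well-covered. So the argument is essentially a one-line chain of citations once we put them side by side.

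Concretely, I would start by assuming $G$ is very well-covered and fix an arbitrary $S \in \Psi(G)$. By Theorem \ref{ThVWCov}(i) the induced subgraph $G[N[S]]$ is a König-Egerváry graph. Since $S \in \Psi(G)$ was arbitrary, this verifies the right-hand side of the equivalence in Proposition \ref{Prop2}, and therefore Proposition \ref{Prop2} yields $Crown(G) = \Psi(G)$, as desired.

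There is no real obstacle here: the only thing to check is that the hypothesis of Proposition \ref{Prop2} matches verbatim the conclusion of Theorem \ref{ThVWCov}(i), which it does. One could optionally note, as a sanity check, that Proposition \ref{Prop2} also forces $G$ itself to be König-Egerváry, which is indeed consistent with very well-covered graphs (they are well-known to be König-Egerváry, since $\alpha(G) + \mu(G) = \alpha(G) + \alpha(G) = 2\alpha(G) = |V(G)|$), so no hidden compatibility issue arises.
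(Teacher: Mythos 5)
Your proof is correct and follows essentially the same route as the paper's: both verify via Theorem \ref{ThVWCov}\emph{(i)} that $G[N[S]]$ is a K\"{o}nig-Egerv\'{a}ry graph for every $S\in\Psi(G)$ and then invoke Proposition \ref{Prop2} to conclude $Crown(G)=\Psi(G)$. The paper merely phrases it slightly differently by first noting (via Theorem \ref{ThCrown}) that only the inclusion $\Psi(G)\subseteq Crown(G)$ needs proof, but the substance is identical.
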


\begin{proof}
By Theorem \ref{ThCrown}, it is enough to show that $\Psi(G)\subseteq
Crown(G)$. If $S\in\Psi\left(  G\right)  $, then $S$ is independent and
$G\left[  N\left[  S\right]  \right]  $ is a K\"{o}nig-Egerv\'{a}ry graph, by
Theorem \ref{ThVWCov}\emph{(i)}. Now, the conclusion follows from Proposition
\ref{Prop2}.
\end{proof}

Since $H\circ K_{1}$ is very well-covered, for every graph $H$, we infer the following.

\begin{corollary}
If $G=H\circ K_{1}$, then $Crown(G)=\Psi(G)$.
\end{corollary}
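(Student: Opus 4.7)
The plan is to reduce this corollary directly to Proposition \ref{Prop7}, whose conclusion is exactly $Crown(G)=\Psi(G)$ for every very well-covered graph. So the only thing I need to verify is that the corona $H\circ K_{1}$ is very well-covered for every graph $H$.

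First, I would unpack the definition of $G=H\circ K_{1}$: let $V(H)=\{v_{1},\ldots,v_{n}\}$, and for each $v_{i}$ introduce a pendant $u_{i}$ joined only to $v_{i}$. This immediately gives $|V(G)|=2n$ and shows that no vertex is isolated, since each $v_{i}$ is adjacent to $u_{i}$ and each $u_{i}$ is adjacent to $v_{i}$. Next, I would observe that $M=\{v_{i}u_{i}:1\le i\le n\}$ is a perfect matching, so $\mu(G)=n$, and by taking $S=\{u_{1},\ldots,u_{n}\}$ we get an independent set of size $n$, hence $\alpha(G)\geq n=|V(G)|/2$; together with $\alpha(G)+\mu(G)\leq |V(G)|$ (or simply the trivial bound $\alpha(G)\le n$ coming from the pendant pairs) this forces $\alpha(G)=n$ and $|V(G)|=2\alpha(G)$.

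The remaining step is well-coveredness. I would argue that every maximal independent set $T$ of $G$ picks exactly one vertex from each pair $\{v_{i},u_{i}\}$: it cannot contain both (they are adjacent), and if it contained neither, then $u_{i}$ could be added without violating independence (its only neighbor is $v_{i}\notin T$), contradicting maximality. Hence $|T|=n=\alpha(G)$ for every maximal independent set $T$, so $G$ is well-covered. Combined with the no-isolated-vertex condition and $|V(G)|=2\alpha(G)$, this shows $G$ is very well-covered.

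With that in hand, Proposition \ref{Prop7} applies verbatim and yields $Crown(G)=\Psi(G)$. No step here looks like a real obstacle; the only tiny piece of care is the maximality argument that rules out a pair $\{v_{i},u_{i}\}$ missing both vertices, which is what forces the uniform size $n$ of every maximal independent set.
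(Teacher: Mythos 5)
Your proposal is correct and follows exactly the paper's route: the paper derives this corollary from Proposition \ref{Prop7} by simply asserting that $H\circ K_{1}$ is very well-covered for every $H$, and you supply the same reduction together with an explicit (and correct) verification of that well-known fact.
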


Proposition \ref{Prop7} is not true for all well-covered graphs. For instance,
the graph $G_{2}$ from Figure \ref{fig33} is well-covered, and $\left\{
x\right\}  \in\Psi(G)-Crown(G)$.

Taking into account Theorem \ref{ThVWCov}\textit{(iii) }and
Proposition\textit{ }\ref{Prop7}, we get the following.

\begin{corollary}
If $G$ is a very well-covered graph with a unique maximum matching, then
$Crown(G)$ is a greedoid.
\end{corollary}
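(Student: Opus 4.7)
The plan is to combine the two main tools the section has already set up, namely Proposition \ref{Prop7} and Theorem \ref{ThVWCov}\emph{(iii)}, since together they force both the identification $Crown(G)=\Psi(G)$ and the greedoid property on the right-hand side.

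First I would invoke Proposition \ref{Prop7}: because $G$ is very well-covered, every local maximum independent set of $G$ is already a crown, so we have the set-theoretic equality $Crown(G)=\Psi(G)$ as families of subsets of $V(G)$. This reduces the problem of checking accessibility and exchange for $Crown(G)$ to checking them for $\Psi(G)$, which is precisely the object studied in Theorem \ref{ThVWCov}.

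Next I would apply Theorem \ref{ThVWCov}\emph{(iii)}: since $G$ is very well-covered and has, by hypothesis, a unique maximum matching, $\Psi(G)$ is a greedoid on $V(G)$. Transporting this conclusion back along the equality from the previous step yields that $(V(G),Crown(G))$ is a greedoid as well, which is what we wanted.

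There is really no obstacle here — the statement is a direct corollary of the two cited results, and the only care needed is to note that Proposition \ref{Prop7} supplies set equality (not just inclusion), so the accessibility and exchange axioms for $\Psi(G)$ are literally the axioms for $Crown(G)$. If I wanted to write out a few sentences, I would also remark that the unique-maximum-matching hypothesis cannot be dropped: the graph $K_{3}\circ K_{1}$ of Figure \ref{fig33} is very well-covered and satisfies $Crown(G)=\Psi(G)$, but has several maximum matchings, so by Theorem \ref{ThVWCov}\emph{(iii)} neither $\Psi(G)$ nor $Crown(G)$ is a greedoid.
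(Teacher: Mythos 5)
Your proof is correct and is exactly the paper's intended argument: Proposition \ref{Prop7} gives $Crown(G)=\Psi(G)$ for very well-covered $G$, and Theorem \ref{ThVWCov}\emph{(iii)} then makes $\Psi(G)$ a greedoid under the unique-maximum-matching hypothesis. One caveat about your closing side remark: $K_{3}\circ K_{1}$ does \emph{not} have several maximum matchings --- in any corona $H\circ K_{1}$ the set of pendant edges is the unique maximum (indeed perfect) matching, since using an edge of $H$ strands two pendant vertices; so $K_{3}\circ K_{1}$ actually satisfies the hypothesis and its crown family \emph{is} a greedoid, and it cannot serve as the counterexample you describe.
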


There are non-bipartite and non-very well-covered graphs satisfying
$Crown(G)\neq\Psi(G)$; e.g., the graph $G_{2}$ from Figure \ref{Fig12}, where
$\{x\}\in\Psi(G)-Crown(G)$.

If $H$ is a K\"{o}nig-Egerv\'{a}ry subgraph of a graph $G$, it is not true
that there is some maximum independent set $S$ in $H$, such that both $S$ is a
crown of $G$ and $G\left[  S\cup N_{G}(S)\right]  =H$. See, for instance, the
subgraphs $H_{1}$ and $H_{2}$ of the graphs $G_{1}$ and $G_{2}$, respectively,
from Figure \ref{fig1444} (notice that $G_{1}$ is bipartite, while $G_{2}$ is
a non-bipartite K\"{o}nig-Egerv\'{a}ry graph).

\begin{figure}[h]
\setlength{\unitlength}{1cm}\begin{picture}(5,1.4)\thicklines
\multiput(2,0)(1,0){5}{\circle*{0.29}}
\multiput(3,1)(1,0){4}{\circle*{0.29}}
\put(2,0){\line(1,1){1}}
\put(3,0){\line(0,1){1}}
\put(3,0){\line(1,1){1}}
\put(4,0){\line(0,1){1}}
\put(4,1){\line(1,0){1}}
\put(5,0){\line(0,1){1}}
\put(5,0){\line(1,1){1}}
\put(5,1){\line(1,-1){1}}
\put(6,0){\line(0,1){1}}
\put(2.7,0){\makebox(0,0){$x$}}
\put(3.7,0){\makebox(0,0){$y$}}
\put(4.7,0){\makebox(0,0){$z$}}
\put(3,1.35){\makebox(0,0){$u$}}
\put(4,1.35){\makebox(0,0){$v$}}
\put(5,1.35){\makebox(0,0){$w$}}
\put(1.2,0.5){\makebox(0,0){$G_{1}$}}
\multiput(8,0)(1,0){5}{\circle*{0.29}}
\multiput(9,1)(1,0){4}{\circle*{0.29}}
\put(8,0){\line(1,1){1}}
\put(9,0){\line(0,1){1}}
\put(9,0){\line(1,1){1}}
\put(9,1){\line(1,0){3}}
\put(10,0){\line(0,1){1}}
\put(11,0){\line(0,1){1}}
\put(11,0){\line(1,1){1}}
\put(12,0){\line(0,1){1}}
\put(8.7,0){\makebox(0,0){$a$}}
\put(9.7,0){\makebox(0,0){$b$}}
\put(10.7,0){\makebox(0,0){$c$}}
\put(9,1.35){\makebox(0,0){$e$}}
\put(10,1.35){\makebox(0,0){$f$}}
\put(11,1.35){\makebox(0,0){$g$}}
\put(7.2,0.5){\makebox(0,0){$G_{2}$}}
\end{picture}\caption{$H_{1}=G_{1}\left[  \left\{  x,y,z,u,v,w\right\}
\right]  $ and $H_{2}=G_{2}\left[  \left\{  a,b,c,e,f,g\right\}  \right]  $
are K\"{o}nig-Egerv\'{a}ry graphs.}%
\label{fig1444}%
\end{figure}

The definition of a crown and Corollary \ref{Cor1} imply the following.

\begin{corollary}
Let $H=S\ast A$ be a K\"{o}nig-Egerv\'{a}ry subgraph of $G$. Then $S\in
Crown(G)$ if and only if $N_{G}(S)=A$.
\end{corollary}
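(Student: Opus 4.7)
The statement is a short consequence of the definition of a crown, Corollary \ref{Cor1}, and Theorem \ref{ThKE}\emph{(iii)}, so the proof naturally splits into two directions of a few lines each.

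First I would handle the ``if'' direction, which carries the main content. Assume $N_{G}(S)=A$. Then $V(H)=S\cup A=N_{G}[S]$, so $H$ coincides with $G[N_{G}[S]]$. Since $H=S\ast A$ is a K\"{o}nig-Egerv\'{a}ry graph with independent part $S$, Theorem \ref{ThKE}\emph{(iii)} supplies a matching inside $H$ from $A$ into $S$ of size $|A|$. Because this matching lives in $E(G)$ and $A=N_{G}(S)$, it is exactly a matching from $N_{G}(S)$ into $S$, so $S\in Crown(G)$ by definition.

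For the ``only if'' direction, assume $S\in Crown(G)$. By Corollary \ref{Cor1}, $G[N_{G}[S]]$ is K\"{o}nig-Egerv\'{a}ry. Applying Theorem \ref{ThKE}\emph{(iii)} to the given subgraph $H=S\ast A$, each vertex of $A$ is matched into $S$ inside $H$ and hence is a $G$-neighbor of some vertex in $S$, giving $A\subseteq N_{G}(S)$. For the reverse inclusion, I would invoke that $H$ is taken as the KE subgraph witnessing the neighborhood structure of $S$: together with the crown matching from $N_{G}(S)$ into $S$ supplied by the definition, no vertex of $N_{G}(S)$ can sit outside $V(H)$, so $N_{G}(S)\subseteq V(H)-S=A$, and equality follows.

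The key step is Theorem \ref{ThKE}\emph{(iii)}, which translates the KE property of $H$ into the matching from $A$ into $S$ that drives both implications. The main obstacle I expect is keeping the ``only if'' direction clean: the inclusion $A\subseteq N_{G}(S)$ is immediate from the KE matching, but $N_{G}(S)\subseteq A$ requires identifying $H$ with the closed-neighborhood subgraph of $S$ in $G$, and the argument must be presented so that this identification is transparent rather than circular.
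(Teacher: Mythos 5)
Your ``if'' direction is sound and is essentially what the paper intends (the paper gives no written proof, only the remark that the statement follows from the definition of a crown and Corollary \ref{Cor1}): once $N_{G}(S)=A$, the subgraph $H$ is $G[N_{G}[S]]$, and the K\"{o}nig-Egerv\'{a}ry structure of $H=S\ast A$ yields a matching saturating $A=N_{G}(S)$ into $S$, which is precisely the crown condition. One caution: Theorem \ref{ThKE}\emph{(iii)} only asserts the \emph{existence} of some decomposition with the matching property, not that the \emph{given} pair $(S,A)$ has it; to extract the matching for this particular $S$ you need to read $H=S\ast A$ as the canonical decomposition with $S\in\Omega(H)$ as in Theorem \ref{ThKE}\emph{(ii)} (and then, say, Theorem \ref{ThKEMatch}). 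Otherwise even $P_{3}=\{a\}\ast\{b,c\}$ is a ``K\"{o}nig-Egerv\'{a}ry decomposition'' admitting no matching from $A$ into $S$.

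The genuine gap is in your ``only if'' direction, at the step ``no vertex of $N_{G}(S)$ can sit outside $V(H)$.'' Nothing you have invoked forces this: a crown matching from $N_{G}(S)$ into $S$ is perfectly compatible with $N_{G}(S)$ meeting $V(G)-V(H)$. Concretely, in $G=P_{5}$ with vertices $v_{1},\dots,v_{5}$, take $H=G[\{v_{1},v_{2},v_{3}\}]=\{v_{1},v_{3}\}\ast\{v_{2}\}$; this is a K\"{o}nig-Egerv\'{a}ry subgraph with $S=\{v_{1},v_{3}\}\in\Omega(H)$, and $S$ is a crown of $G$ (match $v_{2}$ to $v_{1}$ and $v_{4}$ to $v_{3}$), yet $N_{G}(S)=\{v_{2},v_{4}\}\neq\{v_{2}\}=A$. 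So the inclusion $N_{G}(S)\subseteq A$ cannot be derived from the crown property alone; it holds only under the reading, implicit in the paragraph preceding the corollary, that $H$ is required to equal $G[S\cup N_{G}(S)]$ --- in which case $N_{G}(S)=V(H)-S=A$ holds by definition and there is nothing left to prove in that direction. You should either add that hypothesis explicitly or note that, as literally written, the ``only if'' implication fails.
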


Combining Theorems \ref{ThKE}, \ref{th5}, \ref{th10} we obtain the following.

\begin{theorem}
These assertions are equivalent:

\emph{(i)} $G$ is a K\"{o}nig-Egerv\'{a}ry graph;

\emph{(ii)} $Crown(G)\cap\Omega(G)\neq\emptyset$;

\emph{(iii)} $MaxCrown(G)\cap\Omega(G)\neq\emptyset$;

\emph{(iv)} $MaxCrown(G)=\Omega(G)$.
\end{theorem}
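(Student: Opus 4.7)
The plan is to prove the cycle of implications $(i)\Rightarrow(iv)\Rightarrow(iii)\Rightarrow(ii)\Rightarrow(i)$, of which two steps are immediate and the other two reduce to combining the results already catalogued in the paper (mainly Theorems \ref{ThKE}, \ref{th5}, \ref{th10} and Corollary \ref{Cor1}).

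The implications $(iv)\Rightarrow(iii)$ and $(iii)\Rightarrow(ii)$ are trivial: the first follows because $\Omega(G)$ is non-empty, so $MaxCrown(G)=\Omega(G)$ gives $MaxCrown(G)\cap\Omega(G)=\Omega(G)\neq\emptyset$; the second is just the inclusion $MaxCrown(G)\subseteq Crown(G)$.

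For $(ii)\Rightarrow(i)$ the approach is the following. Assume there exists $S\in Crown(G)\cap\Omega(G)$. The key observation is that $N[S]=V(G)$: if some $v\in V(G)-S$ had no neighbor in $S$, then $S\cup\{v\}$ would be a strictly larger independent set, contradicting $S\in\Omega(G)$. Hence $G[N[S]]=G$, and Corollary \ref{Cor1} (applied to the crown $S$) says that $G[N[S]]$ is a K\"onig-Egerv\'ary graph. Therefore $G$ itself is K\"onig-Egerv\'ary.

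The main (but still short) step is $(i)\Rightarrow(iv)$. Assume $G$ is K\"onig-Egerv\'ary. By Theorem \ref{th5}, every $S\in\Omega(G)$ is a critical independent set; since $|S|=\alpha(G)$ is the largest cardinality any independent set can attain, $S$ is in fact a maximum critical independent set, so $\Omega(G)\subseteq MaxCritIndep(G)$. Theorem \ref{th10} then gives $\Omega(G)\subseteq MaxCrown(G)$. For the reverse inclusion, let $A\in MaxCrown(G)$. By the inclusion we just established, the maximum crown cardinality is at least $\alpha(G)$; but crowns are independent sets, so $|A|\leq\alpha(G)$. Thus $|A|=\alpha(G)$ and $A\in\Omega(G)$, giving $MaxCrown(G)=\Omega(G)$.

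I do not expect a serious obstacle: the nontrivial content is packaged in Theorems \ref{th5} and \ref{th10} and Corollary \ref{Cor1}. The one subtle point to flag is the observation $N[S]=V(G)$ for $S\in\Omega(G)$, which is what unlocks $(ii)\Rightarrow(i)$ by letting Corollary \ref{Cor1} transfer the K\"onig-Egerv\'ary property from $G[N[S]]$ back to $G$.
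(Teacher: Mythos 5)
Your proof is correct, and it follows essentially the same route the paper intends: the paper gives no written argument beyond ``Combining Theorems \ref{ThKE}, \ref{th5}, \ref{th10} we obtain the following,'' and your chain of implications fills in exactly those ingredients (with Corollary \ref{Cor1}, itself a consequence of Theorem \ref{ThKE}, supplying the step $(ii)\Rightarrow(i)$ via the observation $N[S]=V(G)$ for $S\in\Omega(G)$).
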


Let notice that there exists a tree $G$, such that $CritIndep(G)\neq
Crown(G)$. For instance, the path $P_{3}=\left(  \left\{  a,b,c\right\}
,\left\{  ab,bc\right\}  \right)  $ has $CritIndep(P_{3})=\left\{  \left\{
a,c\right\}  \right\}  $, while $Crown(P_{3})=\left\{  \emptyset,\left\{
a\right\}  ,\left\{  c\right\}  ,\left\{  a,c\right\}  \right\}  $.

It is easy to see that $CritIndep(C_{2n+1})=Crown(C_{2n+1})=\{\emptyset
\}\neq\Psi(C_{2n+1})$, while $CritIndep(C_{2n})=Crown(C_{2n})=\Psi(C_{2n})$.

\begin{proposition}
\label{Prop6}If $G$ is a K\"{o}nig-Egerv\'{a}ry graph with a perfect matching,
then $CritIndep(G)=Crown(G)$.
\end{proposition}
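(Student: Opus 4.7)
The plan is to prove the two inclusions separately, noting that $CritIndep(G)\subseteq Crown(G)$ is already free from Theorem \ref{ThCrown}. The real work is the reverse inclusion $Crown(G)\subseteq CritIndep(G)$ under the hypothesis that $G$ is a K\"onig-Egerv\'ary graph with a perfect matching.

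First, I would compute $d(G)$ exactly. Since $G$ has a perfect matching, $\mu(G)=|V(G)|/2$, and since $G$ is K\"onig-Egerv\'ary, $\alpha(G)+\mu(G)=|V(G)|$, so $\alpha(G)=\mu(G)$. Theorem \ref{th8} then yields
\[
d(G)=\alpha(G)-\mu(G)=0.
\]
This pins down the target value for the critical difference.

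Next, for an arbitrary $S\in Crown(G)$, I would use the definition directly: there is a matching from $N(S)$ into $S$, which is an injection, so $|N(S)|\le|S|$, i.e.\ $d(S)=|S|-|N(S)|\ge 0$. Combining with the trivial bound $d(S)\le d(G)=0$ coming from the definition of $d(G)$ as a maximum over all subsets of $V(G)$, I obtain $d(S)=0=d(G)$. Since $S$ is also independent (being a crown), this gives $S\in CritIndep(G)$, closing the proof.

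I do not expect a serious obstacle here: the whole argument hinges on the identity $d(G)=0$, which is a clean consequence of Theorem \ref{th8} and the perfect-matching hypothesis, together with the one-line observation that a crown automatically has nonnegative difference. The only thing worth double-checking is that the application of Theorem \ref{th8} really uses the full strength of the K\"onig-Egerv\'ary hypothesis (it does), and that connectedness is not needed (it isn't, since $d(G)=\alpha(G)-\mu(G)$ via Theorem \ref{th8} applies to any K\"onig-Egerv\'ary graph).
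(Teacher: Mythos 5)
Your proposal is correct and follows essentially the same route as the paper: both rest on $d(G)=\alpha(G)-\mu(G)=0$ from Theorem \ref{th8} together with the observation that a crown $S$ satisfies $|N(S)|\leq|S|$, hence $d(S)\geq 0=d(G)$. The paper merely phrases the second step as a proof by contradiction, which is the same argument in contrapositive form.
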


\begin{proof}
By Theorem \ref{th8}, we get $d(G)=\alpha\left(  G\right)  -\mu\left(
G\right)  =0$. According to Theorem \ref{ThCrown}, we have
$CritIndep(G)\subseteq Crown(G)$. Suppose, to the contrary, that there is some
$A\in Crown(G)$ such that $A\notin CritIndep(G)$. It follows that
$d(A)=\left\vert A\right\vert -\left\vert N\left(  A\right)  \right\vert <0$.
Consequently, there is no matching from $N\left(  A\right)  $ into $A$, in
contradiction with $A\in Crown(G)$.
\end{proof}

The converse to Proposition \ref{Prop6} is not true. For instance,
$CritIndep(C_{5})=Crown(C_{5})$, while $C_{5}$ is not a K\"{o}nig-Egerv\'{a}ry graph.

The K\"{o}nig-Egerv\'{a}ry graphs $G_{1}$, $G_{2}$ and $G_{2}$ from Figure
\ref{fig3} have $d(G_{1})=d(G_{2})=1$, while $d(G_{3})=0$. In addition,
$\left\{  a\right\}  \in\Psi\left(  G_{1}\right)  -CritIndep(G_{1})$, and
$\left\{  u,v\right\}  \in$ $\Psi\left(  G_{2}\right)  -CritIndep(G_{2})$,
which means that both $CritIndep(G_{1})\neq\Psi(G_{1})$ and $CritIndep(G_{2}%
)\neq\Psi(G_{2})$.

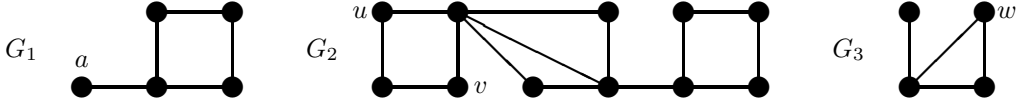
\begin{figure}[h]
\setlength{\unitlength}{1cm}\begin{picture}(5,1.4)\thicklines
\multiput(1,0)(1,0){3}{\circle*{0.29}}
\multiput(2,1)(1,0){2}{\circle*{0.29}}
\put(1,0){\line(1,0){2}}
\put(2,0){\line(0,1){1}}
\put(2,1){\line(1,0){1}}
\put(3,0){\line(0,1){1}}
\put(1,0.35){\makebox(0,0){$a$}}
\put(0.2,0.5){\makebox(0,0){$G_{1}$}}
\multiput(5,0)(1,0){6}{\circle*{0.29}}
\multiput(5,1)(1,0){2}{\circle*{0.29}}
\multiput(8,1)(1,0){3}{\circle*{0.29}}
\put(5,0){\line(1,0){1}}
\put(5,1){\line(1,0){3}}
\put(5,0){\line(0,1){1}}
\put(6,0){\line(0,1){1}}
\put(6,1){\line(1,-1){1}}
\put(6,1){\line(2,-1){2}}
\put(7,0){\line(1,0){3}}
\put(9,1){\line(1,0){1}}
\put(8,0){\line(0,1){1}}
\put(9,0){\line(0,1){1}}
\put(9,1){\line(1,0){1}}
\put(10,0){\line(0,1){1}}
\put(4.7,1){\makebox(0,0){$u$}}
\put(6.3,0){\makebox(0,0){$v$}}
\put(4.2,0.5){\makebox(0,0){$G_{2}$}}
\multiput(12,0)(1,0){2}{\circle*{0.29}}
\multiput(12,1)(1,0){2}{\circle*{0.29}}
\put(12,0){\line(1,0){1}}
\put(12,0){\line(0,1){1}}
\put(12,0){\line(1,1){1}}
\put(13,0){\line(0,1){1}}
\put(13.3,1){\makebox(0,0){$w$}}
\put(11.2,0.5){\makebox(0,0){$G_{3}$}}
\end{picture}\caption{K\"{o}nig-Egerv\'{a}ry graphs. Only $G_{3}$ has a
perfect matching.}%
\label{fig3}%
\end{figure}

\begin{proposition}
\label{lem2}If $CritIndep(G)=\Psi(G)$, then $G$ is a K\"{o}nig-Egerv\'{a}ry
graph with a perfect matching.
\end{proposition}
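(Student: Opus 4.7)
The plan is to reduce the conclusion to two independent observations, each obtained by applying the hypothesis $CritIndep(G)=\Psi(G)$ to a single well-chosen element of $\Psi(G)$. The first observation will give the K\"onig-Egerv\'ary property; the second will force $d(G)=0$; and then Theorem \ref{th8} together with the identity $\alpha(G)+\mu(G)=|V(G)|$ will immediately produce a perfect matching.

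For the K\"onig-Egerv\'ary conclusion, I would use that $\Omega(G)\subseteq\Psi(G)$ holds in every graph. Under the hypothesis, this gives $\Omega(G)\subseteq CritIndep(G)$, i.e.\ every maximum independent set of $G$ is critical. Theorem \ref{th5} then says that $G$ is a K\"onig-Egerv\'ary graph.

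For the perfect-matching conclusion, I would apply the hypothesis to the empty set. Since $G[N[\emptyset]]=G[\emptyset]$ is the empty graph whose unique independent set is $\emptyset$ itself, the definition forces $\emptyset\in\Psi(G)$. Hence $\emptyset\in CritIndep(G)$, so $d(G)=d(\emptyset)=|\emptyset|-|N(\emptyset)|=0$. Combining with the K\"onig-Egerv\'ary conclusion from the previous step, Theorem \ref{th8} yields
\[
0=d(G)=\alpha(G)-\mu(G),
\]
so $\alpha(G)=\mu(G)$; together with $\alpha(G)+\mu(G)=|V(G)|$ this gives $2\mu(G)=|V(G)|$, i.e.\ a perfect matching.

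There is essentially no technical obstacle once one notices that the hypothesis may be tested on the extreme element $\emptyset\in\Psi(G)$; the only point worth checking is that the paper's definition of a local maximum independent set is satisfied by the empty set, which it is, since $\emptyset$ is trivially the (unique) maximum independent set of the empty induced subgraph. If one prefers to avoid invoking $\emptyset$, the same argument can be run with an isolated vertex when one exists, but the empty-set route is cleaner and works uniformly.
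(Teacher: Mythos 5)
Your proposal is correct and follows essentially the same route as the paper: the K\"onig-Egerv\'ary property via $\Omega(G)\subseteq\Psi(G)=CritIndep(G)$ and Theorem \ref{th5}, and the perfect matching via $\emptyset\in\Psi(G)=CritIndep(G)$ forcing $d(G)=0$ and then Theorem \ref{th8}. The only cosmetic difference is that the paper derives $\alpha(G)=\mu(G)$ by writing $|S|=|N(S)|=|V(G)-S|$ for $S\in\Omega(G)$, whereas you invoke $\alpha(G)+\mu(G)=|V(G)|$ directly; both are immediate.
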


\begin{proof}
Since $\Omega\left(  G\right)  \subseteq\Psi\left(  G\right)  $, we infer that
every maximum independent set of $G$ is critical. Consequently, by Theorem
\ref{th5}, $G$ is a K\"{o}nig-Egerv\'{a}ry graph. Further, Theorem \ref{th8}
implies
\[
d\left(  S\right)  =d\left(  G\right)  =\alpha\left(  G\right)  -\mu\left(
G\right)  =d\left(  \emptyset\right)  =0,
\]
for every $S\in\Omega\left(  G\right)  $, because $\emptyset\in\Psi
(G)=CritIndep(G)$. Hence we obtain
\[
\alpha\left(  G\right)  =\left\vert S\right\vert =\left\vert N(S)\right\vert
=\left\vert V\left(  G\right)  -S\right\vert =\mu\left(  G\right)  ,
\]
i.e., $G$ has a perfect matching.
\end{proof}

Notice that the converse of Proposition \ref{lem2} is not true, even if $G$
has a unique perfect matching; e.g., consider the graph $G_{3}$ from Figure
\ref{fig3}, where $\left\{  w\right\}  \in\Psi(G_{3})-CritIndep(G_{3})$.

Combining Propositions \ref{lem2}, \ref{Prop6} and Theorem \ref{ThCrownBip} we
conclude with the following.

\begin{corollary}
\label{Prop1}Let $G$ be a bipartite graph. Then $CritIndep(G)=\Psi\left(
G\right)  $ if and only if $G$ has a perfect matching.
\end{corollary}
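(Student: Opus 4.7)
The plan is to derive the corollary almost immediately from the three results cited just before its statement, handling each implication separately.

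For the forward direction, I would assume $CritIndep(G)=\Psi(G)$ and apply Proposition \ref{lem2} directly, since that proposition already concludes that $G$ is a K\"onig-Egerv\'ary graph with a perfect matching without using the bipartite hypothesis at all; in particular, $G$ has a perfect matching, which is all that is needed here.

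For the backward direction, I would assume that the bipartite graph $G$ has a perfect matching. Because every bipartite graph is a K\"onig-Egerv\'ary graph, I can invoke Proposition \ref{Prop6} to conclude $CritIndep(G)=Crown(G)$. Then, using the fact that $G$ is bipartite, Corollary \ref{ThCrownBip} gives $Crown(G)=\Psi(G)$. Chaining these two equalities yields $CritIndep(G)=\Psi(G)$, as desired.

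There is no real obstacle here: the statement is essentially the intersection of three previously established facts, and the bipartite hypothesis is used only to feed Corollary \ref{ThCrownBip} (to collapse $Crown$ onto $\Psi$) and to guarantee that $G$ is K\"onig-Egerv\'ary in the backward direction so that Proposition \ref{Prop6} applies once we know of a perfect matching. The only thing worth double-checking while writing is that Proposition \ref{lem2} indeed produces a perfect matching unconditionally, so that no extra argument is needed in the forward direction.
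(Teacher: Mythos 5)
Your proposal is correct and follows essentially the same route as the paper, which proves this corollary precisely by combining Proposition \ref{lem2} for the forward direction with Proposition \ref{Prop6} and Corollary \ref{ThCrownBip} for the converse. No gaps: Proposition \ref{lem2} indeed yields the perfect matching unconditionally, and the bipartite hypothesis is used exactly where you use it.
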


Corollary \ref{Prop1} can not be extended to K\"{o}nig-Egerv\'{a}ry graphs.
For example, consider the graph $G_{3}$ from Figure \ref{fig3}.

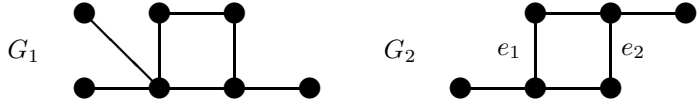
\begin{figure}[h]
\setlength{\unitlength}{1cm}\begin{picture}(5,1.1)\thicklines
\multiput(3,0)(1,0){4}{\circle*{0.29}}
\multiput(3,1)(1,0){3}{\circle*{0.29}}
\put(3,0){\line(1,0){3}}
\put(3,1){\line(1,-1){1}}
\put(4,1){\line(1,0){1}}
\put(4,0){\line(0,1){1}}
\put(5,0){\line(0,1){1}}
\put(2.2,0.5){\makebox(0,0){$G_{1}$}}
\multiput(8,0)(1,0){3}{\circle*{0.29}}
\multiput(9,1)(1,0){3}{\circle*{0.29}}
\put(8,0){\line(1,0){2}}
\put(9,0){\line(0,1){1}}
\put(10,0){\line(0,1){1}}
\put(9,1){\line(1,0){2}}
\put(8.65,0.5){\makebox(0,0){$e_{1}$}}
\put(10.3,0.5){\makebox(0,0){$e_{2}$}}
\put(7.2,0.5){\makebox(0,0){$G_{2}$}}
\end{picture}\caption{Bipartite graphs with maximum matchings that are
uniquely restricted.}%
\label{fig22}%
\end{figure}

Consider the bipartite graphs from Figure \ref{fig22}. According to Theorem
\ref{ThGreed}\textit{(ii)}, the family $\Psi(G_{1})$ is a greedoid, while
$\Psi(G_{2})$ is not a greedoid, because the maximum matching $\left\{
e_{1},e_{2}\right\}  $ is not uniquely restricted. However, $G_{1}$ has no
perfect matchings, and therefore, $CritIndep(G_{1})\neq\Psi\left(
G_{1}\right)  $, by Corollary \ref{Prop1}.

\begin{theorem}
\label{th3}Let $G$ be a bipartite graph. Then $CritIndep(G)$ is a greedoid if
and only if $G$ has a unique perfect matching.
\end{theorem}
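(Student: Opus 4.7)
The plan is to reduce the theorem to a statement about uniquely restricted matchings by combining Corollary \ref{Prop1} (which identifies $CritIndep(G)=\Psi(G)$ precisely when the bipartite graph $G$ has a perfect matching) with Theorem \ref{ThGreed}(ii) (which characterizes when $\Psi(G)$ is a greedoid for bipartite $G$). The bridging observation I will use is that a perfect matching $M$ of $G$ is uniquely restricted if and only if $M$ is the unique perfect matching of $G$: this is immediate because the vertices saturated by $M$ form all of $V(G)$, so the subgraph induced on them is $G$ itself.

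For the \emph{if} direction, assume $G$ has a unique perfect matching $M$. Since $G$ admits a perfect matching, every maximum matching of $G$ is perfect and therefore equals $M$, so every maximum matching is uniquely restricted by the observation above. Theorem \ref{ThGreed}(ii) then gives that $\Psi(G)$ is a greedoid, and Corollary \ref{Prop1} yields $CritIndep(G)=\Psi(G)$, which is consequently a greedoid.

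For the converse, the key — and, I expect, the only nontrivial — step is to extract the existence of a perfect matching from the greedoid axioms alone. Assume $CritIndep(G)$ is a greedoid. Applying accessibility repeatedly starting from any feasible set shows that $\emptyset\in CritIndep(G)$, hence $d(G)=d(\emptyset)=0$. Because $G$ is bipartite and therefore K\"onig--Egerv\'ary, Theorem \ref{th8} forces $\alpha(G)-\mu(G)=0$; combined with $\alpha(G)+\mu(G)=|V(G)|$ this gives $2\mu(G)=|V(G)|$, so $G$ has a perfect matching. Now Corollary \ref{Prop1} turns the hypothesis into the statement that $\Psi(G)$ is a greedoid, Theorem \ref{ThGreed}(ii) yields that every maximum matching of $G$ is uniquely restricted, and the bridging observation converts this — since the maximum matchings of $G$ are exactly its perfect matchings — into the conclusion that $G$ has a unique perfect matching.
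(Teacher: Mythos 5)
Your proposal is correct and follows essentially the same route as the paper: both directions rest on Corollary \ref{Prop1} together with Theorem \ref{ThGreed}(ii), and the converse extracts $\emptyset\in CritIndep(G)$ from accessibility to get $d(G)=0$ and hence a perfect matching via Theorem \ref{th8}. The only cosmetic difference is that the paper runs the accessibility chain down from a set in $\Omega(G)\subseteq CritIndep(G)$ while you start from an arbitrary feasible set, and you make explicit the (correct) observation that a uniquely restricted perfect matching is the unique perfect matching of $G$.
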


\begin{proof}
By Theorems \ref{th5} and \ref{ThCrownBip}, we infer that
\[
\Omega\left(  G\right)  \subseteq CritIndep(G)\subseteq Crown(G)=\Psi(G).
\]
Hence, every $S\in\Omega\left(  G\right)  $ has an accessibility chain, i.e.,
there are $S_{k}\in CritIndep(G)$ such that
\[
\emptyset=S_{0}\subset S_{1}\subset...\subset S_{\alpha(G)}=S\text{ and
}\left\vert S_{k}\right\vert =k.
\]
It follows that $\left\vert S_{0}\right\vert =0=\left\vert N\left(
S_{0}\right)  \right\vert $, which, by Theorem \ref{th8}, implies
\[
d(G)=\alpha\left(  G\right)  -\mu\left(  G\right)  =d(S_{0})=0.
\]
Consequently, $G$ has a perfect matching, say $M$. Hence, Corollary
\ref{Prop1} implies $CritIndep(G)=\Psi\left(  G\right)  $. Further, according
to Theorem \ref{ThGreed}\textit{(ii)}, all maximum matchings of $G$ must be
uniquely restricted. In other words, $M$ is the unique perfect matching of $G$.

The converse follows by combining Theorem \ref{ThGreed}\textit{(ii)} and
Corollary \ref{Prop1}.
\end{proof}

Combining Theorems \ref{ThGreed}\textit{(ii)} and \ref{th3}, we obtain the following.

\begin{corollary}
\label{cor2}If $G$ is a bipartite graph such that $CritIndep(G)$ is a
greedoid, then $CritIndep(G)=\Psi\left(  G\right)  $.
\end{corollary}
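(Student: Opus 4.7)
The plan is to chain two earlier results. Assume $G$ is bipartite and $CritIndep(G)$ is a greedoid. The first step is to invoke Theorem \ref{th3}, which characterizes exactly when $CritIndep(G)$ is a greedoid for bipartite graphs: namely, iff $G$ has a unique perfect matching. So from the hypothesis we immediately extract that $G$ has a perfect matching (uniqueness is more than we need here).

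The second step is to apply Corollary \ref{Prop1}, which says that for a bipartite graph, $CritIndep(G)=\Psi(G)$ holds precisely when $G$ has a perfect matching. Plugging in the conclusion from the first step gives the desired equality. So the full argument is essentially a two-line syllogism: greedoid $\Rightarrow$ unique perfect matching $\Rightarrow$ perfect matching $\Rightarrow$ $CritIndep(G)=\Psi(G)$.

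There is no real obstacle here, since both of the cited results are already established; the work has been absorbed into the proofs of Theorem \ref{th3} and Corollary \ref{Prop1}. The only thing worth being explicit about is that we only use the forward direction of Theorem \ref{th3} and the ``if'' direction of Corollary \ref{Prop1}, and that the statement of the corollary does \emph{not} go back the other way (as the graph $G_{3}$ of Figure \ref{fig3} shows, having a perfect matching together with $CritIndep(G)=\Psi(G)$ does not force $CritIndep(G)$ to be a greedoid in the non-bipartite setting). So the proof I would write is essentially just the sentence: by Theorem \ref{th3}, $G$ has a (unique) perfect matching, and then by Corollary \ref{Prop1}, $CritIndep(G)=\Psi(G)$.
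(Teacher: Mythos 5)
Your proof is correct and follows essentially the same route as the paper: the forward direction of Theorem \ref{th3} yields a (unique) perfect matching, and Corollary \ref{Prop1} then gives $CritIndep(G)=\Psi(G)$, which is exactly the chain used inside the paper's proof of Theorem \ref{th3} (the paper's own counterexample to the converse is the bipartite graph $C_{4}$, which is a cleaner witness than your aside about $G_{3}$, but that does not affect the argument).
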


The converse of Corollary \ref{cor2} is not true; e.g., consider $C_{4}$.

Let us notice that $K_{1}$ is a tree and $CritIndep(K_{1})\neq\Psi\left(
K_{1}\right)  $, since $\emptyset\in\Psi\left(  K_{1}\right)  -CritIndep(K_{1}%
)$. Taking into account that a tree may have at most one perfect matching, we
get the following.

\begin{corollary}
Let $T$ be a tree of order at least two. Then the following are equivalent:

\emph{(i)} $CritIndep(T)=Crown(G)$;

\emph{(ii)} $CritIndep(T)=\Psi\left(  T\right)  $;

\emph{(iii)} $d(T)=0$;

\emph{(iv)} $T$ has a perfect matching;

\emph{(v)} $CritIndep(T)$ is a greedoid.
\end{corollary}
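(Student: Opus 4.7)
The strategy is to exploit the fact that every tree is bipartite (hence K\"onig-Egerv\'ary) and to leverage the machinery already assembled in the preceding section. The plan is to prove the cycle of implications via three short reductions: (i)$\Leftrightarrow$(ii), (ii)$\Leftrightarrow$(iv), (iii)$\Leftrightarrow$(iv), and (iv)$\Leftrightarrow$(v), each of which follows from an earlier result with at most a one-line verification.

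\textbf{First step (i)$\Leftrightarrow$(ii) and (ii)$\Leftrightarrow$(iv).} Since $T$ is bipartite, Corollary \ref{ThCrownBip} gives $Crown(T)=\Psi(T)$, so the equations $CritIndep(T)=Crown(T)$ and $CritIndep(T)=\Psi(T)$ are literally the same, which settles (i)$\Leftrightarrow$(ii). For (ii)$\Leftrightarrow$(iv) I would simply quote Corollary \ref{Prop1}, which states exactly that, for a bipartite graph, $CritIndep(T)=\Psi(T)$ if and only if $T$ has a perfect matching.

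\textbf{Second step (iii)$\Leftrightarrow$(iv).} Because $T$ is K\"onig-Egerv\'ary, Theorem \ref{th8} gives $d(T)=\alpha(T)-\mu(T)$, and the K\"onig-Egerv\'ary identity reads $\alpha(T)+\mu(T)=|V(T)|$. Combining these, $d(T)=0$ is equivalent to $\alpha(T)=\mu(T)=|V(T)|/2$, which is precisely the condition that $T$ admits a perfect matching. No further work is needed.

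\textbf{Third step (iv)$\Leftrightarrow$(v).} Theorem \ref{th3} asserts that for a bipartite graph, $CritIndep$ is a greedoid if and only if the graph has a \emph{unique} perfect matching. The remaining observation is the standard fact that a tree has at most one perfect matching: indeed, the symmetric difference of two distinct perfect matchings would contain an even cycle, contradicting the acyclicity of $T$. Hence for a tree, possessing a perfect matching coincides with possessing a unique perfect matching, yielding (iv)$\Leftrightarrow$(v). The only mild subtlety requiring the hypothesis ``order at least two'' is the earlier-noted defect at $K_1$, where $\emptyset\in\Psi(K_1)\setminus CritIndep(K_1)$; in any tree $T$ with $|V(T)|\ge 2$, we have $E(T)\ne\emptyset$ so each equivalence genuinely applies without degeneracy. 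I do not foresee any substantial obstacle; the proof is purely a matter of invoking Corollary \ref{ThCrownBip}, Corollary \ref{Prop1}, Theorem \ref{th8}, and Theorem \ref{th3} in sequence.
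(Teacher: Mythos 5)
Your proof is correct, but it routes several of the implications differently from the paper. You dispatch \emph{(ii)}$\Leftrightarrow$\emph{(iv)} by a direct appeal to Corollary \ref{Prop1}, and \emph{(iv)}$\Leftrightarrow$\emph{(v)} by combining Theorem \ref{th3} with the observation that a tree has at most one perfect matching (the symmetric difference of two distinct perfect matchings would yield a cycle). The paper instead argues more locally: it proves \emph{(ii)}$\Rightarrow$\emph{(iii)} by noting that a tree of order at least two has a leaf $v$ with $\{v\}\in\Psi(T)$, hence $d(T)=d(\{v\})=0$; it proves \emph{(iii)}$\Rightarrow$\emph{(ii)} by a direct contradiction using the fact that $T[S]$ is a forest so $|S|\geq|N(S)|$ for $S\in\Psi(T)$; it gets \emph{(ii)}$\Rightarrow$\emph{(v)} from Theorem \ref{ThGreed}\emph{(i)} ($\Psi(T)$ is always a greedoid for a tree); and it proves \emph{(v)}$\Rightarrow$\emph{(iii)} via an accessibility-chain argument showing $\emptyset\in CritIndep(T)$. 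Your version is shorter and more modular, reusing the bipartite machinery (Corollary \ref{Prop1}, Theorem \ref{th3}) as black boxes --- indeed the paper itself signals this possibility by remarking, just before the corollary, that a tree has at most one perfect matching, yet its written proof does not actually exploit Theorem \ref{th3}. The paper's treatment buys tree-specific insight (the role of leaves, the forest inequality $|S|\geq|N(S)|$, and how accessibility chains force $d=0$) at the cost of redoing some work. The common parts --- \emph{(i)}$\Leftrightarrow$\emph{(ii)} via Corollary \ref{ThCrownBip} and \emph{(iii)}$\Leftrightarrow$\emph{(iv)} via Theorem \ref{th8} --- are identical in both arguments, and your chain of equivalences does connect all five statements, so there is no gap.
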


\begin{proof}
\emph{(i)} $\Leftrightarrow$ \emph{(ii)} It follows by combining Theorems
\ref{ThCrown}, \ref{ThCrownBip}.

\emph{(ii)} $\Rightarrow$ \emph{(iii)} The tree $T$ has a leaf, say $v$,
because its order is at least two. Since, clearly, $\left\{  v\right\}
\in\Psi\left(  T\right)  $, we get that $\left\{  v\right\}  \in
CritIndep(T)$, and hence, $d(T)=d\left(  \left\{  v\right\}  \right)  =0$.

\emph{(iii)} $\Leftrightarrow$ \emph{(iv)} As a K\"{o}nig-Egerv\'{a}ry graph,
$T$ satisfies $d\left(  T\right)  =\alpha\left(  T\right)  -\mu\left(
T\right)  $, according to Theorem \ref{th8}. Consequently, we get that%
\[
d(T)=0\Leftrightarrow\alpha\left(  T\right)  =\mu\left(  T\right)
\Leftrightarrow T\text{ has a perfect matching}.
\]

\emph{(iii)} $\Rightarrow$ \emph{(ii)} By Theorem \ref{ThCrown}, it is enough
to show that $\Psi(T)\subseteq CritIndep(T)$.

Assume, to the contrary, that there is some $S\in$ $\Psi(T)-CritIndep(T)$.
Hence, $d\left(  S\right)  =\left\vert S\right\vert -\left\vert N\left(
S\right)  \right\vert <0=d(T)$, i.e., $\left\vert S\right\vert <\left\vert
N\left(  S\right)  \right\vert $. On the other hand, since $T[S]$ is a forest
and $S\in\Omega\left(  T[S]\right)  $, we infer that $\left\vert S\right\vert
\geq\left\vert N\left(  S\right)  \right\vert $, a contradiction.
Consequently, $CritIndep(T)=\Psi(T)$.

\emph{(ii)} $\Rightarrow$ \emph{(v)} Clear, as $\Psi\left(  T\right)  $ is a
greedoid for every tree $T$.

\emph{(v)} $\Rightarrow$ \emph{(iii)} By Theorem \ref{th5}, $\Omega\left(
T\right)  \subseteq$ $CritIndep(T)$. Hence, every $S\in\Omega\left(  T\right)
$ has an accessibility chain, i.e., there are $S_{k}\in CritIndep(T)$ such
that $S_{0}\subset S_{1}\subset...\subset S_{\alpha(T)}=S$ and $\left\vert
S_{k}\right\vert =k$. It follows that $\left\vert S_{0}\right\vert
=0=\left\vert N\left(  S_{0}\right)  \right\vert $, which implies
$d(T)=d(S_{0})=0$.
\end{proof}

Corollary \ref{Prop1} and Corollary \ref{ThCrownBip} imply that if $G$ is a
bipartite graph, then $CritIndep(G)=Crown(G)=\Psi(G)$ if and only $G$ has a
perfect matching. For instance, $CritIndep(C_{6})=\Omega\left(  C_{6}\right)
\cup\left\{  \emptyset\right\}  =\Psi(C_{6})$, while $CritIndep(C_{5}%
)=\left\{  \emptyset\right\}  \neq\Omega\left(  C_{5}\right)  \cup\left\{
\emptyset\right\}  =\Psi(C_{5})$.

\begin{theorem}
\label{th1111}$CritIndep(G)=Crown(G)=\Psi(G)$ if and only if $G[N[S]]$ is a
K\"{o}nig-Egerv\'{a}ry graph with a perfect matching for every $S\in
\Psi\left(  G\right)  $. In particular, $G$ is a K\"{o}nig-Egerv\'{a}ry graph
with a perfect matching.
\end{theorem}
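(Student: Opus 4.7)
The strategy is to split the biconditional into two implications; the ``in particular'' clause will then be recovered along the way in both directions.

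For the forward direction, assume $CritIndep(G)=Crown(G)=\Psi(G)$. The two key memberships are $\emptyset\in\Psi(G)$ and $\Omega(G)\subseteq\Psi(G)$. The first, combined with $\Psi(G)=CritIndep(G)$, forces $d(G)=d(\emptyset)=0$. The second says that every maximum independent set is critical, so Theorem \ref{th5} makes $G$ a K\"onig-Egerv\'ary graph; then Theorem \ref{th8} gives $\alpha(G)-\mu(G)=d(G)=0$, which combined with $\alpha(G)+\mu(G)=|V(G)|$ produces a perfect matching of $G$. Now fix any $S\in\Psi(G)=Crown(G)$. Corollary \ref{Cor1} makes $G[N[S]]$ a K\"onig-Egerv\'ary graph, and $S\in CritIndep(G)$ forces $d(S)=d(G)=0$, i.e.\ $|S|=|N(S)|$. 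Thus $S$ is a straight crown, so $G[N[S]]$ is K\"onig-Egerv\'ary with a perfect matching, as noted after Corollary \ref{Cor1}.

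For the converse, assume $G[N[S]]$ is K\"onig-Egerv\'ary with a perfect matching for every $S\in\Psi(G)$. Proposition \ref{Prop2} immediately yields $Crown(G)=\Psi(G)$ from the K\"onig-Egerv\'ary part of the hypothesis alone. Taking any $S\in\Omega(G)\subseteq\Psi(G)$, maximality of $S$ gives $N[S]=V(G)$, so $G=G[N[S]]$ inherits the perfect matching; by Theorem \ref{th8}, $d(G)=\alpha(G)-\mu(G)=0$. It remains to check $\Psi(G)\subseteq CritIndep(G)$. For $S\in\Psi(G)$, the perfect matching of $G[N[S]]$ together with $|S|=\alpha(G[N[S]])$ and $\alpha+\mu=|V|$ in that K\"onig-Egerv\'ary subgraph forces $|S|=|N(S)|$, so $d(S)=0=d(G)$ and $S\in CritIndep(G)$.

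No step is genuinely difficult; the whole argument is a bookkeeping chain through Proposition \ref{Prop2}, Corollary \ref{Cor1}, and Theorems \ref{th5} and \ref{th8}. The only point that needs care is choosing the right members of $\Psi(G)$ at which to apply the hypothesis: $\emptyset$ is used in the forward direction to pin down $d(G)=0$, while an arbitrary $S\in\Omega(G)$ (for which $N[S]=V(G)$) is used in the converse to recover the global K\"onig-Egerv\'ary-with-perfect-matching property of $G$ itself.
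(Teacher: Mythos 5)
Your proof is correct, and while it follows the same overall skeleton as the paper's (both directions pivot on Proposition \ref{Prop2}/Corollary \ref{Cor1} for the K\"{o}nig-Egerv\'{a}ry part and on $d(G)=0$ for the perfect-matching part), the key step of the forward direction is handled by a genuinely different and more economical route. The paper establishes $\left\vert S\right\vert \leq\left\vert N(S)\right\vert$ by extending $S$ to some $B\in\Omega(G)$ via Theorem \ref{th111}, invoking Theorem \ref{ThKEMatch} to place the global perfect matching $M$ inside $(B,V(G)-B)$, and then tracing $M$ onto $G[N[S]]$; you instead exploit the hypothesis $\Psi(G)=CritIndep(G)$ directly: once $d(G)=0$ is pinned down by $\emptyset\in CritIndep(G)$, every $S\in\Psi(G)$ is critical, hence $d(S)=d(G)=0$ gives $\left\vert S\right\vert =\left\vert N(S)\right\vert$ at once, making $S$ a straight crown whose closed neighborhood carries a perfect matching. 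This bypasses Theorems \ref{ThKEMatch} and \ref{th111} entirely and makes fuller use of the assumed equality of the three families. In the converse your argument is essentially the paper's, just unpacked: rather than citing Proposition \ref{Prop6}, you verify $\Psi(G)\subseteq CritIndep(G)$ by computing $\left\vert S\right\vert =\left\vert N(S)\right\vert$ from $\alpha(H)+\mu(H)=\left\vert V(H)\right\vert$ and the perfect matching of $H=G[N[S]]$; the observation that $S\in\Omega(G)$ forces $N[S]=V(G)$, so that $G$ itself inherits the hypothesis, is the same (implicit) step the paper relies on. The trade-off: the paper's trace argument yields the slightly stronger structural fact that the restriction of one global perfect matching is simultaneously perfect on every $G[N[S]]$, whereas your version is shorter and leans only on definitions plus Theorems \ref{th5} and \ref{th8}.
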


\begin{proof}
Assume that $CritIndep(G)=Crown(G)=\Psi(G)$.

By Proposition \ref{lem2}, $G$ is a K\"{o}nig-Egerv\'{a}ry graph with a
perfect matching, say $M$.

By Theorem \ref{ThKEMatch}, $M\subseteq(A,V(G)-A)$, for every $A\in\Omega(G)$.

Let $S\in\Psi\left(  G\right)  $.

$\bullet$ By Proposition \ref{Prop2}, $G[N[S]]$ is a K\"{o}nig-Egerv\'{a}ry
graph. Therefore, by Theorem \ref{ThKE}\emph{(ii)}, there is a matching from
$N\left(  S\right)  $ into $S$. Hence, $\left\vert N\left(  S\right)
\right\vert \leq\left\vert S\right\vert $.

$\bullet$ By Theorem \ref{th111}, there is some $B\in\Omega(G)$ such that
$S\subseteq B$. Consequently, the trace of $M$ on $G[N[S]]$ is a matching from
$S$ into $N\left(  S\right)  $. Thus $\left\vert S\right\vert \leq\left\vert
N\left(  S\right)  \right\vert $.

$\bullet$ All in all, $\left\vert S\right\vert =\left\vert N\left(  S\right)
\right\vert $. In other words, the trace of $M$ on $G[N[S]]$ must be a perfect matching.

Conversely, by Proposition \ref{Prop6}, we have that $CritIndep(G)=Crown(G)$,
and by Proposition \ref{Prop2}, we get that $Crown(G)=\Psi(G)$.
\end{proof}

For example, the graph $G$ from Figure \ref{fig1211} is a
K\"{o}nig-Egerv\'{a}ry graph, $S=\{a,b,c,d\}\in\Psi\left(  G\right)  $ and
$\left\vert S\right\vert \geq\left\vert N(S)\right\vert $, but $S\notin
Crown(G)$. However, $\{c,d\}\in\Psi\left(  G\right)  $ and $\left\vert
\{c,d\}\right\vert <\left\vert N(\{c,d\})\right\vert =\left\vert
\{u,v,w\})\right\vert $, but again $\{c,d\}\notin Crown(G)$.

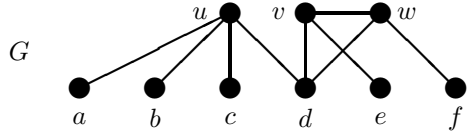
\begin{figure}[h]
\setlength{\unitlength}{1cm}\begin{picture}(5,1.5)\thicklines
\multiput(5,0.5)(1,0){6}{\circle*{0.29}}
\multiput(7,1.5)(1,0){3}{\circle*{0.29}}
\put(5,0.5){\line(2,1){2}}
\put(6,0.5){\line(1,1){1}}
\put(7,0.5){\line(0,1){1}}
\put(7,1.5){\line(1,-1){1}}
\put(8,0.5){\line(0,1){1}}
\put(8,1.5){\line(1,0){1}}
\put(8,1.5){\line(1,-1){1}}
\put(8,0.5){\line(1,1){1}}
\put(9,1.5){\line(1,-1){1}}
\put(5,0.1){\makebox(0,0){$a$}}
\put(6,0.1){\makebox(0,0){$b$}}
\put(7,0.1){\makebox(0,0){$c$}}
\put(8,0.1){\makebox(0,0){$d$}}
\put(9,0.1){\makebox(0,0){$e$}}
\put(10,0.1){\makebox(0,0){$f$}}
\put(6.6,1.5){\makebox(0,0){$u$}}
\put(7.65,1.5){\makebox(0,0){$v$}}
\put(9.35,1.5){\makebox(0,0){$w$}}
\put(4.2,1){\makebox(0,0){$G$}}
\end{picture}\caption{$G$ is a K\"{o}nig-Egerv\'{a}ry graph with
$\Omega\left(  G\right)  =$ $\left\{  \{a,b,c,d,e,f\}\right\}  $.}%
\label{fig1211}%
\end{figure}

By Theorem \ref{ThGreed}\textit{(iii)} and Theorem \ref{th1111}, we obtain the following.

\begin{corollary}
\label{cor9}If $G$ is a triangle-free graph with a unique perfect matching,
and $G[N[S]]$ is a K\"{o}nig-Egerv\'{a}ry graph with a perfect matching for
every $S\in\Psi\left(  G\right)  $, then $CritIndep(G)$ is a greedoid.
\end{corollary}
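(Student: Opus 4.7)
The plan is to deduce the conclusion by combining the hypothesis with Theorem \ref{th1111} and Theorem \ref{ThGreed}\emph{(iii)}, so that the greedoid property for $\Psi(G)$ transfers to $CritIndep(G)$ via the equality $CritIndep(G)=\Psi(G)$.

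First, I would invoke Theorem \ref{th1111}. The hypothesis that $G[N[S]]$ is a K\"onig-Egerv\'ary graph with a perfect matching for every $S\in\Psi(G)$ is precisely what Theorem \ref{th1111} requires, so I can immediately conclude
\[
CritIndep(G)=Crown(G)=\Psi(G),
\]
and moreover that $G$ itself is a K\"onig-Egerv\'ary graph with a perfect matching. In particular, the hypothesis already supplies the neighborhood condition appearing in Theorem \ref{ThGreed}\emph{(iii)}.

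Next, I would verify the matching condition required by Theorem \ref{ThGreed}\emph{(iii)}, namely that every maximum matching of $G$ is uniquely restricted. Here I use that $G$ has a perfect matching together with the hypothesis that this perfect matching is unique. Since every maximum matching of a graph admitting a perfect matching is itself perfect, $G$ has a unique maximum matching $M$. By the very definition of a uniquely restricted matching, $M=\{a_ib_i:1\le i\le k\}$ being the only perfect matching of $G=G[\{a_i,b_i:1\le i\le k\}]$ means $M$ is uniquely restricted. Thus every (the only) maximum matching of $G$ is uniquely restricted.

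With both conditions of Theorem \ref{ThGreed}\emph{(iii)} satisfied and $G$ triangle-free by hypothesis, that theorem yields that $\Psi(G)$ is a greedoid. Combined with $CritIndep(G)=\Psi(G)$ from the first step, this gives $CritIndep(G)$ as a greedoid, completing the argument. There is no real obstacle here, as the proof is simply an assembly of Theorems \ref{th1111} and \ref{ThGreed}\emph{(iii)}; the only point deserving care is the small observation that a unique perfect matching in a graph having a perfect matching is automatically the unique maximum matching, and hence uniquely restricted.
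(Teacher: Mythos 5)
Your proof is correct and follows exactly the route the paper intends: the corollary is stated there without a written proof, being derived "by Theorem \ref{ThGreed}\emph{(iii)} and Theorem \ref{th1111}," which is precisely your assembly. Your added observation that a unique perfect matching is automatically the unique maximum matching and hence uniquely restricted is a correct filling-in of the detail the paper leaves implicit.
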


\section{Conclusions}

If $CritIndep(G)=Crown(G)$, then $d\left(  G\right)  =0$, because
$\emptyset\in Crown(G)$.

\begin{problem}
Characterize graphs satisfying $CritIndep(G)=Crown(G)$.
\end{problem}

Lemma \ref{Lemma1}, Theorem \ref{th3} and Corollary \ref{cor9} motivate the following.

\begin{conjecture}
Let $G$ be a triangle-free graph. $CritIndep(G)$ is a greedoid if and only if
$G[N[S]]$ is a K\"{o}nig-Egerv\'{a}ry graph with a unique perfect matching for
every $S\in\Psi\left(  G\right)  $.
\end{conjecture}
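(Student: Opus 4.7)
The plan is to attack the two implications separately: the $(\Leftarrow)$ direction should follow by assembling existing results, while the $(\Rightarrow)$ direction requires extracting structural information from the greedoid axioms in the presence of triangle-freeness.

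For the $(\Leftarrow)$ direction, I would assume $G[N[S]]$ is K\"onig-Egerv\'ary with a unique perfect matching for every $S \in \Psi(G)$. Theorem \ref{th1111} then gives $CritIndep(G) = Crown(G) = \Psi(G)$, and Proposition \ref{lem2} makes $G$ itself K\"onig-Egerv\'ary with a perfect matching. Picking any $S \in \Omega(G) \subseteq \Psi(G)$ and invoking Theorem \ref{ThKEMatch} shows the perfect matching of $G$ lies inside $(S, V(G) - S)$; since $|S| = \mu(G) = |V(G)|/2$, this matching sends $S$ bijectively onto $V(G) - S$, so $N(S) = V(G) - S$ and therefore $G[N[S]] = G$. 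The uniqueness hypothesis now says $G$ has a unique perfect matching, and Corollary \ref{cor9} concludes that $CritIndep(G)$ is a greedoid.

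For the $(\Rightarrow)$ direction, I would assume $CritIndep(G)$ is a greedoid. Iterating accessibility from any feasible set forces $\emptyset \in CritIndep(G)$, so $d(G) = 0$. The central intermediate goal is to establish $CritIndep(G) = \Psi(G)$; once this equality is in hand, Proposition \ref{lem2} makes $G$ K\"onig-Egerv\'ary with a perfect matching, Theorem \ref{th1111} makes every $G[N[S]]$ K\"onig-Egerv\'ary with a perfect matching, and Theorem \ref{ThGreed}(iii) applied to $\Psi(G) = CritIndep(G)$ forces all maximum matchings of $G$ to be uniquely restricted. Uniqueness of the perfect matching $M$ of $G$ then follows, and a restriction argument via Lemma \ref{Lemma1} (any maximum matching of $G[N[S]]$ enlarges to a maximum matching of $G$, which by uniqueness must be $M$) propagates uniqueness into every $G[N[S]]$.

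The main obstacle is pushing the inclusion $\Psi(G) \subseteq CritIndep(G)$ from just the greedoid hypothesis and triangle-freeness. My plan is to pick any $B \in \Omega(G)$, use accessibility to build a chain $\emptyset = S_{0} \subset S_{1} \subset \cdots \subset S_{\alpha_{c}} = S^{\ast}$ of critical independent sets terminating at a maximum one, and leverage the matchings guaranteed by Lemma \ref{MatchLemma} at each step, together with triangle-freeness, to extend step by step into a perfect matching of $G$; this would force $G$ to be K\"onig-Egerv\'ary. Theorem \ref{th5} then places every element of $\Omega(G)$ inside $CritIndep(G)$, after which the remaining members of $\Psi(G)$ are handled by Theorem \ref{th111} (each sits inside some $B \in \Omega(G) \subseteq CritIndep(G)$) combined with Lemma \ref{lem3} and the equality $d(G) = 0$. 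The most delicate point, and presumably the reason this is still open, is ruling out a gap between $\alpha_{c}$ and $\alpha(G)$ when $G$ is not yet known to be K\"onig-Egerv\'ary \emph{a priori}; this is precisely where the triangle-free hypothesis must enter the argument in an essential, rather than merely auxiliary, way.
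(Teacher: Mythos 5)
First, be aware that the paper offers no proof of this statement: it is posed explicitly as a conjecture, motivated by Lemma \ref{Lemma1}, Theorem \ref{th3} and Corollary \ref{cor9}, so your attempt has to stand entirely on its own. Your $(\Leftarrow)$ direction does stand, and is essentially the half the paper already possesses: Theorem \ref{th1111} gives $CritIndep(G)=Crown(G)=\Psi(G)$, Proposition \ref{lem2} gives a perfect matching of $G$, your observation via Theorem \ref{ThKEMatch} that $N[S]=V(G)$ for $S\in\Omega(G)$ correctly transfers the uniqueness of the perfect matching from the hypothesis on $G[N[S]]$ to $G$ itself, and Corollary \ref{cor9} concludes.

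The $(\Rightarrow)$ direction, however, is only a plan, and its central step fails as stated. You propose to take an accessibility chain $\emptyset=S_{0}\subset\cdots\subset S^{\ast}$ in $CritIndep(G)$ and to ``extend step by step into a perfect matching of $G$,'' thereby forcing $G$ to be K\"{o}nig-Egerv\'{a}ry; but the matchings supplied by Lemma \ref{MatchLemma} live entirely inside $N[S^{\ast}]$, and the greedoid hypothesis gives no control whatsoever over $G-N[S^{\ast}]$. Concretely, for $C_{2n+1}$ with $n\geq2$ (triangle-free) one has $CritIndep(C_{2n+1})=\{\emptyset\}$, which satisfies the paper's greedoid axioms vacuously, yet $C_{2n+1}$ is not a K\"{o}nig-Egerv\'{a}ry graph and $C_{2n+1}[N[S]]=C_{2n+1}$ for $S\in\Omega(C_{2n+1})\subseteq\Psi(C_{2n+1})$; so the implication ``$CritIndep(G)$ is a greedoid $\Rightarrow$ $G$ is K\"{o}nig-Egerv\'{a}ry'' is false under the paper's literal definitions, and this degenerate case even threatens the conjecture itself unless a non-triviality assumption is added. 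Separately, your fallback for $\Psi(G)\subseteq CritIndep(G)$ via Lemma \ref{lem3} is vacuous: for $S\subseteq B$ with $B$ a crown it yields $d(S)\leq d(B)=0$, which already follows from the definition of $d(G)$ as a maximum; what you actually need is the reverse inequality $\left\vert S\right\vert \geq\left\vert N(S)\right\vert $, and that is exactly what can fail for local maximum independent sets (cf.\ $\{w\}\in\Psi(G_{3})-CritIndep(G_{3})$ for the graph $G_{3}$ of Figure \ref{fig3}). The forward implication is precisely the open content of the conjecture, and your proposal does not close it.
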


\begin{theorem}
\label{th11111}If $S_{1},S_{2}\in\Psi(G)$ and $N\left[  S_{1}\right]
\subseteq N\left[  S_{2}\right]  $, then there exist $S_{3}\subseteq
S_{2}-S_{1}$, such that $S_{1}\cup S_{3}\in\Psi(G)$ and $\left\vert S_{1}\cup
S_{3}\right\vert =\left\vert S_{2}\right\vert $.
\end{theorem}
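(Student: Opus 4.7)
The plan is to exhibit the required $S_3$ explicitly by taking
\[
S_3 := S_2 \setminus N[S_1].
\]
Since $S_1 \subseteq N[S_1]$, this automatically gives $S_3 \subseteq S_2 \setminus S_1$. I would then verify in turn: (a) $S_1 \cup S_3$ is independent, (b) $|S_1 \cup S_3| = |S_2|$, and (c) $S_1 \cup S_3 \in \Psi(G)$.

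For (a), $S_1$ is independent, $S_3 \subseteq S_2$ is independent, and no vertex of $S_3$ lies in $N(S_1)$ by construction, so there are no edges between $S_1$ and $S_3$. For (b), note $S_1 \cap S_3 = \emptyset$ and write
\[
|S_1 \cup S_3| = |S_1| + |S_2| - |S_2 \cap N[S_1]|.
\]
Then I would pinch $|S_2 \cap N[S_1]|$ from both sides. The upper bound $|S_2 \cap N[S_1]| \leq |S_1|$ is immediate because $S_2 \cap N[S_1]$ is an independent set in $G[N[S_1]]$ and $S_1 \in \Omega(G[N[S_1]])$ (since $S_1 \in \Psi(G)$). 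For the reverse inequality, use that $S_1 \cup S_3 \subseteq N[S_1] \cup S_2 \subseteq N[S_2]$ is independent and $S_2 \in \Omega(G[N[S_2]])$, whence $|S_1 \cup S_3| \leq |S_2|$; substituting the displayed identity yields $|S_1| \leq |S_2 \cap N[S_1]|$. Hence equality holds and $|S_1 \cup S_3| = |S_2|$.

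For (c), observe that
\[
N[S_1 \cup S_3] = N[S_1] \cup N[S_3] \subseteq N[S_2],
\]
using $N[S_1] \subseteq N[S_2]$ (given) and $S_3 \subseteq S_2$. Therefore $G[N[S_1 \cup S_3]]$ is an induced subgraph of $H := G[N[S_2]]$. Any independent set of $G[N[S_1 \cup S_3]]$ is also independent in $H$, hence of size at most $\alpha(H) = |S_2| = |S_1 \cup S_3|$. Thus $S_1 \cup S_3$ is a maximum independent set in $G[N[S_1 \cup S_3]]$, i.e., $S_1 \cup S_3 \in \Psi(G)$.

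I do not anticipate a major obstacle: the only delicate point is the two-sided bound on $|S_2 \cap N[S_1]|$, and the key conceptual observation is that the hypothesis $N[S_1] \subseteq N[S_2]$ lets us carry out the whole argument inside the single König-type ambient graph $H = G[N[S_2]]$, where $S_2$ is already globally maximum and $S_1$ is locally maximum on a smaller closed neighborhood.
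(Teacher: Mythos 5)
Your proposal is correct and is essentially the paper's own proof: the paper also takes $S_{3}=S_{2}-\left(  S_{2}\cap N\left[  S_{1}\right]  \right)  $, uses $\left\vert S_{2}\cap N\left[  S_{1}\right]  \right\vert \leq\left\vert S_{1}\right\vert $ from $S_{1}\in\Psi(G)$ together with $\left\vert S_{1}\cup S_{3}\right\vert \leq\left\vert S_{2}\right\vert $ from $S_{1}\cup S_{3}\subseteq N\left[  S_{2}\right]  $ to squeeze the cardinalities, and concludes membership in $\Psi(G)$ from $N\left[  S_{1}\cup S_{3}\right]  \subseteq N\left[  S_{2}\right]  $. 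Your write-up merely spells out the final step in more detail than the paper does.
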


\begin{proof}
Since $S_{1}\in\Psi(G)$, it follows that $\left\vert S_{2}\cap N\left[
S_{1}\right]  \right\vert \leq|S_{1}|$. Consequently, $S_{3}=S_{2}-\left(
S_{2}\cap N\left[  S_{1}\right]  \right)  $ is independent and $\left\vert
S_{3}\right\vert \geq\left\vert S_{2}\right\vert -\left\vert S_{1}\right\vert
$. Hence, we get that $S_{1}\cup S_{3}$ is independent. Moreover,
\[
\left\vert S_{2}\right\vert \geq\left\vert S_{1}\cup S_{3}\right\vert
=\left\vert S_{1}\right\vert +\left\vert S_{3}\right\vert \geq\left\vert
S_{1}\right\vert +\left\vert S_{2}\right\vert -\left\vert S_{1}\right\vert
=\left\vert S_{2}\right\vert ,
\]
because $N\left[  S_{1}\right]  \subseteq N\left[  S_{2}\right]  $.

Therefore, $\left\vert S_{2}\right\vert =|S_{1}\cup S_{3}|$. In addition,
$S_{1}\cup S_{3}\in\Psi(G)$ follows from the fact that $N\left[  S_{1}\cup
S_{3}\right]  \subseteq N\left[  S_{2}\right]  $.
\end{proof}

According to Theorem \ref{th111}, every $A\in\Psi(G)$ can be enlarged to some
maximum independent set. By Theorem \ref{th11111}, for each $S\in\Omega(G)$
this enlargement can be implemented using only elements of $S$.

\begin{corollary}
\cite{LevMan2,NemhTro} (Augmentation version of Nemhauser's and Trotter's
Theorem). If $S_{1}\in\Psi(G)$ and $S_{2}\in\Omega(G)$, then there exists
$S_{3}\subseteq S_{2}-S_{1}$ such that $S_{1}\cup S_{3}\in\Omega(G)$.
\end{corollary}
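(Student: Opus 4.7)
The plan is to take the obvious candidate $S_3 := S_2 - N[S_1]$ and verify that all four required properties hold: (a) $S_3 \subseteq S_2 - S_1$, (b) $S_1 \cup S_3$ is independent, (c) $|S_1 \cup S_3| = |S_2|$, and (d) $S_1 \cup S_3 \in \Psi(G)$. Properties (a) and (b) are essentially bookkeeping: since $S_1 \subseteq N[S_1]$ we get $S_3 \cap S_1 = \emptyset$, and since $S_3 \subseteq S_2$ is independent while $S_3$ avoids $N[S_1]$, no edges can appear between $S_1$ and $S_3$.

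For the size equality, the key input is that $S_1 \in \Psi(G)$, meaning $S_1$ is a maximum independent set of $G[N[S_1]]$. The set $S_2 \cap N[S_1]$ is independent in $G[N[S_1]]$, so $|S_2 \cap N[S_1]| \leq |S_1|$. Therefore
\[
|S_1 \cup S_3| = |S_1| + |S_2| - |S_2 \cap N[S_1]| \geq |S_2|.
\]
For the reverse inequality, I use the hypothesis $N[S_1] \subseteq N[S_2]$, which together with $S_3 \subseteq S_2$ gives $N[S_1 \cup S_3] \subseteq N[S_2]$. Hence $S_1 \cup S_3$ is an independent set inside the subgraph $G[N[S_2]]$, in which $S_2$ is (by $S_2 \in \Psi(G)$) a maximum independent set; this forces $|S_1 \cup S_3| \leq |S_2|$.

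Finally, property (d) falls out for free from the same containment: any independent set of $G[N[S_1 \cup S_3]]$ is also independent in $G[N[S_2]]$, so its size is at most $|S_2| = |S_1 \cup S_3|$, showing that $S_1 \cup S_3$ is maximum in $G[N[S_1 \cup S_3]]$, i.e., belongs to $\Psi(G)$. There is no real obstacle here; the only point requiring attention is not to confuse $N[\,\cdot\,]$ with $N(\,\cdot\,)$ in the chain of inclusions, since the argument really does rely on the closed-neighborhood version of the hypothesis to simultaneously guarantee independence of $S_1 \cup S_3$ and the upper size bound.
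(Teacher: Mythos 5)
Your proof is correct and takes essentially the same route as the paper: the paper obtains this corollary from Theorem \ref{th11111}, whose proof uses exactly your choice $S_{3}=S_{2}-N\left[ S_{1}\right]$, the same bound $\left\vert S_{2}\cap N\left[ S_{1}\right] \right\vert \leq\left\vert S_{1}\right\vert$ coming from $S_{1}\in\Psi(G)$, and the same containment $N\left[ S_{1}\cup S_{3}\right] \subseteq N\left[ S_{2}\right]$. The only cosmetic point is that $N\left[ S_{1}\right] \subseteq N\left[ S_{2}\right]$ is not a stated hypothesis of the corollary but holds automatically because $S_{2}\in\Omega(G)$ forces $N\left[ S_{2}\right] =V(G)$ (and the upper bound $\left\vert S_{1}\cup S_{3}\right\vert \leq\left\vert S_{2}\right\vert$ is in any case immediate from $S_{2}\in\Omega(G)$).
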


This motivates the following.

\begin{problem}
Characterize graphs whose families of local maximum independent sets are augmentoids.
\end{problem}

\end{document}